\newcommand{\newblue}[1]{{\color{blue}#1}}
\newcommand{\eps}{\epsilon}
\newcommand{\ceil}[1]{\left\lceil #1 \right\rceil}
\newtheorem{theorem}{Theorem}[section]
\newtheorem{lemma}[theorem]{Lemma}
\newtheorem{corollary}[theorem]{Corollary}
\newtheorem{remark}[theorem]{Remark}
\newtheorem{hypothesis}[theorem]{Hypothesis}
\newtheorem{rules}[theorem]{Rules}
\algnewcommand\algorithmicswitch{\textbf{switch}}
\algnewcommand\algorithmiccase{\textbf{case}}
\algnewcommand\algorithmicwithprob{\textbf{with probability}}
\algnewcommand\algorithmicotherwise{\textbf{otherwise}}
\begin{document}

\date{}
\title{Optimal Error Rates for Interactive Coding I: \\Adaptivity and Other Settings\\{\small \color{blue} (Corrected version)}}

%

\author{
  Mohsen Ghaffari\\
  \small MIT\\
  {\small ghaffari@mit.edu}
  \and
  Bernhard Haeupler\\
  \small Microsoft Research \\
  {\small haeupler@alum.mit.edu}
	\and
	Madhu Sudan\\
  \small Microsoft Research \\
  {\small madhu@mit.edu}
}

\maketitle
\setcounter{page}{0}
\thispagestyle{empty}

\begin{abstract}
We consider the task of interactive communication in the presence of adversarial errors and present tight bounds on the tolerable error-rates in a number of different settings. 

\smallskip

Most significantly, we explore {\em adaptive} interactive communication where the communicating parties decide who should speak next based on the history of the interaction. In particular, this allows this decision to depend on estimates of the amount of errors that have occurred so far. Braverman and Rao [STOC'11] show that non-adaptively one can code for any constant error rate below $1/4$ but not more. They asked whether this bound could be improved using adaptivity. We answer this open question in the affirmative (with a slightly different collection of resources): Our adaptive coding scheme tolerates any error rate below $2/7$ and we show that tolerating a higher error rate \newblue{than $1/3$} is impossible. We also show that in the setting of Franklin et al. [CRYPTO'13], where parties share randomness not known to the adversary, adaptivity increases the tolerable error rate from $1/2$ to $2/3$ {\color{blue} (and show a matching impossibility result for tolerating a higher error rate than $2/3$)}. For list-decodable interactive communications, where each party outputs a constant size list of possible outcomes, the tight tolerable error rate is $1/2$. 

\smallskip

Our negative results hold even if the communication and computation are unbounded, whereas for our positive results communication and computation are {\em polynomially} bounded. Most prior work considered coding schemes with {\em linear} amount of communication, while allowing unbounded computations. We argue that studying tolerable error rates in this relaxed context helps to identify a setting's intrinsic optimal error rate. We set forward a strong working hypothesis which stipulates that for any setting the maximum tolerable error rate is independent of many computational and communication complexity measures. We believe this hypothesis to be a powerful guideline for the design of simple, natural, and efficient coding schemes and for understanding the (im)possibilities of coding for interactive communications.

\end{abstract}

\newpage
{\color{violet}
\begin{center}
{\bf \Large Erratum to "Optimal Error Rates for Interactive Coding I: Adaptivity and Other Settings"~\cite{stoc2014optimal}} \\
~\\
{Mohsen Ghaffari, Bernhard Haeupler, and  Madhu Sudan} 
\end{center} 

This erratum reports an error in one of the claims of the paper "Optimal Error Rates for Interactive Coding: Adaptivity and Other Settings" from STOC 2014~\cite{stoc2014optimal}. 

\smallskip

The aforementioned paper proposed the first formal model for \emph{adaptive} interactive coding schemes. The paper then gave results for how much utilizing adaptivity can improve the maximal tolerable error rate, under which interactive coding schemes can operate reliably, in different settings. In particular, the paper proved matching upper and lower bounds for the maximal tolerable rate of erasures that can be corrected in interactive communications,  showing that adaptivity increases the tolerable error rate from $1/2$ to $2/3$. The paper was also the first to introduce and study list-decodable interactive communications, where each party outputs a constant size list of possible protocol outcomes. The paper proved that the tolerable error rate for list-decodable interactive coding is a tight $1/2$, with or without adaptivity. For regular errors it gave an adaptive coding scheme tolerating an rate of up to $2/7$. Lastly, the paper introduced Adaptive Message Exchange as the arguably simplest interactive protocol and claimed a matching lower bound of $2/7$ on the maximal tolerable error rate, even when restricting adaptive interactive coding to this Adaptive Message Exchange Problem. It is this claimed impossibility result for Adaptive Message Exchange with error rate $2/7$ (stated as Theorem 4.3) and its proof that are incorrect. A weaker impossibility result of $1/3$ (stated as Theorem 4.2 and proven in Appendix D) for the exact same setting is correct as stated. This correct $1/3$ impossibility result was given in the original paper as a "warm-up" to the incorrect $2/7$ claim. 

\smallskip

Next, we give a brief description of where in the claimed "proof" of Theorem 4.3 the error occurs (credit for identifying this error so cleanly goes to Ran Gelles): The proof implicitly assumes that for the $N$ round protocol to be simulated there exists a pair of inputs such that the time $x_A$ at which Alice has listened (alone) for $2N/7$ of the time and the time $x_B$ at which Bob has listened (alone) for $2N/7$ of the time are (essentially) the same. While such a symmetry might seem very natural for protocols for the (symmetric) Message Exchange Problem it cannot be assumed without loss of generality, as is done implicitly in the incorrect "proof". Concretely, the "proof" breaks in the case where $x_A, x_B > 4N/7$. In this case the adversary designed by the "proof" starts corrupting both parties after they have listened for $2N/7$ rounds (alone) respectively. In asymmetric protocols, the party who initially listens (alone) more frequently (despite no corruptions having happened) reaches their $2N/7$ threshold first and the adversary will start corrupting the messages received by this party, say Alice. This might prompt Alice to then report these inconsistencies back to Bob, before the $2N/7$ threshold of Bob has been reached. Once Bob hears these inconsistency reports without the adversary being allowed to corrupt them (yet) the indistinguishably argument of the "proof" breaks.

\smallskip

The above error was reported to us by Ran Gelles in 2015 and acknowledged on the second author's website. We are tremendously thankful for the significant time and effort Ran Gelles spent on clearly identifying the error and discussing the problem with us. We are even more thankful for the extremely generous, patient, courteous, and polite manner in which Ran Gelles helped us with this bug. 

\smallskip

Since STOC 2014 the model for adaptive interactive coding schemes proposed in~\cite{stoc2014optimal} and their tolerable error rates have been studied further, both for the simple Adaptive Message Exchange problem as well as for the fully general setting of achieving (constant-rate) adaptive interactive coding schemes for any interactive protocol. Specifically, the state-of-the-art is currently as follows:
\begin{itemize}
        \item The second part ``{\bf Optimal Error Rates for Interactive Coding II: Efficiency and List decoding}'', by the first and second author~\cite{focs2014optimal} build on the coding schemes from~\cite{stoc2014optimal} and transformed all of them to have constant-communication rate and polynomial computational efficiency. This includes the adaptive simulation for regular errors tolerating an error rate of up to $2/7$ given in~ \cite{stoc2014optimal} in improves . 
        \item For the Adaptive Message Exchange Protocol novel and tightly matching upper and lower bounds are proven in  ``{\bf Optimal Error Resilience of Adaptive Message Exchange}'', by Klim Efremenko, Gillat Kol, and Raghuvansh R. Saxena in~\cite{efremenko2021optimal}. They show that the Message Exchange protocol can be robustly simulated by an adaptive coding scheme with up to $5/16$ fraction of errors and that this is tight, i.e., Adaptive Message Exchange is impossible with more than $5/16$ fraction of errors. Note that $5/16$ lies strictly between the $2/7$ and $1/3$ error rate upper and lower bounds proven in~\cite{stoc2014optimal}.

\item In ``{\bf Interactive error resilience beyond $2/7$}'', by Klim Efremenko, Gillat Kol, and Raghuvansh R. Saxena~\cite{efremenko2020interactive} it is shown that fully general constant-rate interactive coding is possible for any protocol with up to $7/24 > 2/7$ fraction of errors using adaptivity. 

\item Overall, this puts the maximal tolerable error rate for (constant-rate adaptive) interactive coding somewhere in $[7/24,5/16]$. Whether the adaptive coding scheme for Message Exchange, which tolerates an error rate of up to $5/16$, can be generalized to general interactive protocols, or whether a stronger-than-$5/16$ impossibility result for general interactive coding can be proven remain interesting open questions. 

\end{itemize}

In what follows we include a corrected version of the paper. Text that has been \sout{struckthrough} and text in {\color{red} red} should be considered deleted (though it has been left in place to allow easier comparison with the previous paper). Text in {\color{blue} blue} has been added. 

}

\newpage
\section{Introduction}

``Interactive Coding'' or ``Coding for Interactive Communication'' studies the task of protecting an interaction between two parties in the presence of communication errors. This line of work was initated by Schulman~\cite{Schulman} who showed, surprisingly at the time, that protocols with $n$ rounds of communication 
can be protected against a (small) constant fraction of adversarial errors while incurring only a constant overhead in the total communication complexity. 


In a recent powerful result that revived this area, Braverman and Rao~\cite{BR11} explored the maximal rate of errors that could be tolerated in an interactive coding setting. They showed the existence of a protocol that handles a $1/4 - \epsilon$ error rate and gave a matching negative result under the assumption that the coding scheme is {\em non-adaptive} in deciding which player transmits (and which one listens) at any point of time. They left open the questions whether the $1/4$ error rate can be improved by allowing adaptivity (see  \cite[Open Problem 7]{BravermanAllerton} and \cite[Conclusion]{BR11}) or by reducing the decoding requirement to list decoding (see \cite[Open Problem 9]{BravermanAllerton} and \cite[Conclusion]{BR11}), that is, requiring each party only to give a small list of possible outcomes of which one has to be correct. 



In this work we answer both questions in the affirmative (in a somewhat different regime of computational and communication resources): We give a rate adaptive coding scheme that tolerates any error rate below $2/7$. We furthermore show a \sout{matching} impossibility result which strongly rules out any coding scheme achieving an error rate of \sout{$2/7$} {\color{blue} $1/3$}.

Moreover, we also consider the adaptive coding schemes in the setting of \cite{FGOS13} in which both parties share some randomness not known to the adversary. While non-adaptive coding schemes can tolerate any error rate below $1/2$ this bound increases to $2/3$ using adaptivity, which we show is also best possible. 

Lastly, we initiate the study of list decodable interactive communication. We show that allowing both parties to output a constant size list of possible outcomes allows non-adaptive coding schemes that are robust against any error rate below $1/2$, which again is best possible for in both the adaptive and non-adaptive setting.

All our coding schemes are deterministic and work with communication and computation being polynomially bounded in the length of the original protocol. We note that most previous works considered the more restrictive setting of linear amount of communication (often at the cost of exponential time computations). Interestingly, our matching negative results hold even if the communication and computation are unbounded. We show that this sharp threshold behavior extends to many other computational and communication complexity measures and is common to all settings of interactive communication studied in the literature. In fact, an important conceptual contribution of this paper is the formulation of a strong working hypothesis that stipulates that maximum tolerable error rates are invariable with changes in complexity and efficiency restrictions on the coding scheme. Throughout this paper this hypothesis lead us to consider the simplest setting for positive results and then expanding on the insights derived to get the more general positive results. We believe that in this way, the working hypothesis yields a powerful guideline for the design of simple and natural coding schemes as also the search for negative results. This has been already partially substantiated by subsequent results (see \cite{GH13} and \Cref{sec:appendix:furtherresultssupportingthehypothesis}).

\paragraph{Organization}
In what follows, 
we briefly introduce the interactive communication model more formally
in \Cref{sec:settingDefinitions}. We also introduce the model 
for adaptive interaction there. Then, in \Cref{sec:overview}, we explain our results as well as the underlying high-level ideas and techniques.
In \Cref{sec:exchange} we describe the simple Exchange problem
and give an adaptive protocol tolerating $2/7$-fraction of errors
in \Cref{sec:27rateadaptationXOR}. (Combined with \Cref{sec:overview}
this section introduces all the principal ideas of the paper. Rest
of the paper may be considered supplemental material.)
In the remainder of \mbox{\Cref{sec:exchange}}, we prove that an error-rate of \sout{$2/7$ is the best achievable} \newblue{$1/3$ is impossible to exceed} for the Exchange problem and thus also for the general case of interactive communication.
In \Cref{sec:simulators}, we give interactive coding schemes over
large alphabets tolerating $2/7$ error rate for general interactions. In
\Cref{sec:AlgLargeRounds} we then convert these to coding schemes
over constant size alphabets.
Finally, in \Cref{sec:sharedrand}, we give protocols tolerating an $2/3$ error rate
in the presence of shared randomness. The appendix contains some technical proofs, as well as some simple
impossibility results showing tightness of our protocols.

%
%
%

\section{New and Old Settings for Interactive Coding}\label{sec:settingDefinitions} 
In this section, we define the classical interactive coding setup as well as all new settings considered in this work, namely, list decoding, the shared randomness setting, and adaptive protocols.

We start with some standard terminology: 
%
 An $n$-round {\em interactive protocol} $\Pi$ between two players Alice and Bob is given by two functions $\Pi_A$ and $\Pi_B$. For each {\em round} of communication, these functions map (possibly probabilistically) the history of communication and the player's private input to a decision on whether to listen or transmit, and in the latter case also to a symbol of the {\em communication alphabet}. All protocols studied prior to this work are {\em non-adaptive}\footnote{Braverman and Rao~\cite{BR11} referred to protocols with this property as \emph{robust}.} in that the decision of a player to listen or transmit deterministically depends only on the round number, ensuring that exactly one party transmits in each round. In this case, the {\em channel} delivers the chosen symbol of the transmitting party to the listening party, unless the adversary interferes and alters the symbol arbitrarily. In the adversarial channel model with {\em error rate} $\rho$, the number of such errors is at most $\rho n$. The outcome of a protocol is defined to be the transcript of the interaction. 
%
%
%

A protocol $\Pi'$ is said to {\em robustly simulate} a protocol $\Pi$ for an error rate $\rho$ if the following holds: Given any inputs to $\Pi$, both parties can {\em uniquely decode} the transcript of an error free execution of $\Pi$ on these inputs from the transcript of any execution of $\Pi'$ in which at most a $\rho$ fraction of the transmissions were corrupted. This definition extends easily to {\em list decoding} by allowing both parties to produce a small (constant size) list of transcripts that is required to include the correct decoding, i.e., the transcript of $\Pi$. We note that the simulation $\Pi'$ typically uses a larger alphabet and a larger number of rounds. While our upper bounds are all deterministic, we strengthen the scope of our lower bounds by also considering \emph{randomized protocols} in which both parties have access to independent private randomness. We also consider the setting of \cite{FGOS13} in which both parties have access to \emph{shared randomness}. In both cases we assume that the adversary does not know the randomness and we say a randomized protocol robustly simulates a protocol $\Pi$ with \emph{failure probability} $p$ if, for any input and any adversary, the probability that both parties correctly (list) decode is at least $1 - p$.
%
 
%
We now present the notion of an {\em adaptive} protocol. It turns out that defining a formal model for adaptivity leads to several subtle issues. We define the model first and discuss these issues later.

In an {\em adaptive} protocol, the communicating players are allowed to base their decision on whether to transmit or listen (probabilistically) on the communication history. In particular, this allows players to base their decision on estimates of the amount of errors that have happened so far (see \Cref{sec:whyadaptivityhelps} for why this kind of adaptivity is a natural and useful approach). This can lead to rounds in which both parties transmit or listen simultaneously. In the first case no symbols are delivered while in the latter case the symbols received by the two listening parties are chosen by the adversary, without it being counted as an error. 

\paragraph{Discussion on the adaptivity model.}
It was shown in \cite{BR11} that protocols which under no circumstances have both parties transmit or listen simultaneously are necessarily non-adaptive. Any model for adaptivity must therefore allow parties to simultaneously transmit or listen and specify what happens in either case. Doing this and also deciding on how to measure the amount of communication and the number of errors leads to several subtle issues.

While it seems pessimistic to assume that the symbols received by two simultaneously listening parties are determined by the adversary this is a crucial assumption. If, e.g., a listening party could find out without doubt if the other party transmitted or listened by receiving silence in the latter case then uncorrupted communication could be arranged by simply using the listen/transmit state as an incorruptible one-bit communication channel. More subtle points arise when considering how to define the quantity of communication on which the adversaries budget of corruptions is based. The number of transmissions performed by the communicating parties, for example, seems like a good choice. This however would make the adversaries budget a variable (possibly probabilistic) quantity that, even worse, non-trivially depends on when and how this budgets is spent. It would furthermore allow parties to time-code, that is, encode a large number (even an encoding of all answers to all possible queries) in the time between two transmissions. While time-coding strategies do not seem to lead to very efficient algorithms they would prevent strong lower bounds which show that even over an unbounded number of rounds no meaningful communication is possible (see, e.g., \Cref{thm:27lowerboundoverview} which proves exactly this for an error rate of \sout{$2/7$} \newblue{$1/3$}).

Our model avoids all these complications. For non-adaptive protocols that perfectly coordinate a designated sender and receiver in each round our model matches the standard setting. For the necessary case that adaptive parties fail to coordinate our model prevents any signaling or time-sharing capabilities and in fact precludes any information exchange. This matches the intuition that in a conversation no advantage can be derived from both parties speaking or listening at the same time. It also guarantees that the product between the number of rounds and the bit-size of the communication alphabet is a clean and tight information theoretic upper bound on the total amount of information or entropy that can be exchanged (in either direction) between the two parties. This makes the number of rounds the perfect quantity to base the adversaries budget on. All this makes our model, in hindsight, the arguably cleanest and most natural extension of the standard setting to adaptive protocols (see also \Cref{sec:appendix:naturalmodel} for a natural interpretation as a wireless channel model with bounded unpredictable noise). The strongest argument for our model however is the fact that it allows to prove both strong and natural positive \emph{and} negative results, implying that our model does not overly restrict or empower the protocols or the adversary.



\section{Overview} \label{sec:overview}
In this section we state our results and explain the high level ideas and insights behind them. 

\subsection{Adaptivity}

A major contribution of this paper is to show that adaptive protocols can tolerate more than the $1/4$ error rate of the non-adaptive setting:

\begin{theorem}\label{thm:27upperboundoverview}
Suppose $\Pi$ is an $n$-round protocol over a constant bit-size alphabet. For any $\eps >0$, there is a deterministic computationally efficient protocol $\Pi'$ that robustly simulates $\Pi$ for an error rate of $2/7 - \eps$ using $O(n^2)$ rounds and an $O(1)$-bit size alphabet.
\end{theorem}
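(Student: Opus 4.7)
The plan is to prove the theorem in three stages. First, I would attack the simplest nontrivial case: the two-party \emph{Exchange} problem, in which Alice holds a bit $a$ and Bob holds a bit $b$ and each wishes to learn the other's value. The goal here is a short, constant-round adaptive protocol that tolerates any corruption rate strictly below $2/7$. The design principle is that the two parties alternate between transmitting their current belief and listening for confirmation, using the history (and in particular any detected inconsistency) to adaptively decide whether to speak again, listen, or output. The $2/7$ bound would then follow from a tight case analysis: if fewer than a $2/7$-fraction of the rounds are corrupted, any error pattern leaves at least one party with an unambiguous view of the other's input, and the adaptive schedule ensures that this party keeps transmitting until the other also converges.

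Second, I would lift this Exchange subroutine to a robust simulator for a general $n$-round protocol $\Pi$, first over a large (but constant bit-size per block) alphabet. The standard Schulman/Braverman--Rao template applies: maintain at each party a candidate prefix of the simulated transcript, and in each meta-round use a tree code on the binary history of simulated symbols as a consistency gadget, while the parties exchange their current hashes via the Exchange primitive from Step~1. When a disagreement is detected, the parties adaptively switch to a ``rewind'' mode and invoke further Exchanges to restore consistency; when the hashes match, the simulation advances by one step of $\Pi$. Because every Exchange call absorbs errors strictly below $2/7$ in a worst-case sense, a careful amortization that budgets per-meta-round corruptions against the adversary's global budget preserves the $2/7 - \eps$ rate for the outer simulator.

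Third, I would reduce the alphabet back to constant bit-size. This is done by a concatenation-style step: each large-alphabet symbol is itself encoded and transmitted using a short inner binary block, chosen so that the inner layer's redundancy costs at most an $\eps$ fraction of extra corruption slack. This blow-up, combined with the per-round overhead of Step~2, gives the $O(n^2)$-round bound claimed in the theorem, with all computations remaining polynomial.

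I expect the main obstacle to be Step~1, the Exchange protocol. The fact that silence, simultaneous transmission, and simultaneous listening all have to be modeled so that the adversary cannot exploit coordination failures makes the case analysis delicate; in particular, extracting exactly $2/7$ (as opposed to, say, $1/4$ or $1/3$) requires tuning the schedule so that every attack that confuses one party costs the adversary at least two corruptions out of every seven rounds. Once the Exchange primitive is in hand, Steps~2 and~3 become careful but essentially standard tree-code engineering adapted to the adaptive setting.
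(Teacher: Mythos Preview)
Your three-stage outline has the right overall shape (Exchange first, then general simulation, then alphabet reduction), but Step~2 contains a genuine gap that would not go through as stated.

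The problem is the black-box use of the Exchange primitive inside each meta-round. The $2/7-\eps$ guarantee for Exchange is a \emph{global} statement about the fraction of corrupted rounds over the entire Exchange execution; it says nothing useful when the adversary concentrates its budget. If the outer simulation runs $M$ meta-rounds of $k$ inner rounds each, the adversary can spend its $(2/7-\eps)Mk$ corruptions by fully corrupting roughly a $2/7-\eps$ fraction of the Exchanges and leaving the rest untouched. Each fully corrupted Exchange fails arbitrarily, so the outer tree-code protocol now faces a $2/7-\eps$ fraction of adversarially bad meta-rounds. But the Braverman--Rao outer template only tolerates a $1/4$ error rate, so the composition collapses back to $1/4$, not $2/7$. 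Your sentence ``a careful amortization \ldots\ preserves the $2/7-\eps$ rate'' is exactly the missing idea, and there is no evident way to make it work when the adaptivity is confined to individual Exchange calls.

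What the paper does instead is apply the adaptivity \emph{globally}, at the level of the entire simulation rather than per meta-round. Concretely, the large-alphabet simulator does not use tree codes or Exchange subroutines at all: each party transmits its entire current ``edge-set'' (its candidate partial transcript) every round, and votes on leaves. After the first $6/7$ of all rounds, each party checks a global confidence condition (roughly, whether fewer than $N/7$ of its votes went to leaves other than its top candidate); a confident party switches to sending only, an unconfident party to listening only, for the final $N/7$ rounds. The $2/7$ analysis then mirrors the Exchange argument but at the scale of the whole simulation. The alphabet reduction (your Step~3) is then done by encoding each edge-set block with an error-correcting code, list-decoding on the receiving side, and using Forney-style weighted voting to recover unique decoding; this is where the $O(n^2)$ round count comes from. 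So the crucial correction is: move the adaptive send/listen decision from inside a per-round primitive to a single global switch at the $6/7$ mark of the full protocol.
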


The proof is presented in \Cref{sec:AlgLargeRounds}. Furthermore, in \mbox{\Cref{subsec:xor13impossible}}, we show a \sout{matching} impossibility result  which even applies to the arguably simplest interactive protocol, namely, the \emph{Exchange Protocol}. In the Exchange Protocol each party simply gets an input---simply a one bit input in our impossibility results---and sends this input to the other party.

\begin{theorem}\label{thm:27lowerboundoverview}
There is no (deterministic or randomized) protocol that robustly simulates the Exchange Protocol for an error rate of \sout{$2/7$} \newblue{$1/3$} with an $o(1)$ failure probability even when allowing computationally unbounded protocols that use an arbitrarily large number of rounds and an unbounded alphabet.
\end{theorem}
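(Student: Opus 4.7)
The plan is an indistinguishability-based impossibility argument, analogous in spirit to the classical Braverman--Rao $1/4$ bound but exploiting the new features of the adaptive model. First reduce to the deterministic case: by Yao's minimax principle, it suffices to exhibit, against any deterministic protocol $\Pi'$, an adversary that forces incorrect decoding; a distribution over such adversaries then defeats all randomized protocols with constant failure probability. So fix a deterministic $n$-round protocol $\Pi'$ claiming to simulate the Exchange Protocol against $\lfloor 2n/7 \rfloor$ errors.

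For each of the four input pairs $(a, b) \in \{0,1\}^2$, fix a canonical execution of $\Pi'$ by stipulating that the adversary makes no active corruptions and, in every ``both listen'' round, delivers a fixed agreed-upon symbol (say $0$). Let $A_{ab}, B_{ab}$ denote the resulting action sequences of Alice and Bob (transmit/listen decisions together with the transmitted symbols) and $V_A^{ab}, V_B^{ab}$ their resulting views. Now, in the execution on true inputs $(a, b)$, consider the ``swap'' adversary $\mathcal{A}_{ab}^*$ that in each round delivers to Alice the symbol prescribed by $V_A^{a, \bar b}$ and to Bob the symbol prescribed by $V_B^{\bar a, b}$. Because $\Pi'$ is deterministic, an inductive argument on the round number shows these choices are consistent and in fact coerce Alice to behave throughout as $A_{a, \bar b}$ and Bob as $B_{\bar a, b}$. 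The adversary incurs at most one error per round, and only in those rounds where exactly one party transmits and the transmitted symbol disagrees with the target delivery: in ``both listen'' rounds the adversary is free, and in ``both transmit'' rounds no symbol is delivered.

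The technical core of the argument is to bound the cost of $\mathcal{A}_{ab}^*$ by $\lfloor 2n/7 \rfloor$ for at least one of the four input pairs. The natural accounting sums the costs of all four swap adversaries and shows that the aggregate is at most $(8/7)\,n$, via a case analysis over each round's transmit/listen pattern in the four canonical executions: in any single round, the contribution to the total cost is bounded by a small constant (strictly less than $2$) irrespective of how $\Pi'$ schedules that round across the four executions. Averaging then yields an input pair $(a, b)$ with $\mathrm{cost}(\mathcal{A}_{ab}^*) \leq 2n/7$, and for that pair Alice's view equals $V_A^{a, \bar b}$ and Bob's view equals $V_B^{\bar a, b}$, so Alice outputs $\bar b$ and Bob outputs $\bar a$, contradicting robust simulation of the Exchange Protocol on inputs $(a, b)$.

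The main obstacle is this combinatorial round-by-round bound. The subtlety is that an individual swap cost can be as large as $n/2$; the sharp constant $2/7$ emerges only when one tracks cost across all four canonical executions simultaneously, exploiting the tension that ``both listen'' rounds are free for the adversary but ``both transmit'' rounds waste the honest parties' communication, so the protocol cannot avoid mismatches in all four scenarios at once. Once this aggregate bound is in hand the conclusion is immediate, and since the proof never used bounds on alphabet size, round count, or computation, the impossibility holds in the full generality stated.
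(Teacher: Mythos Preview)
Your proposal has a genuine gap at exactly the place you flag as the ``main obstacle.'' The round-by-round bound you assert --- that the aggregate cost of the four swap adversaries is at most $(8/7)\,n$, because in each round the total contribution is ``strictly less than $2$'' --- is false. Here is a one-round counterexample. Take round~$1$ of any protocol in which Alice transmits and Bob listens, and Alice sends a symbol $\alpha$ when her input is $0$ and $\beta\neq\alpha$ when her input is $1$ (her action depends only on her input since there is no history). In your swap execution on $(a,b)$, Alice behaves as $A_{a,\bar b}$ and therefore sends the symbol determined by $a$; Bob behaves as $B_{\bar a,b}$ and must receive what $A_{\bar a,b}$ would have sent, namely the symbol determined by $\bar a$. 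Since $a\neq\bar a$, a corruption is needed for \emph{every} choice of $(a,b)$, so the aggregate cost in this round is $4$, not below $2$. More generally there is no reason a purely local per-round accounting across the four input pairs should produce the constant $8/7$; that number does not arise from any finite case split on transmit/listen patterns.

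The paper's proof does \emph{not} average symmetric swap adversaries. It works with only the three settings $S_{0,0},S_{0,1},S_{1,0}$ and, crucially, uses a \emph{time-division} strategy rather than a round-local one: for the party with input $0$, the adversary forces the first $2N/7$ of that party's alone-listening rounds to look like ``the other has $0$,'' and all remaining alone-listening rounds to look like ``the other has $1$.'' The $2/7$ constant then comes from a case analysis on the total number of listening rounds $x_A,x_B$: if either is at most $4N/7$ the attack on that side fits in budget, and if both exceed $4N/7$ then each party's \emph{alone}-listening count is at most $3N/7$, which lets the adversary fool both simultaneously in $S_{0,0}$ using at most $N/7+N/7$ errors. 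This global trade-off (between total listening and alone-listening, mediated by the overlap of the two listening sets) is where $2/7$ actually lives; it cannot be recovered by a symmetric per-round averaging of the kind you sketch. If you want to rescue your framework, you would need to replace the four swap adversaries by a family of attacks parameterized by a time threshold and then argue about the threshold that minimizes cost --- at which point you are essentially re-deriving the paper's argument.
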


\paragraph{Why Adaptivity is Natural and Helpful}\label{sec:whyadaptivityhelps}

Next, we explain why it should not be surprising that adaptivity leads to a higher robustness. We also give some insights for why the $2/7$ error rate is \sout{the} {\color{blue} a} natural tolerable error rate for adaptive protocols. 

It is helpful to first understand why the $1/4$ error rate was thought of as a natural barrier. The intuitive argument, presented in \cite{BR11}, for why one should not be able to cope with an error rate of $1/4$ is as follows: During any $N$ round interaction one of the parties, w.l.o.g. Alice, is the designated sender for at most half of the rounds. With an error rate of $1/4$ the adversary can corrupt half of the symbols Alice sends out. This makes it impossible for Alice to (reliably) deliver even a single input bit $x$ because the adversary can always make the first half of her transmissions consistent with $x=0$ and the second half with $x=1$ without Bob being able to know which of the two is real and which one is corrupted. 

While this intuition is quite convincing at the first glance, it silently assumes that it is a priori clear which of the two parties transmits less often. This in turn essentially only holds for non-adaptive protocols for which the above argument can also be made into a formal negative result\cite[Claim 10]{BR11}. On the other hand, we show that assuming this a priori knowledge is not just a minor technical assumption but indeed a highly nontrivial restriction which is violated in many natural settings of interaction. For example, imagine a telephone conversation on a connection that is bad/noisy in one direction. One person, say Alice, clearly understands Bob while whatever Alice says contains so much noise that Bob has a hard time understanding it. In a non-adaptive conversation, Bob would continue to talk half of the time (even though he has nothing to say given the lack of understandable responses from Alice) while Alice continues to talk little enough that she can be completely out-noised. This is of course not how it would go in real life. There, Bob would listen more in trying to understand Alice and by doing this give Alice the opportunity to talk more. Of course, as soon as this happens, the adversary cannot completely out-noise Alice anymore and the conversation will be able to progress. In fact, similar \emph{dynamic rate adaptation} mechanisms that adapt the bitrate of a senders to channel conditions and the communication needs of other senders are common in many systems, one prominent example being IEEE 802.11 wireless networks. 

Even if one is convinced that adaptive algorithms should be able to beat the $1/4$ error rate, it is less clear at this point what the maximum tolerable error rate should be. In particular, $2/7$ seems like a quite peculiar bound. Next, without going into details of the proofs, we want to give at least some insight why \sout{$2/7$ is} arguably the right and natural error rate for the adaptive setting \newblue{should be strictly higher than $1/3$}.

We first give an intuitive argument why even adaptive protocols cannot deal with
 an error rate of $1/3$. For this, the adversary runs the same strategy as above
 which concentrates all errors on one party only. In particular, given a $3N$ rounds conversation and a budget of $N$ corruptions, the adversary picks one party
, say Alice, and makes her first $N$ transmissions sound like as if her input is $x = 1$. The next $N$ transmissions are made to sound like Alice has input $x = 0$. During the first $N$ responses, regardless of whether $x=1$ (resulting in Alice talking herself) or $x=0$ (resulting in the adversary imitating the same transmissions), the whole conversation will sound legitimate. This prevents any rate adaptation, in this case on Bob's side, to kick in before $2N$ rounds of back and forth have passed. Only then it becomes apparent to the receiver of the corruptions, in this case Bob, that the adversary is trying to fool him. Knowing that the adversary will only try to fool one party, Bob can then stop talking and listen to Alice for the rest of the conversation. Still, even if Bob listens exclusively from this point on, there are only $N$ rounds left which is just enough for all of them to be corrupted. Having received $N$ transmission from Alice claiming $x=1$ and equally many claiming $x=0$, Bob is again left puzzled. This essentially proves the impossibility of tolerating an error rate of $1/3$. But even this $1/3$ error rate \sout{is} \newblue{seems} not achievable. To explain why even a lower fraction of errors\sout{, namely $2/7$,} \newblue{should intuitively} lead to a negative result, we remark that the radical immediate back-off we just assumed for Bob \sout{is} \newblue{seems} not possible. The reason is that if both parties are so sensitive and radical in their adjustment, the adversary can fool both parties simultaneously by simply corrupting a few transmissions of both parties after round $2N$. This would lead to both parties assuming that the transmissions of the other side are being corrupted. The result would be both parties being silent simultaneously, which wastes valuable communication rounds. Choosing the optimal tradeoff for how swift and strong protocols are able to adaptively react without falling into this trap \newblue{should} naturally lead to an error rate strictly between $1/4$ and $1/3$\sout{, and what rate in this range could be more natural than the mediant $2/7$}. \newblue{(Remark: As proven in \cite{efremenko2021optimal} the correct answer is $5/16 \in ]1/4,1/3[$.)}

\paragraph{Other Settings}

We also give results on other settings that have been suggested in the literature, in particular, list decoding and the shared randomness setting of \cite{FGOS13}. We briefly describe these results next. 

Franklin et al. \cite{FGOS13} showed that if both parties share some random string not known to the adversary, then non-adaptive protocols can boost the tolerable error rate from $1/4$ to $1/2$. We show that also in this setting adaptivity helps to increase the tolerable error rate. In particular, in \Cref{sec:sharedrand}, we prove that an error rate of $2/3 - \eps$ is achievable and best possible\footnote{It is interesting to note that similarly to the $2/7$ bound (see \Cref{sec:whyadaptivityhelps}), $2/3$ is the mediant between $1/2$ and $1$, that is, the mediant between the error rate for non-adaptive protocols and the hypothetical error rate of immediately reacting/adapting protocols.}:

\begin{theorem}\label{thm:23boundsoverview}
In the shared randomness setting of \cite{FGOS13}, there exists a efficient robust coding scheme for an error rate of $2/3 - \eps$ while no such scheme exists for an error rate of $2/3$. That is, the equivalents of \Cref{thm:27upperboundoverview} and \Cref{thm:27lowerboundoverview} hold for an error rate of $2/3$. The number of rounds of the robust coding scheme can furthermore be reduced to $O(n)$ if one allows exponential time computations. 
\end{theorem}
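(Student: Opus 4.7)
The plan is to obtain both parts of \Cref{thm:23boundsoverview} by adapting, respectively, the positive and negative arguments underlying \Cref{thm:27upperboundoverview} and \Cref{thm:27lowerboundoverview} to the shared randomness regime, and then to observe that the round complexity can be compressed using Schulman-style tree codes.

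For the positive direction, I would start from the adaptive coding scheme of \Cref{thm:27upperboundoverview} and replace its deterministic hash/consistency checks by short hashes under a shared secret key, as in \cite{FGOS13}. The crucial effect of this change is that the adversary is no longer able to inject symbols impersonating the other party's transmissions with more than negligible probability; every corruption is effectively an \emph{erasure} rather than an arbitrary substitution. Concretely, each party would run a potential-based adaptive back-and-forth in which it maintains a guess of the joint transcript and, on every verified received symbol, updates that guess; when recent rounds exhibit many failed hashes a party infers that it is being attacked and switches to listening, while the presumptively un-attacked party keeps transmitting. The analysis tracks a potential counting agreed-prefix length plus an asymmetry term for the transmit/listen split. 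The governing inequality balances, over $N = \Theta(n^2)$ rounds, the transmission counts $T_A, T_B$ of the two sides against the error budget $\rho N$: a round contributes to progress unless it is either corrupted or a simultaneous-silence round triggered by a confused party. At $\rho = 2/3 - \eps$ an averaging argument rules out both failure modes simultaneously, which is exactly the mediant between the $1/2$ bound attainable non-adaptively and the hypothetical $1$ attainable with instantaneous reaction, as noted in the footnote of \Cref{thm:23boundsoverview}.

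For the matching impossibility result, the plan is to mirror the Exchange-based lower bound of \Cref{thm:27lowerboundoverview}. Restricting to the Exchange Protocol on one-bit inputs, I would construct two adversarial strategies corresponding to different input pairs, each of error weight at most $\lfloor 2N/3 \rfloor$, that produce statistically indistinguishable views at at least one party. Unlike the $2/7$ proof, the adversary cannot \emph{forge} Alice's or Bob's messages because it does not know the shared key, so the attack must proceed by pure blocking: split the $N$ rounds into three contiguous segments of length roughly $N/3$ and, in each of the two scenarios, entirely erase the transmissions of a different party in two of those segments. Because of adaptivity a party seeing persistent erasures may try to reassign transmission, but with total budget $2N/3$ the adversary has enough slack to cover every actual transmission of the targeted party in each scenario. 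A coupling/symmetrization argument, analogous to the one used for $2/7$, then shows that the joint distribution of the transcript and the decoded output cannot distinguish the two scenarios, forcing failure probability $\Omega(1)$ even against private randomness.

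Finally, to compress the round complexity from $O(n^2)$ to $O(n)$ at the cost of exponential computation, I would replace the polynomial-time hashing/ECC components by Schulman-style tree codes over a constant alphabet: these give deterministic positional guarantees with only constant rate overhead but require exponential-time construction. With shared randomness the decoder becomes much more forgiving and the tree code constraints relax enough that the same linear-round simulation still goes through at error rate $2/3 - \eps$. The hard part of the whole program is the upper bound analysis: devising a clean potential that simultaneously (a) exploits the erasure-only nature of the channel under shared randomness and (b) handles the subtle asymmetry between confused and confident parties, which is at the heart of why $2/3$, rather than $1/2$ or $1$, is the right threshold.
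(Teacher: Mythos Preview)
Your high-level instincts are right in both directions---shared randomness turns corruptions into effective erasures, and the lower bound should be an erasure-only attack on Exchange---but the concrete mechanisms you propose diverge from the paper's, and in the lower bound there is a real gap.

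\textbf{Lower bound.} Your three-segment strategy does not produce indistinguishable views. Erasing a party's \emph{transmissions} hides information from the \emph{other} party, so to confuse Bob about Alice's input you must erase Alice in both scenarios; but if in one scenario you erase Alice in segments $1,2$ and in the other in segments $2,3$, Bob sees real (input-dependent) symbols from Alice in segment~$3$ in the first scenario and in segment~$1$ in the second, and can distinguish. With an erasure-only adversary there is no way to ``fake'' a segment, so the $2/7$-style interleaving does not transfer. The paper's argument is different and simpler: consider the behavior of a party in the hypothetical run where every time it listens it receives $\bot$, and let $x_A,x_B$ be the number of listening rounds of Alice and Bob in that run. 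If $\Pr[x_A\le 2N/3]\ge 1/3$ (or symmetrically for Bob), the adversary gambles on this event and erases everything Alice hears, staying within budget; Alice then has no information about Bob. Otherwise, with probability at least $1/3$ both $x_A,x_B>2N/3$, which forces at least $N/3$ rounds of simultaneous listening; in those rounds the adversary delivers $\bot$ to both sides for free, and only at most $2N/3$ singly-listening rounds remain to erase. The case split on listening counts, and the exploitation of the free simultaneous-listen rounds, is the idea your proposal is missing.

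\textbf{Upper bound.} The paper does not adapt the $2/7$ scheme with hashed consistency checks and a potential function. Instead it isolates a tiny three-round \emph{adaptive exchange block}: Alice sends, Bob sends, and in the third round each party resends if and only if it detected a corruption (via the blueberry code) and listens otherwise. One corruption per block is then, with high probability, harmless; two corruptions are needed to force even an erasure. Stacking $\Theta(n/\eps)$ such blocks and plugging into the Braverman--Rao/Franklin et al.\ analysis (where detected erasures count as half a corruption) yields the $2/3-\eps$ bound directly. The $2/3$ threshold thus falls out of the arithmetic ``two corruptions per erased block out of three rounds'' rather than from a potential-balancing argument. Your approach might be made to work, but as written the potential and the averaging step are not specified enough to see why $2/3$ (rather than $1/2$) emerges.
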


We also give the first results for list decodable coding schemes (see \Cref{sec:settingDefinitions} for their definition). The notion of list decodability has been a somewhat recent but already widely successful addition to the study of error correcting codes. It is known that for error correcting codes such a relaxation leads to being able to efficiently \cite{SudanListDecoding} tolerate any constant error rate below $1$, which is a factor of two higher than the $1/2 - \eps$ error rate achievable with unique decoding. It has been an open question whether list decoding can also lead to higher tolerable error rates in interactive coding schemes (see \cite[Open Problem 9]{BravermanAllerton} and \cite[Conclusion]{BR11}). We show that this is indeed the case. In particular, for the non-adaptive setting the full factor of two improvement can also be realized in the interactive setting:

\begin{theorem}\label{thm:12upperboundlistdecodingoverview}
Suppose $\Pi$ is an $n$-round protocol over a constant bit-size alphabet. For any $\eps >0$ there is a $O(1)$-list decodable, non-adaptive, deterministic, and computationally efficient protocol $\Pi'$ that robustly simulates $\Pi$ for an error rate of $1/2 - \eps$ using $O(n^2)$ rounds and an $O(1)$-bit size alphabet.
\end{theorem}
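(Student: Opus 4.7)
The plan is to adapt the standard non-adaptive Braverman--Rao simulation to the list-decoding regime, exploiting the fact that a tree code of relative distance close to $1$ can be list-decoded from error rate close to $1$ (with constant list size), whereas unique decoding works only up to half the distance. Under a non-adaptive alternating schedule, an adversarial error rate of $1/2-\epsilon$ corrupts at most a $(1-2\epsilon)$-fraction of each party's transmissions as seen by the other party---precisely the regime where list decoding of a tree code of distance $\approx 1$ still succeeds but unique decoding provably fails.

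First, I would fix a tree code $T$ over a constant-size alphabet $\Sigma_\epsilon$ of constant rate and relative distance $\delta \ge 1 - \epsilon^{2}$. Existence follows from the standard probabilistic argument of Schulman, and $\poly(n)$-time constructibility (which is all the theorem requires) follows from explicit/efficient tree-code constructions in the literature. Then I would take $\Pi'$ to be the non-adaptive Braverman--Rao simulation on $N = O(n^2)$ rounds with a fixed alternating schedule: in each of her designated rounds the active party transmits the next symbol of the $T$-encoding of her current best-guess path through the protocol tree of $\Pi$, as in the original scheme.

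Third, I would replace the unique decoder by a list decoder: after the $N$ rounds, each party outputs the set $\mathcal{L}$ of all complete $\Pi$-transcripts $\tau$ such that the $T$-encoding induced by $\tau$, restricted to the rounds when the other party was designated sender, lies within relative Hamming distance $1-2\epsilon$ of the symbols actually received in those rounds. Correctness is immediate: an error budget of $(1/2-\epsilon)N$, even if entirely concentrated on the other party's $N/2$ transmissions, yields an at-most $(1-2\epsilon)$ fraction of corruptions, so the true transcript $\tau^\ast$ always lies in $\mathcal{L}$. The bound $|\mathcal{L}| = O(1/\epsilon)$ follows from a Plotkin/Johnson-type packing argument: any two distinct candidate transcripts induce $T$-encodings at relative Hamming distance $\ge \delta \ge 1 - \epsilon^{2}$, which limits the number of codewords within relative distance $1-2\epsilon$ of any fixed received word to a constant depending only on $\epsilon$. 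Polynomial-time decoding is obtained by depth-first enumeration through the $\Pi$-protocol tree with pruning of partial paths whose accumulated disagreement already exceeds the threshold; since the running list size never exceeds $O(1/\epsilon)$, the total work is $\poly(n)$.

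The main technical obstacle is the circular dependence intrinsic to interactive simulation: the tree-code symbols a party transmits in round $t$ depend on her current best-guess view of $\Pi$, which depends in turn on the (possibly heavily corrupted) symbols she received before round $t$. Under unique decoding this is navigated by showing that both parties' views converge to the true transcript---a convergence that provably cannot occur once the error rate exceeds $1/4$. The resolution in the list-decoding setting is to \emph{decouple} the list-size analysis from the dynamics of the parties' evolving internal views: one argues directly about the static set of complete transcripts $\tau$ whose induced $T$-broadcast is consistent with the final received string under the $(1/2-\epsilon)$-corruption budget, and invokes the tree-code distance property only in the Plotkin step. Making this decoupling rigorous---so that the true $\tau^\ast$ is guaranteed to be in $\mathcal{L}$ even though neither party ever transmitted the clean $T$-encoding of $\tau^\ast$---is the heart of the proof and the place where the choice of best-guess rule for the senders must be aligned with the consistency notion used by the decoder.
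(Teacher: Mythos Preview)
Your approach is genuinely different from the paper's, and the gap you yourself flag in the final paragraph is real and unresolved.

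The paper does not use tree codes at all. Its large-alphabet scheme (Algorithm~\ref{alg:Simulator}) has each party transmit its entire ``important edge-set'' $E_A$ (resp.\ $E_B$) in every round; when a received edge-set combined with the local one reaches a leaf, that leaf gets a vote. A two-line counting argument shows that with error rate $1/2-\epsilon$ the correct leaf receives at least an $\epsilon$ fraction of all votes, so outputting the $1/\epsilon$ most-voted leaves suffices. The $O(n^2)$-round constant-alphabet version (\Cref{sec:AlgLargeRounds}) simply encodes each $O(n)$-bit edge-set into a block of $O(n)$ symbols via a standard ECC of relative distance $1-\Theta(\epsilon)$, list-decodes each block, and reruns the voting analysis with list size $O(1/\epsilon^2)$. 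Crucially, because the full state is sent in every block, there is no circular dependence: a block is either corrupted beyond the ECC list-decoding radius or it is not, and in the latter case the correct edge-set is among the decoded candidates regardless of what happened in earlier blocks.

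Your proposal, by contrast, runs the Braverman--Rao tree-code dynamics and then list-decodes the tree code at the end. The obstacle you name is exactly the obstacle: in those dynamics, the symbol Alice sends in round $t$ is the $t$-th tree-code symbol of her \emph{current guess} at the common path, and that guess depends on the (possibly heavily corrupted) symbols she has received so far. There is therefore no well-defined ``$T$-encoding induced by $\tau^\ast$'' against which to compare the received string---Alice may never have transmitted anything close to the clean tree-code encoding of $\tau^\ast$---so your correctness claim that $\tau^\ast\in\mathcal{L}$ does not follow from the error budget alone, and for the same reason the Plotkin packing argument for $|\mathcal{L}|=O(1/\epsilon)$ has no two well-defined codewords to separate. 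Your last paragraph acknowledges this but offers no resolution; ``aligning the best-guess rule with the consistency notion'' is the entire difficulty, and nothing in the proposal indicates how to do it. The paper's edge-set-broadcast scheme exists precisely to sidestep this circularity, at the price of the $O(n)$-bit alphabet and hence the $O(n^2)$ round blowup when reducing to a constant alphabet.
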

The proof of this theorem is presented in \Cref{sec:AlgLargeRounds}. We also show that the $1/2 - \eps$ error rate is best possible even for adaptive coding schemes. That is, no adaptive or non-adaptive coding scheme can achieve an error rate of $1/2$. We prove these impossibility results formally in \Cref{sec:appendix:furtherlowerbounds}. 

Taken together, our results provide tight negative and matching positive results for any of the eight interactive coding settings given by the three Boolean attributes, \{unique decoding / list decoding\}, \{adaptive / non-adaptive\}, and \{without shared randomness / with shared randomness\} (at least when allowing a linear size alphabet or quadratic number of rounds in the simulation). \Cref{table:bounds} shows the maximum tolerable error rate for each of these settings:\\

\begin{table}[h!]
    \begin{tabular}{l|c|c|c|c|}
    ~            & unique dec. (UD) & UD \& shared rand. & list dec. (LD) & LD \& shared rand. \\ \hline
    Non-adaptive & \ \ \ \ \ \ \ \ \ \ 1/4 \ (\cite{BR11}\ )                                      & \ \ \ \ \ \ \ \ \ \ 1/2 \ (\cite{FGOS13}\ )                                 & 1/2        & 1/2                               \\ \hline
    Adaptive     & \sout{$2/7$} \newblue{[1/3, 2/7]}                                         & 2/3                                      & 1/2                                       & 2/3                               \\ \hline
    \end{tabular}
\caption {Unless marked with a citation all results in this table are new and presented in this paper. Matching positive and negative results for each setting\newblue{, except for the $1/3$ and $2/7$ upper and lower bounds for the adaptive unique decoding setting,} show that the error rates are tight. Even more, the error rates are invariable of assumptions on the communication and computational complexity of the coding scheme and the adversary (see \Cref{hypthesis}, \Cref{cor:weakhyp}, and \Cref{sec:appendix:furtherresultssupportingthehypothesis}).} \label{table:bounds}
\end{table}

\subsection{Invariability Hypothesis: A Path to Natural Interactive Coding Schemes}\label{sec:largerpicture}

In this section, we take a step back and propose a general way to understand the tolerable error rates specific to each setting and to design interactive coding schemes achieving them. We first formulate a strong working hypothesis which postulates that tolerable error rates are invariable regardless of what communication and computational resources are given to the protocol and to the adversary. We then use this hypothesis to determine the tight tolerable error rate for any setting by looking at the simplest setup. Finally, we show how clean insights coming from these simpler setups can lead to designs for intuitive, natural, and easily analyzable interactive coding schemes for the more general setup. 

\paragraph{Invariability Hypothesis}
In this section we formulate our invariability hypothesis.

Surveying the literature for what error rates could be tolerated by different interactive coding schemes, the maximum tolerable error rate appears to vary widely depending on the setting and more importantly, depending on what kind of efficiency one strives for. For example, even for the standard setting---that is, for non-adaptive unique decoding coding schemes using a large alphabet---the following error rates apply: for unbounded (or exponential time) computations, Schulman \cite{Schulman} tolerates a $1/240$ error rate; Braverman and Rao \cite{BR11} improved this to $1/4$; for sub-exponential time computations, Braverman \cite{Braverman12} gave a scheme working for any error rate below $1/40$; for randomized polynomial time algorithms, Brakerski and Kalai \cite{BK12} got an error rate of $1/16$; for randomized linear time computations, Brakerski and Naor \cite{BN13} obtained an unspecified constant error rate smaller than $1/32$; lastly, assuming polynomially bounded protocols and adversaries and using a super-linear number of rounds, Chung et al.~\cite{KnowledgePreservingIC} gave coding schemes tolerating an error rate of $1/6$ (with additional desirable properties). 

We believe that this variety is an artifact of the current state of knowledge rather then being the essential truth. In fact, it appears that any setting comes with exactly one tolerable error rate which furthermore exhibits a strong threshold behavior: For any setting, there seems to be one error rate for which communication is impossible regardless of the resources available, while for error rates only minimally below it simple and efficient coding schemes exist. In short, the tolerable error rate for a setting seems robust and completely independent of any communication resource or computational complexity restrictions made to the protocols or to the adversary. 

Taking this observation as a serious working hypothesis was a driving force in obtaining, understanding, and structuring the results obtained in this work. As we will show, it helped to identify the simplest setup for determining the tolerable error rate of a setting, served as a good pointer to open questions, and helped in the design of new, simple, and natural coding schemes. We believe that these insights and schemes will be helpful in future research to obtain the optimal, and efficient coding schemes postulated to exist. In fact, we already have a number of subsequent works confirming this (e.g., the results of \cite{Braverman12,BK12,KnowledgePreservingIC} mentioned above can all be extended to have the optimal $1/4$ error rate; see also \Cref{sec:appendix:furtherresultssupportingthehypothesis}). All in all, we believe that identifying and formulating this hypothesis is an important conceptual contribution of this work:

\begin{hypothesis}[Invariability Hypothesis]\label{hypthesis}
Given any of the eight settings for interactive communication (discussed above) the maximum tolerable error rates is invariable regardless:
\begin{enumerate}
	\item whether the protocol to be simulated is an arbitrary $n$-round protocol or the much simpler ($n$-bit) exchange protocol, and
	\item whether only $O(1)$-bit size alphabets are allowed or alphabets of arbitrary size, and
	\item whether the simulation has to run in $O(n)$ rounds or is allowed to use an arbitrary number of rounds, and
	\item whether the parties are restricted to polynomial time computations or are computationally unbounded, and
	\item whether the coding schemes have to be deterministic or are allowed to use private randomness (even when only requiring an $o(1)$ failure probability), and
	\item whether the adversary is computationally unbounded or is polynomially bounded in its computations (allowing simulation access to the coding scheme if the coding scheme is not computationally bounded)
\end{enumerate}
\end{hypothesis}

We note that our negative results are already as strong as stipulated by the hypothesis, for all eight settings. The next corollary furthermore summarizes how far these negative results combined with the positive results presented in this work (see \Cref{table:bounds}) already imply and prove two weaker versions of the hypothesis: 

\begin{corollary}\label{cor:weakhyp}
The Invariability Hypothesis holds if one weakens point 3. to ``3'. whether only $O(n)$-bit size alphabets are allowed or alphabets of arbitrary size''. \ 
The Invariability Hypothesis also holds if one weakens point 4. to ``4'. whether the simulation has to run in $O(n^2)$ rounds or is allowed to use an arbitrary number of rounds''.
\end{corollary}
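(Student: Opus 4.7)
The plan is to combine the two halves of the Invariability Hypothesis separately: the impossibility (upper bound on tolerable error) and the constructions (matching lower bound). The observation is that the negative side has already been proved in its strongest form. \Cref{thm:27lowerboundoverview} rules out robust simulation of the Exchange Protocol at error rate $2/7$ even when the scheme is randomized, computationally unbounded, and allowed an arbitrarily large alphabet and unbounded number of rounds, against an unbounded adversary; the analogous negative results for the remaining seven entries of \Cref{table:bounds} are proved in the same spirit (the unique-decoding shared-randomness bound is from \cite{FGOS13}; the list-decoding impossibilities are established in \Cref{sec:appendix:furtherlowerbounds}). Thus for each of the eight settings the upper bound listed in \Cref{table:bounds} simultaneously certifies items 1, 2, 4, 5, and 6 of \Cref{hypthesis}, and also both proposed weakenings of item 3.

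For the second stated weakening (allowing $O(n^2)$ rounds with a constant-bit alphabet), the positive side is immediate from the constructions in the paper: \Cref{thm:27upperboundoverview}, \Cref{thm:23boundsoverview}, and \Cref{thm:12upperboundlistdecodingoverview} give deterministic, polynomial-time simulations of arbitrary $n$-round protocols using $O(n^2)$ rounds and an $O(1)$-bit alphabet, matching the error rates of \Cref{table:bounds}; the two remaining entries (adaptive list decoding, with and without shared randomness) are handled by entirely analogous constructions based on the same techniques. Each such scheme simultaneously witnesses items 1, 2, 4, 5, and 6, and of course the weakened round budget $O(n^2)$, proving this version of the corollary.

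For the first stated weakening (allowing $O(n)$-bit alphabets with $O(n)$ rounds), the plan is to block-code the previously cited $O(n^2)$-round, $O(1)$-bit schemes: partition the $O(n^2)$ sub-rounds into $O(n)$ consecutive blocks of size $\Theta(n)$ and let a single ``super-round'' of the new scheme carry a $\Theta(n)$-bit alphabet symbol that packs the entire block. Since a corruption in the super-round model arbitrarily replaces one packed symbol, it exactly simulates corruption of all $\Theta(n)$ sub-rounds inside that block; hence an adversary with a $\rho$ fraction of super-round errors is no stronger than one with a $\rho$ fraction of sub-round errors, so the tolerable error rate is preserved. For adaptive settings one additionally encodes the intended listen/transmit pattern of the block in the super-round symbol and lets adaptive branching occur only at block boundaries; the $O(\log n)$-factor loss in communication granularity is absorbed into the $\eps$ slack. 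Alternatively, and more directly, \Cref{thm:23boundsoverview} already provides an $O(n)$-round (exponential-time) variant in the shared-randomness setting, and the same argument applies to the other seven settings.

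The main obstacle is the adaptive blocking step: because an adaptive party's action in a sub-round may depend on another party's transmission in the same block, one cannot naively commit to a block of sub-round symbols in advance. The resolution is to design the inner scheme so that, within a block, each party's behaviour depends only on history up to the previous block boundary; this is a mild structural property of the $2/7$ and $2/3$ schemes as built and only incurs a constant overhead, so it is fully compatible with the $\eps$ error-rate slack already present in \Cref{thm:27upperboundoverview,thm:23boundsoverview}.
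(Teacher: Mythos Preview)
Your treatment of the negative results and of the second weakening ($O(n^2)$ rounds, constant alphabet) is correct and is essentially what the paper does: the impossibility results are already stated in their strongest form, and \Cref{thm:27upperboundoverview}, \Cref{thm:12upperboundlistdecodingoverview}, and \Cref{thm:protshard23} (together with the remaining entries of \Cref{table:bounds}) supply the matching efficient constructions.

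For the first weakening ($O(n)$-bit alphabet, $O(n)$ rounds) your route diverges from the paper. You propose to obtain the large-alphabet linear-round schemes by packing the $O(n^2)$-round constant-alphabet schemes of \Cref{sec:AlgLargeRounds} into $O(n)$ super-rounds. The paper goes in the \emph{opposite} direction: the large-alphabet $O(n)$-round schemes are proved first and directly in \Cref{sec:simulators} (\Cref{thm:nonadaptiveSim} and \Cref{thm:uniquedecoding}, with the shared-randomness analogue noted at the end of the proof of \Cref{thm:protshard23}), and the constant-alphabet $O(n^2)$-round schemes of \Cref{sec:AlgLargeRounds} are then \emph{derived} from them by replacing each large symbol with an error-correcting-code block. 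For the corollary one simply cites those direct constructions; no packing argument is needed.

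Your packing reduction is not wrong in spirit---the \Cref{sec:AlgLargeRounds} schemes are already organized into single-sender blocks, so collapsing each block back to one large symbol essentially recovers the \Cref{sec:simulators} algorithms---but it is circuitous, and your handling of the adaptive case (``design the inner scheme so that, within a block, each party's behaviour depends only on history up to the previous block boundary'') is asserted rather than verified. Since the direct large-alphabet constructions are already available, you should simply invoke them.
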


We also refer the reader to \cite{GH13} and \Cref{sec:appendix:furtherresultssupportingthehypothesis} for further (subsequent) results supporting the hypothesis, such as, a proof that the hypothesis holds if point \emph{4.} is replaced by ``\emph{4'. whether the simulation has to run in $O(n (\log^* n)^{O(\log^* n)})$ rounds\footnote{Here $\log^* n$ stands for the iterated logarithm which is defined to be the number of times the natural logarithm function needs to be applied to $n$ to obtain a number less than or equal to $1$. We note that this round blowup is smaller than any constant times iterated logarithm applied to $n$, that is, $2^{O(\log^* n \, \cdot \, \log \log^* n)} = o(\overbrace{\log \log \ldots \log}^{\textit{constant times}} n)$.} or is allowed to use an arbitrary number of rounds}''.

\paragraph{Determining and Understanding Tolerable Error Rates in the Simplest Setting}


Next, we explain how we use the invariability hypothesis in finding the optimal tolerable error rates.




Suppose that one assumes, as a working hypothesis, the invariability of tolerable error rates to hold regardless of the computational setup and even the structure of the protocol to be simulated. Under this premise, the easiest way to approach determining the best error rate is in trying to design robust simulations for the simplest possible two-way protocol, the \emph{Exchange Protocol}. This protocol simply gives each party $n$ bits as an input and has both parties learn the other party's input bits by exchanging them (see also \Cref{sec:exchange}). Studying this setup is considerably simpler. For instance, for non-adaptive protocols, it is easy to see that both parties sending their input in an error correcting code (or for $n=1$ simply repeating their input bit) leads to the optimal robust protocol which tolerates any error rate below $1/4$ but not more. The same coding scheme with applying any ECC list decoder in the end also gives the tight $1/2$ bound for list decoding. For adaptive protocols (both with and without shared randomness), finding the optimal robust $1$-bit exchange protocol was less trivial but clearly still better than trying to design highly efficient coding schemes for general protocols right away. Interestingly, looking at simpler setup actually crystallized out well what can and cannot be done with adaptivity, and why. These insights, on the one hand, lead to the strong lower bounds for the exchange problem but, on the other hand, were also translated in a crucial manner to the same tradeoffs for robustly simulating general $n$-round protocols.

\paragraph{Natural Interactive Coding Schemes}
The invariability working hypothesis was also helpful in finding and formalizing simple and natural designs for obtaining robust coding schemes for general protocols.


Before describing these natural coding schemes we first discuss the element of ``surprise/magic'' in prior works on interactive coding. The existence of an interactive coding scheme that tolerates
a constant error rate is a fairly surprising outcome of the prior works, and remains so even in hindsight. One reason for this is that the simplest way of adding redundancy to a conversation, namely encoding each message via an error correcting code, fails dramatically because the adversary can use its budget non-uniformly and corrupt the first message(s) completely. This derails the interaction completely and makes all further exchanges useless even if no corruptions happens from there on. While prior works, such as \cite{Schulman} or \cite{BR11}, manage to overcome this major challenge, their solution remains a technically intriguing works, both in terms of the ingredients they involve (tree codes, whose existence is again a surprising event) and the recipe for converting the ingredients into a solution to the interactive coding problem. As a result it would appear that the challenge of dealing with errors in interactive communication is an inherently complex task.


In contrast to this, we aim to give an intuitive and natural strategy which lends itself nicely to a simple explanation for the possibility of robust interactive coding schemes and even for why their tolerable error rates are as they are. This strategy simply asserts that if there is no hope to fully trust messages exchanged before one should find ways to put any response into the assumed common context by (efficiently) referring back to what one said before. Putting this idea into a high-level semi-algorithmic description gives the following natural outline for a robust conversation:


\begin{algorithm}
\caption{Natural Strategy for a Robust Conversation (Alice's Side)}
\begin{algorithmic}[1]
\small
\State Assume nothing about the context of the conversation
\Loop
	\State Listen 
	\State $E'_B \gets$ What you heard Bob say last (or so far)
	\State $E_A \gets$  What you said last (or so far)
	\If{$E_A$ and $E'_B$ makes sense together}
		\State Determine the most relevant response $r$
		\State Send the response $r$ \emph{but also} include an (efficient) summary of what you said so far ($E_A$)
	\Else
		\State Repeat what you said last ($E_A$)
	\EndIf
\EndLoop
\State Assume / Output the conversation outcome(s) that seem most likely 
\end{algorithmic}
\label{alg:highlevelSimulator}
\end{algorithm}

At first glance the algorithm may appear vague. In particular notions like ``making sense'', and ``most relevant response'', seem ambiguous and subject to interpretation. It turns out that this is not the case. In fact, formalizing this outline into a concrete coding scheme turns out to be straight forward. This is true especially if one accepts the invariability working hypothesis and allows oneself to not be overly concerned with having to immediately get a highly efficient implementation. In particular, this permits to use the simplest (inefficiently) summary, namely referring back word by word to everything said before. This straight-forward formalization leads to Algorithm \ref{alg:Simulator}. Indeed, a reader that compares the two algorithms side-by-side will find that Algorithm \ref{alg:Simulator} is essentially a line-by-line formalization of Algorithm \ref{alg:highlevelSimulator}.

In addition to being arguably natural, Algorithm \ref{alg:Simulator} is also easy to analyze. Simple counting arguments show that the conversation outcome output by most parties is correct if the adversary interferes at most a $1/4 - \eps$ fraction of the time, proving the tight tolerable error rate for the robust (while somewhat still inefficient) simulation of general $n$-round protocols. Maybe even more surprisingly, the exact same simple counting argument also directly shows our list decoding result, namely, that even with an error rate of $1/2 - \eps$ the correct conversation outcome is among the $1/\eps$ most likely choices for both parties. Lastly, it is easy to enhance both Algorithm \ref{alg:highlevelSimulator} and similarly Algorithm \ref{alg:Simulator} to be adaptive. For this one simply adds the following three, almost obvious, rules of an adaptive conversation:

\begin{rules}[Rules for a Successful Adaptive Conversation]\label{rules:ForAnAdaptiveConversation} \mbox{}\\
Be fair and take turns talking and listening, unless:
\begin{enumerate}
	\item you are sure that your conversation partner already understood you fully and correctly, in which case you should stop talking and instead listen more to also understand him; or reversely
	\item you are sure that you already understood your conversation partner fully and correctly, in which case you should stop listening and instead talk more to also inform him.
\end{enumerate}
\end{rules}

Our algorithm Algorithm \ref{alg:SimulatorAdaptive} exactly adds the formal equivalent of these rules to Algorithm \ref{alg:SimulatorAdaptive}. A relatively simple proof that draws on the insights obtained from analyzing the optimal robust exchange protocol then shows that this simple and natural coding scheme indeed achieves the \sout{optimal} $2/7 - \eps$ tolerable error rate for adaptive unique-decoding. This means that Algorithm \ref{alg:SimulatorAdaptive} is one intuitive and natural algorithm that simultaneously achieves the $1/4$ error rate (if the adaptivity rules are ignored), the $2/7 - \eps$ error rate for adaptive protocols and the $1/2 - \eps$ error rate with optimal list size when list decoding is allowed. Of course, so far, this result comes with the drawback of using a large ($O(n)$-bits) alphabet. Nonetheless, this result together with the invariability hypothesis holds open the promise of such a ``perfect'' algorithm that works even without the drastic communication overhead\footnote{Subsequent works of the authors have already moved towards this constant size alphabet protocol by reducing the alphabet size from $O(n)$ bits to merely $O(\log^{\eps} n)$ bits (see \Cref{sec:appendix:furtherresultssupportingthehypothesis}).}.

\section{Results for the Exchange Problem}\label{sec:exchange}
In this section we study the \emph{Exchange Problem}, which can be viewed as the simplest instance of a two-way (i.e., interactive) communication problem. In the Exchange Problem, each party is given a bit-string of $n$ bits, that is, $i_A, i_B \in \{0,1\}^{n}$, and each party wants to know the bit-string of the other party.

Recall that the $1/4$ impossibility bound on tolerable error-rate for non-adaptive interactive protocols presented by Braverman and Rao~\cite{BR11} is this simple setting. In \Cref{sec:27rateadaptationXOR}, we show that adding rate adaptivity to the exchange algorithms helps one break this $1/4$ impossiblity bound and tolerate an error-rate of $2/7-\eps$, and in fact, this is done with a minimal amount of adaptivity-based decisions regarding whether a party should transmit or listen in each round. 
We show in \mbox{\Cref{subsec:xor13impossible}}  that the error-rate of \sout{$2/7$} \newblue{$1/3$} is not tolerable even for the exchange problem, even if one is given infinite number of rounds, alphabet size, and computational power. Furthermore, the intuition used to achieve the $2/7-\eps$ possibility result also extends to the more general \emph{simulation} problem, discussed in \mbox{\Cref{sec:simulators}}.

\subsection{An Algorithm for the Exchange Problem under Error-Rate $2/7-\eps$}\label{sec:27rateadaptationXOR}
Note that a simple solution based on error correcting codes suffices for solving exchange problem under error-rate $\frac{1}{4}-\eps$: parties use a code with relative distance $1-\eps$. In the first half of the time, Alice sends its encoded message and in the second half of the time, Bob sends its encoded message. At the end, each party decodes simply by picking the codeword closest to the received string. As discussed before, the error rate $\frac{1}{4}-\eps$ of this approach is the best possible if no rate adaptation is used. In the following, we explain that a simple rate adaptation technique boosts the tolerable error-rate to $\frac{2}{7}-\eps$, which we later prove to be optimal.

\begin{theorem}\label{thm:27xor}
In the private randomness model with rate adaptation, there is an algorithm for the $n$-bit Exchange Problem that tolerates adversarial error rate of $2/7 - \eps \approx 0.2857-\eps$ for any $\eps>0$.
\end{theorem}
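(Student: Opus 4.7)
The plan is to construct an explicit adaptive protocol and verify its robustness by a case analysis on how the adversary allocates its error budget. I would first describe the protocol for the $n = 1$ case (Alice has one bit $x_A$, Bob has one bit $x_B$) and then lift to general $n$ by composing with an error-correcting code of relative distance $1 - \eps'$: each party transmits a symbol of its codeword in rounds where it would have transmitted its bit, and decoding is by the standard nearest-codeword rule. Let the total number of rounds be $N = 7k$ with $k = \Theta(1/\eps)$, so the adversary's budget is $B = (2/7 - \eps) N = 2k - 7 \eps k$.

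The protocol has three stages. In Stage I (the first $2k$ rounds) Alice transmits $x_A$ and Bob listens. In Stage II (the next $2k$ rounds) Bob transmits the XOR $x_B \oplus \hat{x}_A$, where $\hat{x}_A$ is his current best guess for $x_A$, and Alice listens; the XOR supplies an internal consistency check, because Alice (who knows $x_A$) can XOR it out of each received Stage II symbol to obtain a vote for $x_B$ whose correctness is contingent on Bob's Stage I guess being right. Stage III (the last $3k$ rounds) is adaptive. At its start each party has a preliminary guess for the other's bit and a confidence score equal to the margin of the winning majority among received symbols so far; the rule is gentle: a party whose confidence exceeds a threshold $\tau_H$ spends Stage III transmitting, a party whose confidence is below a threshold $\tau_L$ spends it listening, and a party in between follows a fixed alternating schedule.

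Correctness is proved by a case split on how the adversary apportions $B$ across the two directions and three stages. In the \emph{concentrated} case, where the adversary spends at least $k + \Omega(\eps k)$ errors in one direction during Stages I--II (say Alice-to-Bob), Bob ends those stages with low confidence and therefore listens throughout Stage III; the residual budget is then $O(\eps k)$, so the $3k$ fresh Alice-to-Bob symbols he now receives are almost all clean and push his cumulative majority firmly toward $x_A$. In the \emph{balanced} case, forcing both parties into the ``middle'' confidence regime simultaneously requires about $k$ errors in each direction during Stages I--II, consuming nearly all of $B$; the remaining $O(\eps k)$ cannot overcome the clean alternating exchange in Stage III, and each party's cumulative majority stays correct. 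Finally, the \emph{confidently-wrong} attack---flipping one party's apparent majority by using more than $k$ errors in a single direction---is ruled out by combining the XOR consistency check (Stage II inconsistencies surface at the other party) with the residual-budget constraint, which prevents the adversary from simultaneously fooling the second party.

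The main obstacle, and the source of the constant $2/7$, is the precise calibration of the thresholds $\tau_H, \tau_L$ and of the Stage III default. A too-aggressive rule (for example, ``listen immediately upon any inconsistency'') lets the adversary force both parties into simultaneous listening using a handful of well-placed corruptions, wasting Stage III and collapsing the tolerance back to $1/4$; a too-conservative rule never exploits adaptation and again caps at $1/4$. The optimum sits at the mediant $2/7 = (1+1)/(4+3)$ between the non-adaptive bound $1/4$ and the ``immediate-reaction'' impossibility bound $1/3$ discussed in \Cref{sec:whyadaptivityhelps}, and matching it is where all the $\eps$-slack of the analysis is consumed. A secondary challenge is that the XOR cross-check couples Alice's and Bob's decoding events, so the case analysis must be organised by the pair of per-direction error counts rather than treating the two parties independently---routine once the thresholds are fixed.
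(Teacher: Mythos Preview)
Your proposal has a genuine gap: the XOR cross-check in Stage~II can propagate a Stage~I error into Alice's decoding in a way your case analysis does not catch. Consider the following attack. The adversary corrupts $k+1$ of Alice's $2k$ Stage~I transmissions and does nothing else. Then Bob's guess is $\hat x_A=\overline{x_A}$, and throughout Stage~II Bob cleanly sends $x_B\oplus\overline{x_A}$. Alice XORs each received symbol with $x_A$ and obtains $2k$ unanimous votes for $\overline{x_B}$; her confidence is maximal, so by your rule she transmits throughout Stage~III and never updates. She outputs $\overline{x_B}$. This attack uses only $(k+1)/(7k)\approx 1/7$ error rate, well below $2/7-\eps$. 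Your claim that ``Stage~II inconsistencies surface at the other party'' does not apply: there are no Stage~II inconsistencies, because Bob's wrong guess is fixed before Stage~II and is never contradicted. The root problem is that in your protocol high confidence does not imply correctness, so the rule ``if confident then transmit'' can freeze a party on a wrong answer.

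The paper's protocol avoids this by (i)~having each party send only its own input, never a derived quantity, and (ii)~choosing the phase lengths so that the confidence threshold \emph{provably} implies correctness. The split is $3N/7:3N/7:N/7$: Alice sends (an encoding of) $i_A$ for $3N/7$ rounds, Bob sends $i_B$ for $3N/7$ rounds, and each party then checks whether its received string is within Hamming distance $N/7$ of some codeword. If so, that codeword must be correct (codewords are $\approx 3N/7$ apart and the budget is $<2N/7$), and the party transmits in the final $N/7$ rounds; otherwise it listens. Since the budget is $<2N/7$, at most one party can fail the threshold, so at most one listens; that party then has $4N/7$ total receptions against a $<2N/7$ budget and decodes by nearest codeword. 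The entire analysis is a short counting argument with a single binary threshold and no XOR.
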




\begin{proof}
The algorithm runs in $N=7n/\eps$ rounds, which means that the budget of adversary is $(2/7-\eps) 7n/\eps=2n/\eps-7$.  Throughout the algorithm, we use an error-correction code $\mathcal{C}: \{0,1\}^{n} \rightarrow \{1, \dots, q\}^{\frac{n}{\eps}} $ that has distance $\frac{n}{\eps}-1$. Also, for simplicity, we use $\mathcal{C}^\kappa$ to denote the code formed by concatenating $\kappa$ copies of $\mathcal{C}$. 

The first $6N/7$ rounds of the algorithm do not use any rate adaptation: Simply, Alice sends $\mathcal{C}^3(i_A)$ in the first $3N/7$ rounds and Bob sends $\mathcal{C}^3(i_B)$ in the second $3N/7$ rounds. At the end of this part, each party ``estimates'' the errors invested on the transmissions of the other party by simply reading the hamming distance of the received string to the closest codeword of code $\mathcal{C}^3$. If this estimate is less than $N/7=n/\eps$, the party---say Alice---can safely assume that the closest codeword is the correct codeword. This is because the adversary's total budget is $2n/\eps-7$ and the distance between two codewords of $\mathcal{C}^3(i_B)$ is at least $3n/\eps-3$. In this case, in the remaining $N/7$ rounds of the algorithm, Alice will be sending $\mathcal{C}(i_A)$ and never listening. On the other hand, if Alice reads an estimated error greater than $N/7=n/\eps$, then in the remaining $N/7$ rounds, she will be always listening. The algorithm for Bob is similar. 

Note that because of the limit on the budget of the adversary, at most only one of the parties will be listening in the last $N/7$ rounds. Suppose that there is exactly one listening party and it is Alice. In this case, throughout the whole algorithm, Alice has listened a total of $4N/7=4n/\eps$ rounds where Bob has been sending $\mathcal{C}^4(i_B)$. Since the adversaries budget is less than $2n/\eps-7$, and because $\mathcal{C}^4$ has distance $\frac{4n}{\eps}-4$, Alice can also decode correctly by just picking the codeword of $\mathcal{C}^4$ with the smallest hamming distance to the received string. 

\end{proof}


\subsection{Impossibility of Tolerating Error-Rate \sout{$2/7$} {\color{blue} $1/3$} in the Exchange Problem}\label{subsec:xor13impossible}

In this section, we turn our attention to impossibility results and particularly prove that the error-rate \sout{$2/7$} {\color{blue} $1/3$} is not tolerable. See the formal statement in \mbox{\Cref{thm:27lowerboundoverview}}.

Note that Braverman and Rao~\cite{BR11} showed that it is not possible to tolerate error-rate of $1/4$ with non-adaptive algorithms. For completeness, we present a (slightly more formal) proof in the style of distributed indistinguishably arguments, in \Cref{sec:Impossibilities}, which also covers random algorithms.

We first explain a simple (but informal) argument which shows that even with adaptivity, error-rate $1/3$ is not tolerable. A formal version of this proof is deferred to \Cref{sec:Impossibilities}. \sout{The informal version explained next serves as a warm up for the more complex argument used for proving the $2/7$ impossibility, presented formally in \mbox{\Cref{thm:xor27impossible}}.}

\begin{figure}[t]
	\centering
		\includegraphics[width=0.5\textwidth]{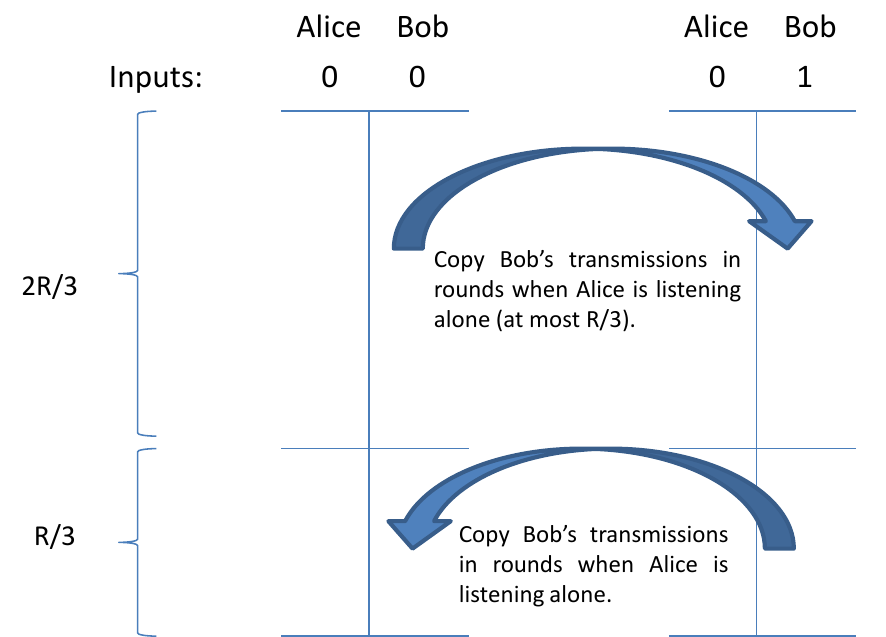}
	\caption{ The adversary's strategy for the $1/3$-impossibility proof}
	\label{fig:OneThird-LB}
\end{figure}

\begin{lemma}\label{lem:xor13impossible}
There is no (deterministic or randomized) adaptive protocol that robustly simulates the Exchange Protocol for an error rate of $1/3$ with an $o(1)$ failure probability even when allowing computationally unbounded protocols that use an arbitrarily large number of rounds and an unbounded alphabet.
\end{lemma}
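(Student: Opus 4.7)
The plan is to prove the lemma by a standard hybrid (indistinguishability) argument adapted to the adaptive model. Fix an alleged adaptive protocol $\Pi$ running in $R$ rounds that purports to robustly simulate the $1$-bit Exchange Protocol at error rate $1/3$ with failure probability $o(1)$; the adversary's budget is $N = \lfloor R/3 \rfloor$. With Bob's input fixed to $x_B = 0$, I will construct two adversaries $\mathcal{A}_0$ and $\mathcal{A}_1$, each spending at most $N$ corruptions, such that Bob's joint view (his sequence of transmit/listen decisions and received symbols) is identically distributed when paired with the Alice holding $x_A = 0$ under $\mathcal{A}_0$ as when paired with the Alice holding $x_A = 1$ under $\mathcal{A}_1$. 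Coupling Bob's private coins then makes his view literally identical and hence his output identical in the two executions, contradicting the requirement that it equal $x_A$ with probability $1 - o(1)$ in both.

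The construction of $\mathcal{A}_0, \mathcal{A}_1$ is carried out round by round, preserving the invariant that Bob's coupled view is identical in the two executions up to round $r - 1$. Bob's action at round $r$ is therefore the same in both. When Bob transmits, no constraint on his view applies this round and both adversaries are free to deliver any symbol to Alice. When Bob listens, a single target symbol $t_r$ must be delivered in both executions, and the model's conventions yield a clean per-round cost: (a) if both Alices listen, the adversary picks $t_r$ freely in both executions (the both-listen rule is cost-free); (b) if only one Alice transmits, the other execution is a both-listen event (free) and the first is free provided $t_r$ equals that transmitted symbol; (c) if both Alices transmit and agree, $t_r$ matches them for free; (d) if both Alices transmit disagreeing symbols, exactly one corruption is forced and the adversary may charge it to either of $\mathcal{A}_0, \mathcal{A}_1$. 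Controlling total cost reduces to controlling the number of case-(d) rounds and their allocation between the two adversaries.

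To guarantee each adversary stays within budget, I would partition the $R$ rounds into three equal thirds $I_1, I_2, I_3$ and adopt the policy that $\mathcal{A}_0$ absorbs every case-(d) conflict in $I_1$, choosing $t_r$ to be the symbol Alice would have transmitted with $x_A = 1$ so that $I_1$ looks like $x_A = 1$ to Bob, while $\mathcal{A}_1$ absorbs every case-(d) conflict in $I_2$, choosing $t_r$ as the symbol Alice would have transmitted with $x_A = 0$ so that $I_2$ looks like $x_A = 0$. Each third contributes at most $|I_j| = N$ charged rounds, bounding each adversary's cost by $N$ provided no case-(d) round falls in $I_3$. After the first $2N$ rounds Bob has seen $N$ evidence symbols for $x_A = 1$ and $N$ for $x_A = 0$, so his (identical) strategy on $I_3$ cannot break the tie; the remaining task is to verify that the adversaries can keep the views matched on $I_3$ while restricting themselves to cases (a)--(c). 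The main obstacle is exactly this $I_3$ accounting: one must show that the adversaries' freedom to reshape Alice's received history in the earlier Bob-transmit and both-listen rounds (at zero cost to Bob's view) can be deployed to synchronize Alice's transmit/listen schedule in $I_3$ across the two executions so that every Bob-listen round there falls in case (a), (b), or (c). This is the step where the $1/3$ threshold is tight: $2N$ rounds of staged contradiction are exactly enough to fool an adaptive Bob but no more, foreshadowing the sharper $2/7$ bound of \Cref{thm:27lowerboundoverview}.
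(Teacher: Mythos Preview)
Your proposal has a genuine gap at exactly the point you flag as ``the remaining task'': the $I_3$ accounting. You assert that the adversaries can reshape the two Alices' received histories so that in $I_3$ every Bob-listen round falls in cases~(a)--(c), but this is not achievable in general. Even if you manage to synchronize the two Alices' transmit/listen \emph{schedules} on $I_3$, the symbols they transmit can still differ because they depend on $x_A$, which is precisely the bit that distinguishes the two executions. Concretely, consider any protocol in which, after the first $2R/3$ rounds, Alice deterministically transmits a function of $x_A$ in each round of $I_3$ regardless of her history; then every Bob-listen round in $I_3$ is a case-(d) round and neither $\mathcal{A}_0$ nor $\mathcal{A}_1$ has remaining budget. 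Your argument as written does not exclude such protocols. A second, related issue is that you fix Bob as the confused party from the outset; for protocols in which Bob listens heavily while Alice listens little, there is no reason your budget split should work at all.

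The paper's argument avoids both problems by a different allocation. It first examines the uncorrupted execution of $S_{0,0}$ on the first $2R/3$ rounds and identifies the party (say Alice, without loss of generality) that listens alone in fewer than $R/3$ of those rounds. It then builds dummy personalities $\widetilde{Bob}_0$ and $\widetilde{Bob}_1$ and argues: in setting $S_{0,1}$, corrupt only Alice's alone-listen rounds during the first $2R/3$ rounds (cost $< R/3$ by the choice of Alice), connecting her to $\widetilde{Bob}_0$; in setting $S_{0,0}$, leave the first $2R/3$ rounds untouched and corrupt Alice's listen rounds during the last $R/3$ rounds (cost $\le R/3$ trivially), connecting her to $\widetilde{Bob}_1$. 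The point is that the paper \emph{does} spend budget in the final third rather than trying to make it free; what makes the total stay within $R/3$ is the pigeonhole step that selects the less-listening party, which caps the cost on the first two thirds at $R/3$ instead of $2R/3$. Your three-way split $I_1,I_2,I_3$ with symmetric charging misses this asymmetry and leaves $I_3$ unfunded.
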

\begin{proof}[Informal Proof]
To simplify the discussion, here we only explain the reasoning about deterministic algorithms and we also ignore the rounds in which both parties listen. Note that by the definition of the model, in those all-listening rounds, the adversary can deliver arbitrary messages to each of the parties at no cost. A complete proof is presented in \Cref{sec:Impossibilities}. 

For the sake of contradiction, suppose that there is an algorithm that solves the exchange problem under adversarial error-rate $1/3$, in $N$ rounds. We work simply with $1$-bit inputs. Let $S_{X,Y}$ denote the setting where Alice receives input $X$ and Bob gets input $Y$. The idea is to make either settings $S_{0,0}$ and $S_{0,1}$ look indistinguishable to Alice or settings $S_{0,0}$ and $S_{1,0}$ look indistinguishable to Bob. 

Consider setting $S_{0,0}$ and suppose that for the first $2N/3$ rounds, the adversary does not interfere. Without loss of generality, we can assume that in this setting, Alice listens (alone) in less than $N/3$ of these $2N/3$ rounds. We next explain adversary's strategy for the case that this assumption holds. An illustration of the adversary's strategy is presented in \Cref{fig:OneThird-LB}.

First, we explain the adversary's strategy for setting $S_{0,1}$: Adversary creates a \emph{dummy personality} $\widetilde{Bob}_0$ and simulates it with Alice in setting $S_{0,0}$ where adversary does not interfere. In the first $2N/3$ rounds of setting $S_{0,1}$, whenever Alice listens (alone), the adversary delivers transmission of $\widetilde{Bob}_0$ to Alice. As a shorthand for this, we say \emph{Alice is connected to $\widetilde{Bob}_0$}. Since Alice listens less than $N/3$ of the time, the adversary will have enough budget to completely fake Bob as $\widetilde{Bob}_0$ (from Alice's viewpoint). Thus, the two settings look identical to Alice for the first $2N/3$ rounds. During the last $N/3$ rounds of the execution in setting $S_{0,1}$, the adversary lets Alice and the real Bob talk without no interference. 

Now, we explain the adversary's strategy for setting $S_{0,0}$: The adversary generates another dummy personality $\widetilde{Bob}_1$ by simulating Bob in setting $S_{0,1}$ where alone-listening rounds of Alice in the first $2N/3$ rounds are connected to $\widetilde{Bob}_0$. In setting $S_{0,0}$, the adversary lets Alice and Bob talk freely during the first $2N/3$ rounds but for the last $N/3$ rounds, whenever Alice listens, the adversary connects her to $\widetilde{Bob}_1$. 

To conclude, we know that in each of the settings $S_{0,1}$ and $S_{1,0}$, at most $N/3$ rounds get corrupted by the adversary. Furthermore, the two settings look identical to Alice which means that she can not know Bob's input. This contradiction completes the proof.
\end{proof}

{\color{red} {\color{blue} The rest of this  section, indicated in {\color{red} red}, is withdrawn.}

\medskip
\begin{theorem}\label{thm:xor27impossible}[A rephrasing of \Cref{thm:27lowerboundoverview}]
There is no (deterministic or randomized) adaptive protocol that robustly simulates the Exchange Protocol for an error rate of $2/7$ with an $o(1)$ failure probability even when allowing computationally unbounded protocols that use an arbitrarily large number of rounds and an unbounded alphabet.
\end{theorem}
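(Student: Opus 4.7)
My plan is to prove Theorem~\ref{thm:xor27impossible} by extending the hybrid indistinguishability argument that underlies the informal proof of Lemma~\ref{lem:xor13impossible}. As in that lemma, the statement reduces to the case of $1$-bit inputs and deterministic protocols: randomized protocols are handled by fixing both parties' private random strings to their worst case against the adversary via an averaging argument, which costs only an $o(1)$ additive term in the failure probability. The goal is then, for any $N$-round protocol $\Pi$ with adversarial budget $\lfloor 2N/7 \rfloor$, to exhibit two input settings $S_{x,y}$ and $S_{x,y'}$ with $y \neq y'$ together with two adversary strategies making Alice's transcript identical in the two executions while each respecting the budget; this directly contradicts the claimed correctness at Alice (and a symmetric construction handles the case where Bob is the one we fool).

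The key new ingredient compared to the two-phase $1/3$ proof is a three-phase partition of the $N$ rounds, with phase lengths approximately $3N/7$, $2N/7$, and $2N/7$, together with a longer hybrid chain that uses more than two input settings. I would begin by examining the undisturbed execution of $\Pi$ on $S_{0,0}$ and applying an averaging argument to the last two phases: since in each phase of length $2N/7$ the listen-alone rounds of Alice and Bob sum to at most $2N/7$, at least one party (WLOG Alice) listens alone in at most $N/7$ rounds in each of these phases. This yields two ``cheap'' attack windows of total cost at most $2N/7$, exactly matching the adversary's budget. The adversary then proceeds phase-by-phase, mirroring the nested dummy construction of Lemma~\ref{lem:xor13impossible}: the first phase is undisturbed; in the second phase the adversary impersonates a dummy $\widetilde{Bob}_0$ to Alice in setting $S_{0,1}$, paying at most $N/7$ corruptions; in the third phase, inside a new hybrid executed in setting $S_{0,0}$, the adversary impersonates a dummy $\widetilde{Bob}_1$ derived from the previous hybrid, again paying at most $N/7$. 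Concatenating these transitions produces two executions, one in $S_{0,0}$ and one in $S_{0,1}$, whose Alice-transcripts coincide while each uses at most $2N/7$ corruptions.

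The main obstacle will be the budget accounting under adaptivity, because once Alice's view has been altered by the adversary, Bob's adaptive decisions in the hybrid execution may diverge from his decisions in the reference undisturbed execution, so the set of rounds in which ``Alice listens alone'' in the hybrid is not literally the same as in the reference. I plan to handle this by stating the cost bound in terms of the reference-execution listen-alone counts and invoking the free-delivery convention in rounds where both parties listen (the same convention used implicitly in Lemma~\ref{lem:xor13impossible}); an inductive argument over phases then shows that the adversary can always deliver the intended dummy symbol within the claimed budget. The emergence of the specific fraction $2/7$ should drop out of this accounting as the unique balance point at which any two consecutive phases of length $2N/7$ contain collectively at most $2N/7$ listen-alone rounds for some fixed party; any smaller error rate (e.g.\ $1/4$) would give the adversary strict slack, while any larger rate would leave the attack unfeasible on the upper-bound side, matching the informal tradeoff described in \Cref{sec:whyadaptivityhelps} of $2/7$ as the mediant of $1/4$ and $1/3$.
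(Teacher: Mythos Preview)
Your plan has a genuine gap in the indistinguishability construction. If the first phase is ``undisturbed'' in both $S_{0,0}$ and $S_{0,1}$, then Alice's view already diverges in round~1, since Bob has different inputs in the two settings. In the $1/3$ argument of Lemma~\ref{lem:xor13impossible}, the adversary is active in $S_{0,1}$ from the very first round (connecting Alice to $\widetilde{Bob}_0$ throughout the first $2N/3$ rounds); that is precisely what makes the views match. If you similarly activate the adversary in $S_{0,1}$ throughout phases~1 and~2, the cost there is the number of Alice's listen-alone rounds in a window of length $5N/7$, which can exceed $2N/7$. Your per-phase averaging (``one party listens alone at most $N/7$ in each of phases~2 and~3'') also does not follow: averaging gives you one party per phase, not the same party in both.

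The paper's argument is structurally different from a time-phase partition. It partitions Alice's (and symmetrically Bob's) \emph{listen-alone rounds by count}: the first $2N/7$ such rounds are made consistent with the other input being~$0$, and all subsequent listen-alone rounds with the other input being~$1$. This fixes the cost in $S_{0,1}$ at exactly $2N/7$ regardless of adaptivity, and the cost in $S_{0,0}$ at $\max\{0,\ (\text{Alice's listen-alone count})-2N/7\}$. The $2/7$ bound then emerges from a three-way case split on the \emph{total} listening counts $x_A,x_B$: if $x_A\le 4N/7$ (or $x_B\le 4N/7$) the single-party attack already fits the budget; if both exceed $4N/7$, then the overlap $x_A+x_B-N$ forces each party's listen-alone count to be at most $3N/7$, so in $S_{0,0}$ the adversary can attack \emph{both} parties simultaneously at cost at most $N/7+N/7=2N/7$. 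This simultaneous-attack case is the missing ingredient in your plan and is what pushes the bound past~$1/3$ down to~$2/7$.
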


\begin{proof}
Suppose that there is an algorithm that solves the exchange problem under adversarial error-rate $1/3$, in $N$ rounds. We study this algorithm simply with $1$-bit inputs. Let $S_{X,Y}$ denote the setting where Alice receives input $X$ and Bob gets input $Y$. We specifically work only with settings $S_{0,0}$, $S_{0,1}$, and $S_{1,0}$. Note that if a party has an input $1$, it knows in which of these three settings we are. The idea is to present an adversarial strategy that changes the receptions of the party, or parties, that have a $0$ input so as to make that party, or parties, to not be able to distinguish (between two of) the settings. 

For simplicity, we first assume that the algorithm is deterministic and we also ignore the rounds where both parties listen. Note that by the definition of the model for adaptive algorithms (see ~\Cref{sec:settingDefinitions}), for these rounds, the adversary can deliver arbitrary messages to the parties at no cost. 

For this lower bound, we need to define the party that becomes the base of indistinguishability (whom we confuse by errors) in a more dynamic way, compared to that in \Cref{lem:xor13impossible} or in \cite[Claim 10]{BR11}. For this purpose, we first study the parties that have input $0$ under a particular pattern of received messages (regardless of the setting in which they are), without considering whether the adversary has enough budget to create this pattern or not. Later, we argue that the adversary indeed has enough budget to create this pattern for at least one party and make that party confused. 

To determine what should be delivered to each party with input $0$, the adversary cultivates \emph{dummy personalities} $\widetilde{Alice}_0$, $\widetilde{Alice}_1$, $\widetilde{Bob}_0$, $\widetilde{Bob}_1$, by simulating Alice or Bob respectively in settings $S_{0,0}$, $S_{1,0}$, $S_{0,0}$, and $S_{0,1}$, where each of these settings is modified by adversarial interferences (to be specified). Later, when we say that in a given round, e.g. ``the adversary \emph{connects} dummy personality $\widetilde{Bob}_1$ to Alice", we mean that the adversary delivers the transmission of $\widetilde{Bob}_1$ in that round to Alice\footnote{This is assuming $\widetilde{Bob}_1$ transmits in that round, we later discuss the case where both Alice and $\widetilde{Bob}_1$ listen later.}. For each setting, the adversary uses one method of interferences, and thus, when we refer to a setting, we always mean the setting with the related adversarial interferences included.  
 
We now explain the said pattern of received messages. Suppose that Alice has input $0$ and consider her in settings $S_{0, 0}$ and $S_{0,1}$, as a priori these two settings are identical to Alice. Using connections to \emph{dummy personalities}, the adversary creates the following pattern: In the first $2N/7$ rounds in which Alice listens alone, her receptions will be made to imply that Bob also has a $0$. This happens with no adversarial interference in setting $S_{0,0}$, but it is enforced to happen in setting $S_{0,1}$ by the adversary via connecting to Alice the dummy personality $\widetilde{Bob}_0$ cultivated in setting $S_{0,0}$. Thus, at the end of those $2N/7$ listening-alone rounds of Alice, the two settings are indistinguishable to Alice. In the rest of the rounds where Alice listens alone, the receptions will be made to look as if Bob has a $1$. That is, the adversary leaves those rounds of setting $S_{0,1}$ intact, but in rounds of setting $S_{0,0}$ in which Alice listens alone, the adversary connects to Alice the dummy personality $\widetilde{Bob}_1$ cultivated in setting $S_{0,1}$ (with the adversarial behavior described above). 

The adversary creates a similar pattern of receptions for Bob when he has an input $0$, in settings $S_{0,0}$ and $S_{1,0}$. That is, the first $2N/7$ of his alone receptions are made to imply that Alice has a $0$ but the later alone-receptions imply that Alice has a $1$.

The described reception pattens make Alice unable to distinguish $S_{0,0}$ from $S_{0,1}$ and also they make Bob unable to distinguish $S_{0,0}$ from $S_{1,0}$. However, the above discussions ignore the adversary's budget. We now argue that the adversary indeed has enough budget to create this reception pattern to confuse one or both of the parties. 

Let $x_A$ be the total number of rounds where Alice listens, when she has input $0$ and her receptions follow the above pattern. Similarly, define $x_B$ for Bob. If $x_A \leq 4N/7$, then the adversary indeed has enough budget to make the receptions of Alice in settings $S_{0,0}$ and $S_{0,1}$ follow the discussed behavior, where the first $2N/7$ alone-receptions of Alice are altered in $S_{0,1}$ and the remaining alone-receptions are altered in $S_{0,0}$. Thus, if $x_A \leq 4N/7$, the adversary has a legitimate strategy to make Alice confused between $S_{0,0}$ and $S_{0,1}$. A similar statement is true about Bob: if $x_B \leq 4N/7$, the adversary has a legitimate strategy to make Bob confused between $S_{0,0}$ and $S_{0,1}$. 

Now suppose that $x_A> 4N/7$ and $x_B >4N/7$. In this case, the number of alone-receptions of Alice is at most $x_A - (x_A+x_B-N) = N - x_B \leq 3N/7$ and similarly, the number of alone-receptions of Bob is at most $x_B - (x_A+x_B-N) = N - x_A \leq 3N/7$. This is because $x_A+x_B-N$ is a lower bound on the overlap of the round that the two parties listen. In this case, the adversary has enough budget to simultaneously confuse both Alice and Bob of setting $S_{0,0}$; Alice will be confused between $S_{0,0}$ and $S_{0,1}$ and Bob will be confused between $S_{0,0}$ and $S_{1,0}$. For this purpose, in setting $S_{0,0}$, the adversary leaves the first $2N/7$ alone-receptions of each party intact but alters the remaining at most $N/7$ alone-receptions of each party, for a total of at most $2N/7$ errors. On the other hand, in setting $S_{0,1}$, only $2N/7$ errors are used on the first $2N/7$ alone-receptions of Alice and similarly, in setting $S_{1,0}$, only $2N/7$ errors are used on the first $2N/7$ alone-receptions of Bob. Note that these errors make the receptions of each party that has input $0$ follow the pattern explained above. 

We now go back to the issue of the rounds where both parties listen. The rounds of $S_{0,0}$ in which both parties listen are treated as follows: The adversary delivers the transmission of $\widetilde{Bob}_1$ (cultivated in setting $S_{0, 1}$) to Alice and delivers the transmission of $\widetilde{Alice}_1$ (cultivated in setting $S_{1, 0}$) to Bob. Recall that the adversary does not pay for these interferences. Furthermore, note that these connections make sure that these all-listening rounds do not help Alice to distinguish $S_{0,0}$ from $S_{0,1}$ and also they do not help Bob to distinguish $S_{0,0}$ from $S_{1,0}$.

Finally, we turn to covering the randomized algorithms. Note that for this case we only show that the failure probability of the algorithm is not $o(1)$ as just by guessing randomly, the two parties can have success probability of $1/4$. 

First suppose that $Pr[x_A \leq 4N/7] \geq 1/3$. Note that the adversary can easily compute this probability, or even simpler just get a $(1+o(1))$-factor estimation of it. If $Pr[x_A \leq 4N/7] \geq 1/3$, then the adversary will hedge his bets on that $x_A \leq 4N/7$, and thus, it will try to confuse Alice. In particular, he gives Alice an input $0$ and tosses a coin and gives Bob a $0$ or a $1$, accordingly. Regarding whether $Bob$ gets input $0$ or $1$, the adversary also uses the dummy personalities $\widetilde{Bob}_0$ and $\widetilde{Bob}_1$, respectively. With probability $1/3$, we will have that in fact $x_A\leq 4N/7$, and in this case the adversary by determining whether Alice hears from the real Bob or the dummy Bob, the adversary makes Alice receive the messages with the pattern described above. This means Alice would not know whether Bob has a $0$ or a $1$. Hence, the algorithm fails with probability at least $1/6$ (Alice can still guess in this case which is correct with probability $1/2$).
Similarly, if $Pr[x_B \leq 4N/7] \geq 1/3$, then adversary will  make Bob confused between $S_{0,0}$ and $S_{1,0}$. On the other hand, if $Pr[x_A \leq 4N/7] < 1/3$ and $Pr[x_B \leq 4N/7] < 1/3$, then just using a union bound we know that $\Pr[x_A> 4N/7 \& x_B >4N/7] \geq 1/3$. In this case, the adversary gambles on the assumption that it will actually happen that $x_A> 4N/7$ and $x_B >4N/7$. This assumption happens with probability at least $1/3$, and in that case, the adversary makes Alice confused between $S_{0,0}$ and $S_{0,1}$ and Bob confused between $S_{0,0}$ and $S_{1,0}$, simultaneously, using the approach described above. Hence, in conclusion, in any of the cases regarding random variables $x_A$ and $x_B$, the adversary can make the algorithm fail with probability at least $1/6$. 

\end{proof}
}

\section{Natural Interactive Coding Schemes With Large Alphabets}\label{sec:simulators}
We start by presenting a canonical format for interactive communication and then present our natural non-adaptive and adaptive coding schemes.

\subsection{Interactive Protocols in Canonical Form}
We consider the following canonical form of an $n$-round two party protocol over alphabet $\Sigma$: We call the two parties Alice and Bob. To define the protocol between them, we take a rooted complete $|\Sigma|$-ary tree of depths $n$. Each non-leaf node has $|\Sigma|$ edges to its children, each labeled with a distinct symbol from $\Sigma$. For each node, one of the edges towards children is \emph{preferred}, and these \emph{preferred} edges determine a unique leaf or equivalently a unique path from the root to a leaf. We say that the set $\mathcal{X}$ of the preferred edges at odd levels of the tree is owned by Alice and the set $\mathcal{Y}$ of the preferred edges at even levels of the tree is owned by Bob. This means that at the beginning of the protocol, Alice gets to know the preferred edges on the odd levels and Bob gets to know the preferred edges on the even levels. The knowledge about these preferred edges is considered as inputs $\mathcal{X}$ and $\mathcal{Y}$ given respectively to Alice and Bob. The output of the protocol is the unique path from the root to a leaf following only preferred edges. We call this path the \emph{common path} and the edges and nodes on this path the \emph{common edges} and the \emph{common nodes}. The goal of the protocol is for Alice and Bob to determine the common path. The protocol succeeds if and only if both Alice and Bob learn the common path. Figure \ref{fig:PointerJumping} illustrates an example: Alice's preferred edges are indicated with blue arrows and Bob's preferred edges are indicated with red arrows, and the common leaf is indicated by a green circle. 

\begin{figure}[t]
	\centering
		\includegraphics[width=0.6\textwidth]{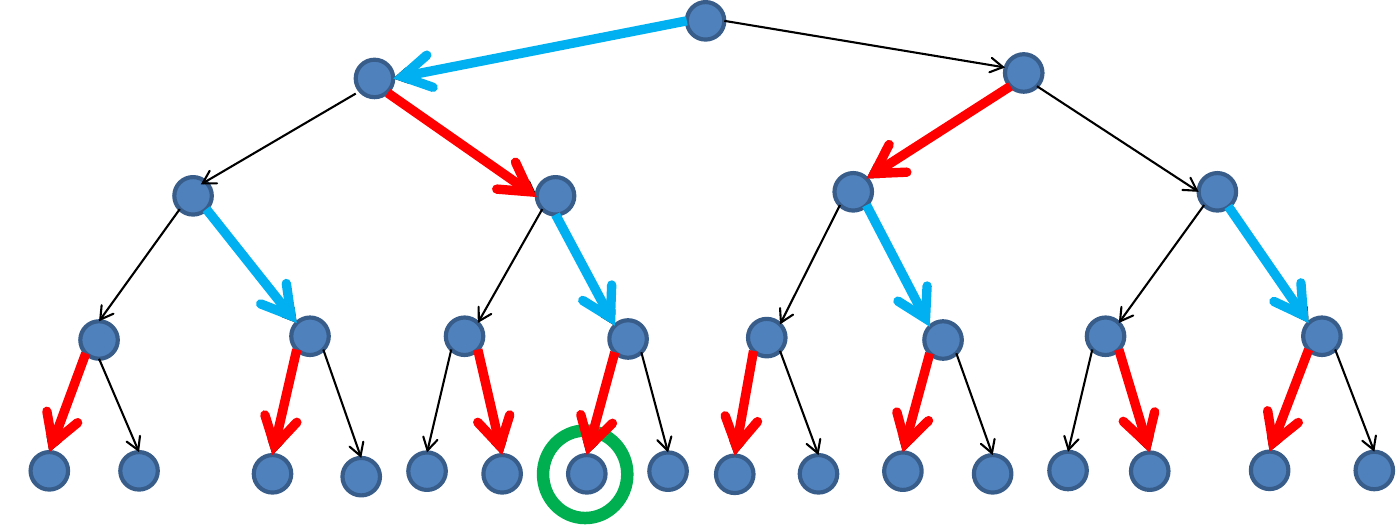}
	\caption{A Binary Interactive Protocol in the Canonical Form}
	\label{fig:PointerJumping}
\end{figure}

It is easy to see that if the channel is noiseless, Alice and Bob can determine the common path of a canonical protocol $P$ by performing $n$ rounds of communication. For this Alice and Bob move down on the tree together simply by following the path of preferred edges; they take turns and exchange one symbol of $\Sigma$ per round, where each symbol indicates the next common node. We call this exchange the execution of the protocol $P$. 


\subsection{Natural Non-Adaptive Coding Schemes}\label{sec:alg}
In this section, we present a non-adaptive coding scheme which can be viewed as a straightforward formalization of the natural high level approach presented in \Cref{sec:overview}. This coding scheme tolerates the optimal error rate of $1/4-\eps$ when unique decoding and simultaneously the optimal error rate of $1/2-\eps$ when list decoding. The coding scheme is furthermore simple, intuitive, and computationally efficient, but it makes use of a large $O(\frac{n}{\eps})$-bit size alphabet. We note that one can also view this algorithm as a simplified version of the Braverman-Rao algorithm~\cite{BR11} with larger alphabet size and without using tree codes~\cite{Schulman}. 

The algorithm, for which a pseudo code is presented in Algorithm \ref{alg:Simulator}, works as follows: In the course of the algorithm, Alice and Bob respectively maintain sets $E_A$ and $E_B$ which are a subset of their own preferred tree edges that are considered to be \emph{important}. We call these \emph{important edge-sets} or sometimes simple \emph{edge-sets}. Initially these edge-sets are empty and in each iteration, Alice and Bob add one edge to their sets. In each iteration, when a party gets a turn to transmit, it sends its edge-set to the other party. The other party receives either the correct edge-set or a corrupted symbol which represents an edge-set made up by the adversary. In either case, the party combines the received edge-set with its own important edge-set and follows the common path in this set. Then, if this common path can be extended by the party's own set of \emph{preferred edges} by a new edge $e$, the party adds this edge $e$ to its edge-set, and sends this new edge-set in the next round. If, on the other hand, the common path already ends at a leaf, then the party registers this as a vote for this leaf and simply re-sends its old edge-set. In the end, both parties simply output the the leaf (respectively the $O(1/\eps)$ leaves) with the most votes for unique decoding (resp., for list decoding).

\begin{algorithm}[t]
\caption{Natural Non-Adaptive Coding Scheme at Alice's Side}
\begin{algorithmic}[1]
\small
\State $\mathcal{X} \gets$ the set of Alice's preferred edges; 
\State $E_A \gets \emptyset$; \Comment{$E_A$ is Alice's set of \emph{important} edges. We preserve that always $E_A \subseteq \mathcal{X}$}
\State $N \gets \frac{2n}{\eps}$; 
\For{$i=1$ to N}
	\State Receive edge-set $E'_B$; \Comment{$E'_B$ is the received version of Bob's important edge-set $E_B$}
	\State $E \gets E'_B \cup E_A$
	\If{$E$ is a valid edgeset} 
		\State $r \gets \emptyset$
		\State follow the common path in $E$ 
		\If{the common path ends at a leaf}
			\State Add one vote to this leaf
		\Else
			\State $r \gets \{e\}$ where $e$ is the next edge in $\mathcal{X}$ continuing the common path in $E$ (if any)
    \EndIf
		\State $E_A \gets E_A \cup r$
		\State Send $E_A$
	\Else
		\State Send $E_A$
	\EndIf	
\EndFor
\State Output the leaf with the most votes for unique decoding
\State Output the $O(1/\eps)$ leaves with the most votes for list decoding
\end{algorithmic}
\label{alg:Simulator}
\end{algorithm}

\paragraph{Analysis}
We now prove that Algorithm \ref{alg:Simulator} indeed achieves the optimal tolerable error rates for non-adaptive unique decoding and list decoding.

\begin{theorem}\label{thm:nonadaptiveSim}
For any $\eps>0$, Algorithm \ref{alg:Simulator} is a deterministic polynomial time non-adaptive simulator with alphabet size of $O(\frac{n}{\eps})$-bits and round complexity  $\frac{2n}{\eps}$ that tolerates an error-rate of $1/4-\eps$ for unique decoding, and also tolerates an error-rate of $1/2 - \eps$ for list decoding with a list of size $\frac{1}{\eps}$.
\end{theorem}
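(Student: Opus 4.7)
The plan is to analyze \Cref{alg:Simulator} by tracking a single integer-valued progress measure $p^t$ defined as the length of the longest prefix of the true common path $T$ contained in $E_A^t \cup E_B^t$ at the end of iteration $t$, and then to account separately for ``correct'' and ``wrong'' votes cast by each party. The cornerstone observation is structural: the union $E_A^t \cup E_B^t$ is always a valid edgeset (Alice's edges live at odd-level nodes, Bob's at even-level nodes, and each node has a unique preferred edge), and the preferred edge at any on-$T$ node is itself the next $T$-edge, so every off-$T$ edge ever added to $E_A$ or $E_B$ by some corruption necessarily has an off-$T$ source. By induction on depth, the common path of $E_A^t \cup E_B^t$ from the root is exactly the first $p^t$ edges of $T$; wrong edges sitting off $T$ are never reached. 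Whenever Alice's reception is uncorrupted her perceived common path is this honest prefix; only a corrupted $E'_B$ can hand the adversary a ``bridge'' edge that diverts Alice into a wrong subtree.

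Given this structure, the per-iteration dynamics are simple. In an uncorrupted iteration, if $p^t < n$ then the party owning the $(p^t{+}1)$-st $T$-edge adds it to her set, so $p^{t+1} = p^t + 1$; if $p^t = n$ both parties cast a vote for the correct leaf. A corrupted Alice-reception either delivers an invalid edgeset (no-op), causes Alice to add a single off-$T$ edge to $E_A$, or causes her to cast a single wrong vote. Letting $W$ denote the number of ``waste'' iterations in $[1, \tau_{p=n}]$ in which $p$ fails to grow, each such iteration must consume a corrupted transmission of the owner's incoming direction, so $W \le e$, where $e = e_A + e_B$ is the total number of corruptions.

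The vote count is the payoff. Writing $e_B^{>\tau}$ for the number of Alice-reception corruptions occurring in iterations after $\tau_{p=n}$, Alice's correct-vote tally satisfies $V^A_c \ge (N - \tau_{p=n}) - e_B^{>\tau} = N - n - W - e_B^{>\tau}$. The key observation is that the $W$ waste-causing corruptions occur in $[1, \tau_{p=n}]$ while the $e_B^{>\tau}$ corruptions lie in $(\tau_{p=n}, N]$, so these two subsets of the corruption budget are disjoint and hence $W + e_B^{>\tau} \le e$, yielding the clean bound $V^A_c \ge N - n - e$. By the structural lemma, every wrong vote by Alice requires a corrupted reception, so votes for any single wrong leaf $\ell'$ satisfy $V^A_w(\ell') \le e_B \le e$ and her total wrong-vote count is at most $e_B$. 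Symmetric statements hold for Bob.

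Substituting $N = 2n/\eps$ gives both parts. For unique decoding with $e \le (\tfrac14 - \eps)\cdot 2N$, the estimates give $V^A_c - V^A_w(\ell') \ge N - n - 2e > 0$, so each party's plurality output is the correct leaf. For list decoding with $e \le (\tfrac12 - \eps)\cdot 2N$ one has $V^A_c \ge 2\eps N - n = \Omega(n/\eps)$ while total wrong votes are at most $O(n/\eps)$, so at most $e / V^A_c < 1/\eps$ wrong leaves can outscore the correct one and the correct leaf is in the top-$(1/\eps)$ list. The main technical subtlety is the disjoint-charging step separating waste corruptions from post-$\tau_{p=n}$ wrong-vote corruptions; without it the naive bound $V^A_c \ge N - n - 2e$ is not strong enough to reach the stated $\tfrac14 - \eps$ and $\tfrac12 - \eps$ thresholds.
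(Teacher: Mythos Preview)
Your argument is correct and is essentially the paper's proof unpacked: the paper's single observation that every uncorrupted round \emph{pair} either grows the common path in $E_A\cup E_B$ by at least one edge or hands both parties a correct vote is exactly what your structural lemma plus progress measure establish, and your ``disjoint charging'' of waste corruptions before $\tau_{p=n}$ versus reception corruptions after it recovers the same bound $V_c^A \ge (\text{\#pairs}) - n - e$ that the paper obtains directly by noting that at most $e$ of the $N/2$ round pairs can contain a corruption and at most $n$ clean pairs are spent on growth. Two small bookkeeping slips do not affect correctness: you take $N=2n/\eps$ to be the number of round pairs, whereas in the paper $N$ counts rounds (so there are $N/2$ pairs and budget $(\tfrac14-\eps)N$, not $(\tfrac14-\eps)\cdot 2N$); and in the list-decoding line, $V_c^A \ge 2\eps N - n$ is $\Theta(n)$ rather than $\Omega(n/\eps)$, though the ratio bound $e/V_c^A < 1/\eps$ you actually need still holds.
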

\begin{proof}
Clearly, both $E_A$ and $E_B$ grow by at most one edge per round. Furthermore, the edges always attach to an already present edge and therefore, each of these edge-sets always forms a subtree with size at most $N$starting at the root of the tree of the canonical form, which has depth $n$. One can easily see that each such subtree can be encoded using $O(N)$ bits, e.g., by encoding each edge of the breadth first search traversal of the subtree using alphabet of size $3$ (indicating ``left", ``right" or ``up"). Hence, parties can encode their edge-sets using $O(\frac{n}{\eps})$-bits symbols, which shows that the alphabet size is indeed as specified.

We now prove the correctness of the algorithm, starting with that of unique decoding. Note that any two consecutive rounds in which Bob and Alice's transmissions are not corrupted by adversary, one of the following two good things happens: Either the path in $E_A \cup E_B$ gets extended by at least one edge, or both Alice and Bob receive a vote for the correct leaf. 

Now suppose that the simulation runs in $N = 2n/\eps$ rounds which can be grouped into $n/\eps$ round pairs. Given the error rate of $1/4 - \eps$ at most a $1/2 - 2\eps$ fraction of these round pairs can be corrupted, which leaves a total of $N/2 (1/2 + 2\eps)$ uncorrupted round pairs. At most $n$ of these round pairs grow the path while the remaining $N/2 (1/2 + 2\eps) - n$ rounds vote for the correct leaf. This results in at least $N(1/2 + 2\eps) - \ceil{n/2} = \frac{n}{2\eps} + 2n - n > N/4$ out of $N/2$ votes being correct. 

For the list decoding, with error rate $1/2-\eps$, we get that at most $1-2\eps$ fraction of round-pairs are corrupted, and thus at least $N\eps= 2n$ uncorrupted pairs exist. Hence, the correct leaf gets a vote of at least $2n-n$. Noting that the total number of the votes that one party gives to its leaves is $N/2=\frac{n}{\eps}$, we get that the correct leaf has at least a $\eps$ fraction of all the votes. Therefore, if we output the $1/\eps$ leaves with the most votes, the list will include the correct leaf.
\end{proof}

\subsection{Natural Adaptive Coding Scheme}
In this section we show that the simplest way to introduce adaptation into the natural coding scheme presented in Algorithm \ref{alg:Simulator}. In particular we use the simple rules specified as \Cref{rules:ForAnAdaptiveConversation} and show that this leads to a coding scheme tolerating an error rate of $2/7 - \eps$, the optimal error rate for this setting. 

\begin{algorithm}[t]
\caption{Natural Adaptive Coding Scheme at Alice's Side}
\begin{algorithmic}[1]
\small
\State $\mathcal{X} \gets$ the set of Alice's preferred edges; 
\State $E_A \gets \emptyset$; 
\State $N \gets \Theta(\frac{n}{\eps})$; 
\For{$i=1$ to $\frac{6}{7}N$}
	\State Receive edge-set $E'_B$; 
	\State $E \gets E'_B \cup E_A$
	\If{$E$ is a valid edgeset} 
		\State $r \gets \emptyset$
		\State follow the common path in $E$ 
		\If{the common path ends at a leaf}
			\State Add one vote to this leaf
		\Else
			\State $r \gets \{e\}$ where $e$ is the next edge in $\mathcal{X}$ continuing the common path in $E$ (if any)
    \EndIf
		\State $E_A \gets E_A \cup r$
		\State Send $E_A$
	\Else
		\State Send $E_A$
	\EndIf	
\EndFor
\State Let $s$ be number of votes of the leaf with the most votes and $t$ be the total number of votes 
\If{$s\geq t-\frac{N}{7}$}
	\For{$i=1$ to $\frac{N}{7}$}
		\State Send $E_A$
	\EndFor
\Else
	\For{$i=1$ to $\frac{N}{7}$}
		\State Receive edge-set $E'_B$; $E = E'_B \cup E_A$
		\If{$E$ is a valid edge-set} 
		\State follow the common path in $E$ 
		\If{the common path ends at a leaf}
			\State Add one vote to this leaf
		\EndIf
	\EndIf
	\EndFor
\EndIf
\State Output the leaf with the most votes 
\end{algorithmic}
\label{alg:SimulatorAdaptive}
\end{algorithm}


Next we explain how to incorporate the rules specified in \Cref{rules:ForAnAdaptiveConversation} easily and efficiently into Algorithm \ref{alg:Simulator}. For this we note that for example if one party has a leaf with more than $(2/7 - \eps)N$ votes, since adversary has only budget of $(2/7 - \eps)N$, this leaf is the correct leaf and thus the party can follow the second rule. Generalizing this idea, we use the rule that, if the party has a leaf $v$ such that only at most $\frac{N}{7}$ votes are on leaves other than $v$, then the party can safely assume that this is the correct leaf. In our proof we show that this assumption is indeed safe and furthermore, at least one party can safely decode at the end of the first $6/7$ fraction of the simulation. Since both parties know this in advance, if a party can not safely decode after $6/7$ fraction of the time, it knows that the other party has safely decoded---which corresponds to the condition in the first rule--- and in this case, the party only listens for the last $1/7$ fraction of the protocol. The pseudo code for this coding scheme is presented in Algorithm \ref{alg:SimulatorAdaptive}.

\begin{theorem}\label{thm:uniquedecoding}
Algorithm \ref{alg:SimulatorAdaptive} is a deterministic adaptive coding scheme with alphabet size of $O(\frac{n}{\eps})$-bits, round complexity of $O(\frac{n}{\eps})$, and polynomial computational complexity that tolerates an error-rate of $2/7-\eps$ for unique decoding.
\end{theorem}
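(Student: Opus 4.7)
Fix $N = \Theta(n/\eps)$ with, say, $N \geq 7n/\eps$. Let $X, Y$ denote the numbers of adversarial corruptions on transmissions from Bob to Alice and from Alice to Bob during phase~1 respectively, and $X', Y'$ the analogues for phase~2; the total budget is $X + Y + X' + Y' \leq (2/7-\eps)N$. Writing $C^{(1)}$ and $W^{(1)}$ for each party's correct-leaf and total wrong-leaf vote counts at the end of phase~1, the round-pair argument of \Cref{thm:nonadaptiveSim} applied to the non-adaptive first $6N/7$ rounds yields $C^{(1)}_A, C^{(1)}_B \geq 3N/7 - X - Y - n$. Moreover, because $E_A \subseteq \mathcal{X}$ and $E_B \subseteq \mathcal{Y}$ contain only owner-preferred edges, in every uncorrupted listen round the common path in $E_A \cup E_B$ is a prefix of the unique correct root-to-leaf path; hence every wrong vote must be charged to a corrupted listen round, so $W^{(1)}_A \leq X$, $W^{(1)}_B \leq Y$, and in particular any single wrong-leaf tally is bounded by $X$ at Alice and by $Y$ at Bob.

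\textbf{Safe decoding is correct.} I prove that if a party is safe after phase~1 (i.e.\ $s \geq t - N/7$), then its max-vote leaf is the correct one. Otherwise Alice's safely decoded leaf would be some wrong $v'$ with $v' \leq X$ and correct-leaf tally satisfying $C^{(1)}_A \leq t_A - s_A \leq N/7$. But $C^{(1)}_A \geq 3N/7 - X - Y - n \geq 3N/7 - (2/7-\eps)N - n = N/7 + \eps N - n > N/7$ once $\eps N > n$, a contradiction; the argument is symmetric at Bob.

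\textbf{At least one party is safe.} Assume for contradiction that both parties are unsafe. By the previous step a wrong max is automatically unsafe. A wrong max at Alice forces $X \geq v'_A \geq C^{(1)}_A$, i.e.\ $2X + Y \geq 3N/7 - n$, and symmetrically $X + 2Y \geq 3N/7 - n$ at Bob; a correct-but-unsafe max forces $W^{(1)} > N/7$, i.e.\ $X > N/7$ at Alice or $Y > N/7$ at Bob. Examining the four cases: (i) both max correct gives $X + Y > 2N/7$; (ii) exactly one wrong max (say Alice's) combines $2X + Y \geq 3N/7 - n$ with $Y > N/7$ to give $X + Y > 2N/7 - n/2$; (iii) both wrong maxes sum to $3(X+Y) \geq 6N/7 - 2n$, i.e.\ $X+Y \geq 2N/7 - 2n/3$. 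Each inequality strictly exceeds the phase-1 budget $X + Y \leq 2N/7 - \eps N$ for $N \geq 7n/\eps$ (where $\eps N \geq 7n$ dominates both $n/2$ and $2n/3$), so some party must be safe.

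\textbf{Phase~2 finish and main obstacle.} If both parties are safe, both output the correct leaf by Step~2. Otherwise exactly one is safe; WLOG Alice, so she transmits $E_A$ throughout phase~2 while Bob listens. Since $C^{(1)}_A, C^{(1)}_B > 0$, each party cast at least one correct vote in phase~1, forcing $E_A$ to contain every Alice-preferred edge along the correct path and $E_B$ every Bob-preferred edge, so $E_A \cup E_B$ reaches the correct leaf. In every uncorrupted phase-2 round Bob therefore adds a correct vote, accumulating at least $N/7 - Y'$ new correct votes while his max wrong-leaf tally grows by at most $Y'$. The final gap is at least $(C^{(1)}_B + N/7 - Y') - (Y + Y') \geq 4N/7 - n - 2(X + Y + Y') \geq 4N/7 - n - 2(2/7 - \eps)N = 2\eps N - n > 0$, so Bob outputs the correct leaf. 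The main obstacle is Step~3: the number $2/7$ emerges as the mediant of $1/4$ and $1/3$ because the budget $2N/7 - \eps N$ just barely denies the adversary both the ``spread votes at one party by $> N/7$'' option and the ``concentrate wrong votes at both parties simultaneously'' option, and making this quantitative requires the preferred-edge invariant of Step~1 that charges every wrong vote to a corrupted listen round.
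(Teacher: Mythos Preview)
Your proof is correct and follows the same three-step skeleton as the paper: (i) a safe party's top leaf is the correct leaf, (ii) at least one party becomes safe after $6N/7$ rounds, and (iii) the remaining unsafe party decodes correctly after listening in phase~2. The accounting differs: you carry explicit directional error counts $X,Y,X',Y'$ and the round-pair lower bound $C^{(1)} \geq 3N/7 - X - Y - n$ from \Cref{thm:nonadaptiveSim}, whereas the paper argues each step by lower-bounding the number of adversarial errors forced (e.g.\ ``errors $\geq (3N/7 - t) + s \geq 2N/7$'' for step~(i) and ``errors $\geq 4N/7 - r_B - r'_B$'' for step~(iii)). Both routes work; yours is arguably the more transparent one, since the paper's error-counting silently absorbs the $n$ path-growth rounds into its bookkeeping.

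Two minor remarks. First, your Step~3 case analysis on whether each party's max leaf is correct or wrong is unnecessary: since $s \geq r$ always holds (the max tally dominates the correct-leaf tally), unsafety gives $W^{(1)} = t - r \geq t - s > N/7$ in one line, hence $X > N/7$ and $Y > N/7$ and $X+Y > 2N/7$ --- this is exactly how the paper does it, and it collapses your four cases. Second, your phase-2 claim that $C^{(1)}_A > 0$ forces $E_A$ to contain all Alice-level correct-path edges tacitly assumes the validity check restricts a received $E'_B$ to Bob-level edges; a more self-contained route is to note that your own Step-1 count already gives strictly more than $n$ uncorrupted round pairs, so the true common path in $E_A \cup E_B$ must reach the correct leaf by the end of phase~1.
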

\begin{proof}
First, we show that if at the end of $\frac{6N}{7}$ rounds, one party has $t$ votes, $s\geq t-\frac{N}{7}$ of which are dedicated to one leaf $v$, then this party can safely assume that this leaf $v$ is the correct leaf. Proof of this part is by a simply contradiction. Note that if the party has $s$ votes, then there are at least $\frac{3N}{7}-t$ that either stopped the growth of the path or turned an edge-set into a nonvalid edge-set. Furthermore, if $v$ is not the correct leaf, then the votes $v$ are created by errors of adversary which means that adversary has invested $s$ errors on turning the edge-sets sent by the other party into other valid-looking edge-sets. Hence, in total, adversary has spent at least $\frac{3N}{7}-t+s\geq \frac{3N}{7}-t+t-\frac{N}{7} \geq \frac{2N}{7}$ errors which is a contradiction.

Now that we know that the rule for safely decoding at the end of $\frac{6N}{7}$ rounds is indeed safe, we show that at least one party will safely decode at that point of time. Suppose that no party can decode safely. Also assume that Alice has $t_{A}$ votes, $r_{A}$ of which are votes on the good leaf. That means at least adversary has turned at least $t_{A}-r_{A}$ edge-sets sent by Bob into other valid-looking edge-sets. Similarly, $t_{B}-r_{B}$ errors are introduced by the adversary on edge-sets sent by Alice. If neither Alice nor Bob can decode safely, we know that $t_{A}-r_{A}\geq \frac{N}{7}$ and $t_{B}-r_{B}\geq \frac{N}{7}$, which means that in total, adversary has introduced at least $\frac{2N}{7}$ errors. Since this is not possible give adversary's budget, we conclude that at the end of $\frac{6N}{7}$ rounds, at least one party decodes safely.

Now suppose that only one party, say Alice, decodes safely at the end of $\frac{6N}{7}$ rounds. Then, in the last $\frac{N}{7}$ rounds, Bob is listening and Alice is sending. In this case, we claim that Bob's leaf that gets the majority of the votes at the end is the correct leaf. The reason is, suppose that Bob has $t_{B}$ votes from the first $\frac{6N}{7}$ rounds and $t'_{B}$ votes from the last $\frac{N}{7}$ rounds. Furthermore, suppose that the correct leaf had $r_{B}$ votes from the first $\frac{6N}{7}$ rounds and $r'_{B}$ votes from the last $\frac{N}{7}$ rounds. Then, the adversary has introduced at least $(\frac{3N}{7}-t_{B})+(\frac{N}{7}-t'_{B})+ (t_{B}-r_{B})+(t'_{B}-r'_{B})= \frac{4N}{7}-r_{B}+r'_{B}$ errors. Since adversaries budget is at most $(\frac{2}{7}-\eps)N$, we get that $r_{B}+r'_{B} > \frac{2N}{7}$. Hence, since clearly Bob has at most $\frac{4N}{7}$ votes in total, the correct leaf has the majority.
\end{proof}

\section{Coding Schemes with Small Alphabet and $O(n^2)$ Rounds}\label{sec:AlgLargeRounds}
In this section, we show how to implement the natural coding schemes presented as Algorithms \ref{alg:Simulator} and \ref{alg:SimulatorAdaptive} over a channel supporting only constant size symbols at the cost of increasing the number of rounds to $O(n^2)$. 

To emulate Algorithms \ref{alg:Simulator} and \ref{alg:SimulatorAdaptive} over a finite alphabet, we use Error Correcting Codes (ECC) and list decoding. In particular, on the transmission side, we encode each edge-set, which will have size at most $O(\frac{n}{\eps^2})$, using an ECC with relative distance $1-\eps/10$, alphabet size $O(\frac{1}{\eps})$, and code-length $O(\frac{n}{\eps^3})$ symbols and send this encoding in $O(\frac{n}{\eps^3})$ rounds. We call each such $O(\frac{n}{\eps^3})$ rounds related to transmission of one edge-set a \emph{block}. On the receiver side, we use a list decoder to produce a list of $O(\frac{1}{\eps})$ edge-sets such that, if the error-rate in the \emph{block} is at most $1-\eps/3$, then one of the edge-sets in the list in indeed equal to the transmitted edge-set. If the error-rate is greater than $1-\eps/3$, the list provided by the list decoder does not need to provide any guarantee. 

We use $O(\frac{n}{\eps^3})$ rounds for each block as because of list decoding, now each edge-set contains up to $O(\frac{n}{\eps^2})$ edges (compare to $O(\frac{n}{\eps})$ edges in Algorithms \ref{alg:Simulator} and \ref{alg:SimulatorAdaptive}). Adding edges corresponding to each of these edge-sets to the set of \emph{important} edges leads to the list decoding result:

\begin{lemma}
If the error rate is at most $1/2 - \eps$, then in both parties, the set of $O(1/\eps^2)$ leaves with the highest votes includes the correct leaf. 
\end{lemma}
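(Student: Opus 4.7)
The plan is to run the same vote-counting argument used for list-decoding in Theorem~\ref{thm:nonadaptiveSim}, but with ``uncorrupted round'' replaced by ``good block''. Call a block of $B = O(n/\eps^3)$ channel rounds \emph{good} if it contains fewer than $(1-\eps/3)B$ channel errors; by construction, the list decoder on a good block returns an $O(1/\eps)$-size list that must contain the truly transmitted edge-set. Let $M$ be the number of logical rounds (blocks). Since the adversary's total budget is at most $(1/2-\eps) M B$ and every bad block costs at least $(1-\eps/3) B$ errors, the number of bad blocks is at most $\frac{1/2-\eps}{1-\eps/3}\, M \le (1/2-\eps/2)\,M$. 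Pairing consecutive blocks into Alice-to-Bob / Bob-to-Alice pairs, the number of bad pairs is bounded by the number of bad blocks, so at least $\eps M/2$ block pairs consist of two good blocks.

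In every such good block pair both parties receive lists that contain the partner's true edge-set, so as a listener iterates over its $O(1/\eps)$ list members it in particular processes the truly transmitted $E_A$ or $E_B$. With that candidate, exactly the same dichotomy as in Algorithm~\ref{alg:Simulator} applies: either the common path in $E_A \cup E_B$ is extended by one edge of the correct common path (which can happen at most $n$ times across the whole execution, since the correct path has depth $n$), or that common path already reaches the correct leaf and the correct leaf gains a vote at both parties. Hence the correct leaf accumulates at least $\eps M/2 - n$ votes at each party.

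For the ratio, I would bound the total number of votes a party casts: each of its $M/2$ listening rounds contributes at most $O(1/\eps)$ votes (one per list member), so the total vote count per party is $O(M/\eps)$. Choosing $M = \Theta(n/\eps)$ large enough that $\eps M/2 - n = \Omega(n)$ forces the correct leaf's share to be at least $\Omega(n) / O(M/\eps) = \Omega(\eps^2)$ of every party's votes. Consequently the correct leaf lies among the $O(1/\eps^2)$ leaves with the most votes at both parties, which is exactly the claim.

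The main obstacle I anticipate is verifying that ``progress per good pair'' still works cleanly now that $E_A$ and $E_B$ accumulate spurious edges from adversarial list members of earlier rounds. The key observation is that the common-path rule walks down greedily from the root and, at each level, follows the unique preferred edge belonging to the appropriate party; spurious sibling edges are simply ignored so long as the truly preferred edge is present in $E_A\cup E_B$. Since the truly preferred edges along the correct path are precisely the ones each party inserts into its own set whenever it processes the correct list member, the invariant ``the correct common path up to the current front is contained in $E_A\cup E_B$'' is preserved, and the Algorithm~\ref{alg:Simulator} progress argument transfers unchanged to the blockified setting.
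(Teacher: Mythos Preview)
Your proposal is correct and follows essentially the same approach as the paper: bound the number of blocks corrupted beyond rate $1-\eps/3$ by averaging, reduce to the Theorem~\ref{thm:nonadaptiveSim} progress/vote dichotomy on good block pairs, and compare the $\Omega(\eps M)$ votes for the correct leaf against the $O(M/\eps)$ total votes to get the $O(1/\eps^2)$ list bound. Your added paragraph explaining why spurious edges from adversarial list members do not interfere with the common-path walk is a point the paper glosses over, and it is a useful clarification.
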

\begin{proof}
This is because, with error-rate $1/2-\eps$, the adversary can corrupt at most $1/2-\eps/3$ blocks beyond corruption rate of $1-\eps/3$. Hence, we are now in a regime that at most $1/2-\eps/3$ fraction of edge-sets are corrupted, each corrupted possibly even completely, and for every other edge-set, the (list) decoding includes the correct transmitted edge-set. Hence, similar to the proof of \Cref{thm:nonadaptiveSim}, we get that the correct leaf gets a vote of at least $\Omega(\eps N)$. On the other hand, now each block might give a vote of at most $O({1}/{\eps})$ to different leaves and thus, the total weight is at most $O({N}/{\eps})$. Therefore, the correct leaf is within the top $O(1/\eps^2)$ voted leaves.
\end{proof}

We next explain a simple idea that allows us to extend this result to unique decoding; specifically non-adaptive unique decoding when error-rate is at most $1/4-\eps$, and adaptive unique decoding when error-rate is at most $2/7-\eps$.

The idea is to use a variation of Forney's technique for decoding concatenated codes~\cite{forney1966concatenated}. Recall that each received edge-set might lead to two things: (1) extending the common path, or (2) adding a vote to a leaf. While we keep the first part as above with list decoding, we make the voting weighted. In particular, in the receiver side, we take each edge-set leading to a leaf (when combined with local important edge set) as a vote for the related leaf but we weight this vote according to the hamming-distance to the received block. More precisely, if the edge-set has relative distance $\delta$ from the received block, the related leaf gets a vote of $\max\{1-2\delta, 0\}$. 

Using this weighting function, intuitively we have that if the adversary corrupts an edge-set to $\frac{1}{2}$ corruption rate, even though the edge-sets gets added to the set of important edges, in the weighting procedure this edge-set can only add at most weight $2\eps$ to any (incorrect) leaf. Hence, for instance if the adversary wants to add a unit of weight to the votes of an incorrect leaf, it has to almost completely corrupt the symbols of the related block.

\begin{lemma}\label{lem:14constant} If the error rate is at most $1/4 - \eps$, then in both parties, the leaf with the highest weighted votes is the correct leaf.  
\end{lemma}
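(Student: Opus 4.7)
The plan is to fix an arbitrary wrong leaf $v$ and show that Alice's total weighted vote for the correct leaf strictly exceeds her total weighted vote for $v$; by symmetry the same argument will apply to Bob. Let $N$ be the number of blocks Alice receives, let $A$ denote the index set of these blocks, and for each $i \in A$ let $\delta_i \in [0,1]$ denote the fraction of its symbols corrupted by the adversary. The overall $(1/4 - \eps)$ error rate gives the budget constraint $\sum_{i \in A} \delta_i \leq (1/2 - 2\eps)N$ (the worst case being that the adversary concentrates all corruptions on Alice's receptions).

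The heart of the argument will be a Forney-style per-block bound. In block $i$, suppose the sender transmits codeword $c = \mathcal{C}(E)$ and Alice receives $r$ with $d(c,r) = \delta_i$. Any edge-set $E' \neq E$ returned by the list decoder satisfies $d(\mathcal{C}(E'), c) \geq 1 - \eps/10$ (the code's relative distance), so by triangle inequality $d(\mathcal{C}(E'), r) \geq (1 - \eps/10) - \delta_i$; hence $E'$ contributes weight at most $(2\delta_i - 1 + \eps/5)^+$ to the leaf it induces, while the true $E$ contributes $(1 - 2\delta_i)^+$ to the correct leaf in any voting block. A short case analysis over the regimes $\delta_i \in [0, 1/2 - \eps/10]$, $\delta_i \in (1/2 - \eps/10, 1/2]$, and $\delta_i > 1/2$ will then yield the pointwise inequality
\[
(1 - 2\delta_i)^+ - (2\delta_i - 1 + \eps/5)^+ \;\geq\; 1 - (2 + 2\eps/5)\,\delta_i,
\]
which says that the adversary's ``damage rate'' per unit of corruption on a voting block is at most $2 + 2\eps/5$.

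Next I will lower-bound the set $V_A \subseteq A$ of Alice's voting blocks for the correct leaf. As in the proof of \Cref{thm:nonadaptiveSim}, whenever $\delta_i \leq 1 - \eps/3$ the list decoder returns the true edge-set and Alice either extends the common path in $E_A \cup E_B^{\mathrm{true}}$ by one edge or casts a weighted vote for the correct leaf; since the path has only $n$ edges total, at most $n/2$ of Alice's list-decodable receptions are spent on path extension. Thus $|V_A| \geq N - n/2 - |\{i \in A : \delta_i > 1 - \eps/3\}|$. Summing the per-block bound over $V_A$ and using the trivial estimate $-(2\delta_i - 1 + \eps/5)^+ \geq -2\delta_i$ on the remaining blocks,
\[
W_A(\text{correct}) - W_A(v) \;\geq\; |V_A| - (2 + 2\eps/5)\sum_{i \in A} \delta_i \;\geq\; |V_A| - (1 - 19\eps/5)\,N,
\]
which is at least $\tfrac{19\eps}{5} N - O(n) > 0$ for $N = \Omega(n/\eps)$, matching the algorithm's parameter choice.

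The main obstacle will be the per-block inequality. Naively, both the correct-leaf weight and the wrong-leaf weight vary in $\delta_i$ at slope $2$, suggesting a combined damage coefficient of $4$---too large to absorb against a $(1/4 - \eps)$ error budget. The key is that these two effects are simultaneously active only in the narrow window $(1/2 - \eps/10, 1/2)$ of width $\eps/10$, and the Forney-style weighting function $(1 - 2d)^+$ is precisely calibrated so that the combined damage coefficient is kept down to $2 + O(\eps)$. A secondary subtlety to verify is that the adversary's alternative of pushing $\delta_i$ past the list-decoding threshold $1 - \eps/3$ (which shrinks $|V_A|$) is strictly less budget-efficient than the canonical $\delta_i = 1/2$ attack, so $|V_A| \geq N - O(n)$ still holds at the adversary's optimum.
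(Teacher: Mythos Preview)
Your overall strategy matches the paper's: fix a wrong leaf $v$, derive a Forney-style per-block lower bound on $W(\text{correct})-W(v)$, and sum against the budget. Your per-block inequality $(1-2\delta_i)^+ - (2\delta_i-1+\eps/5)^+ \geq 1-(2+2\eps/5)\delta_i$ is correct and more explicit than what the paper writes.

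The gap is in your lower bound on $|V_A|$. Your dichotomy ``Alice either extends the common path in $E_A\cup E_B^{\text{true}}$ by one edge or casts a vote for the correct leaf'' is false: there is a third case where Alice list-decodes correctly but the common path stalls at an \emph{even} level because Bob's true $E_B$ is still incomplete (e.g., because Bob's recent receptions were corrupted). In that case Alice neither votes nor adds an edge. Hence the number of Alice's list-decodable, non-voting blocks is not bounded by $n/2$; it depends on the adversary's corruptions \emph{on Bob's side}, which do not appear anywhere in your inequality $\sum_{i\in A}\delta_i\leq (1/2-2\eps)N$. Your ``secondary subtlety'' paragraph does not address this, and your efficiency argument also does not salvage the bound as written: with respect to the displayed lower bound $|V_A|-(2+2\eps/5)\sum\delta_i$, fully corrupting a block ($\delta_i=1$) actually \emph{decreases} the bound by $1+(2+2\eps/5)$ per unit budget, which is more, not less, than the $\delta_i=1/2$ attack---so you cannot conclude $|V_A|\geq N-O(n)$ from your own inequality.

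The paper avoids both issues by arguing globally about the time $t$ at which $E_A\cup E_B$ first contains the full correct path, bounding $t$ via the \emph{total} budget across both parties (each ``bad'' block pair before $t$ costs the adversary at least $1-\eps/3$), and then using a single per-block bound of the form $\Phi_i \geq 1-2(1+\eps/3)\delta_i$ that holds for \emph{every} block after $t$, including fully-corrupted ones (where $\Phi_i\geq -1 \geq 1-2(1+\eps/3)\delta_i$). This folds the fully-corrupted case into the linear bound rather than into $|V_A|$, so no separate efficiency argument is needed. Your proof can be repaired along the same lines.
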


\begin{proof}
We show that the correct leaf $u_{g}$ get strictly more weighted votes compared to any other leaf $u_b$. For this, we use a potential $\Phi=W(u_g)-W(u_b)$, that is, the total weight added to the good leaf minus that of the bad leaf, and we show this potential to be strictly positive. Let $P_{c}$ be the set of edges of the correct path.  Let $t$ be the time at which point the common path in $E_A \cup E_B\cup P_{c}$ ends in a leaf. First note that for each two consecutive blocks before time $t$ in which the corruption rate is at most $1-\eps/3$, the common path in $E_A \cup E_B\cup P_{c}$ gets extended by (at least) one edge (towards the correct leaf). Hence, at most $n$ such (``not completely corrupted") block pairs are spent on growing the common path in $E_A \cup E_B\cup P_{c}$ and also, $t$ happens after at most $N 2(1/2-2\eps)(1+\eps/3) <N (1-2\eps) < N-4n$ blocks. That is, $t$ happens at least $4n$ blocks before the end of the simulation. Before time $t$, we do not add any weight to $W(u_g)$ but each block corrupted with rate $x\geq 1/2$ changes $\Phi$ in the worst case as $1-2x\leq 0$. For each block after time $t$, each block corrupted with rate $x \in [0,1-\eps/3]$ changes $\Phi$ in the worst case by $1-2x$ and each block corrupted to rate $x\in (1-\eps/3, 1]$ changes $\Phi$ by at most $-1$. These two cases can be covered as a change of no worse than $1-2(1+\eps/3)x$. In total, since adversary's error rate is at most $(1/4-\eps)$, in total of before and after time $t$, we get that it can corrupt at most $1/2-2\eps$ fraction of the receptions of one party and thus, $\Phi \geq 1-2(1/2-2\eps)(1+\eps/3) \geq 3\eps >0$.
\end{proof}

For the $2/7-\eps$ adaptive algorithm, we first present a lemma about the total weight assigned to the leaves due to one edge-set reception. 

\begin{lemma}\label{lem:listweight}For each list decoded block that has corruption rate $\rho$, the summation of the weight given to all the codewords in the list is at most $|1-2\rho|+3\eps/5$.
\end{lemma}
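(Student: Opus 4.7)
The plan is to exploit the code's relative distance $1-\eps/10$ together with the triangle inequality, and reduce the claim to a small LP-style estimate handled by case analysis on $\rho$. Writing $\delta_i = d(c_i,r)/N$ for each codeword $c_i$ in the decoded list (so $w_i = \max(0,1-2\delta_i)$), and letting $c^*$ denote the transmitted codeword with $d(c^*,r)/N = \rho$, the distance property combined with the triangle inequality yields two key inequalities: $(i)$ $\delta_i + \rho \ge 1-\eps/10$ for every $c_i \ne c^*$, and $(ii)$ $\delta_i + \delta_j \ge 1-\eps/10$ for any two distinct list codewords. Inequality $(ii)$ immediately gives the \emph{pairwise weight bound} $w_i + w_j \le 2-2(1-\eps/10) = \eps/5$ whenever both $w_i, w_j$ are nonzero, and $(i)$ analogously gives $w_i \le 2\rho - 1 + \eps/5$ for $c_i \ne c^*$.

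I would then split on $\rho$. When $\rho \le 1/2 - \eps/10$, inequality $(i)$ forces $\delta_i > 1/2$ for every $c_i \ne c^*$, so all such $w_i$ vanish and $\sum_i w_i = 1-2\rho = |1-2\rho|$, well within the claimed bound. When $\rho \ge 1/2$, $w^* = 0$ and only codewords with $\delta_i < 1/2$ contribute at most $2\rho-1+\eps/5$ each; combined with the pairwise bound, an LP argument yields a total at most $|1-2\rho| + 3\eps/5$. The genuinely interesting regime is $\rho \in (1/2 - \eps/10, 1/2)$, where $c^*$ contributes the bulk $w^* = 1-2\rho$ and several other list codewords can also sit just inside the radius-$1/2$ ball around $r$.

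For this intermediate regime I would use Plotkin-style summation: letting $S = \{c_i \in \text{list} : \delta_i < 1/2\}$, summing $(ii)$ over all pairs in $S$ yields $\sum_{i\in S}\delta_i \ge |S|(1-\eps/10)/2$, which rearranges to $\sum_{i\in S}w_i \le |S|\,\eps/10$. Combining this with the anchor constraint $w_i + w^* \le \eps/5$ for $c_i \ne c^*$ and with the list-size guarantee of the decoder (which, for the chosen code parameters, forces $|S|$ to be a small constant via a Plotkin/Johnson-type restriction on the number of codewords within distance $1/2$ of any point), a short LP/convex-combination calculation should squeeze the sum to at most $w^* + 3\eps/5 = |1-2\rho| + 3\eps/5$. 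The main obstacle is precisely this last verification: despite the fact that many codewords in the decoder's list can in principle sit at distance just under $1/2$ from the received block, one must check that the joint system of pairwise constraints together with the anchor-to-$c^*$ constraint, plus an explicit bound on the effective support $|S|$, never lets the total exceed the claimed $|1-2\rho| + 3\eps/5$. The specific constant $3\eps/5$ is tight enough that any slack in this counting must be tracked carefully.
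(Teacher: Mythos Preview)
Your approach uses the right ingredients (triangle inequality plus the code distance $1-\eps/10$) and can be completed along the lines you sketch, but it is organized more laboriously than necessary. The paper's proof dispenses with your three-way case split on $\rho$ via two simplifications.

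First, the paper \emph{begins} by making explicit the bound you defer to a vague ``Plotkin/Johnson-type restriction'': at most three codewords can have nonzero weight. This is a one-line inclusion--exclusion on agreement sets: if four codewords each agree with the received block on at least half the coordinates and pairwise agree on at most an $\eps/10$ fraction (by the distance bound), then $1 \geq 4\cdot\tfrac12 - \binom{4}{2}\cdot\tfrac{\eps}{10} > 1$, a contradiction. So $|S|\leq 3$ unconditionally, and the decoder's list-size guarantee is irrelevant.

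Second---and this is the simplification you miss---rather than anchoring the inequalities to the \emph{transmitted} codeword $c^*$, the paper anchors them to the \emph{closest} codeword $x_1$ among those with nonzero weight. Since $\delta_1 < 1/2$, the triangle inequality gives for $i\geq 2$ that $\delta_i \geq (1-\eps/10)-\delta_1 > 1/2 - \eps/10$, so $w_2,w_3 \leq \eps/5$ each, \emph{uniformly in $\rho$}. For $x_1$ itself, either $x_1 = c^*$ (so $\delta_1=\rho$) or $x_1\neq c^*$ (so $\delta_1\geq 1-\eps/10-\rho$ by your inequality $(i)$); hence $\delta_1 \geq \min\{\rho,\,1-\eps/10-\rho\}$, which rearranges to $w_1 \leq |1-2\rho|+\eps/5$, again uniformly in $\rho$. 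Summing the three bounds gives exactly $|1-2\rho|+3\eps/5$ with no case analysis and no LP.

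Your anchoring to $c^*$ is precisely what forces the case split and the residual ``main obstacle'' you flag; switching the anchor to $x_1$ dissolves it.
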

\begin{proof} First we show that only at most $3$ codewords receive nonzero weight. The proof is by contradiction. Suppose that there are $4$ codewords $x_1$ to $x_4$ that each agree with the received string $x$ in at least $\eps/10$ fraction of the symbols. Furthermore, for each $x_i$, let $S_i$ be the set of coordinates where $x_i$ agrees with the received string $x$. Let $\ell$ be the length of string $x$. Note that $\forall i\neq j \in \{1, 2, 3, 4\}$, $|S_i|\geq \ell/2$ and $|S_i \cap S_j| \leq \eps/10$. Hence, with a simple inclusion-exclusion, we have $$\ell \geq |\cup_{i} S_i |\geq \sum_{i}|S_i| -\sum_{i<j} |S_i \cap S_j|\geq \frac{4\ell}{2} - \frac{6\eps}{10} >\ell,$$
which is a contradiction.

Having that at most $3$ codewords receive nonzero weight, we now conclude the proof. Let $x_1$ to $x_3$ be the codewords that receive nonzero weights and assume that $x_1$ is the closest codeword to $x$. We have $$\forall i> 1, \Delta(x, x_i) \geq \Delta(x_1, x_i)-\Delta(x, x_1)\geq 1-\eps/10 - 1/2 \geq 1/2-\eps/10.$$
Thus, the weight that $x_2$ or $x_3$ get is at most $2\eps/10$. On the other hand, $$\Delta(x_1, x) \geq \min\{\rho, 1-\eps/10-\rho\}= \frac{1-\eps/10}{2} - |\frac{1-\eps/10}{2} -\rho|.$$ Thus, the weight that $x_1$ receives is at most $\eps/10+|1-\eps/10-2\rho| \leq 2\eps/10+|1-2\rho|$. Summing up with the weight given to $x_2$ and $x_3$, we get that the total weight given to all codewords is at most $3\eps/5+|1-2\rho|$.  
\end{proof}

The algorithm is as in Algorithm \ref{alg:SimulatorAdaptive}, now enhanced with list decoding and weighted voting. In the end of the $\frac{6N}{7}$ rounds, a party decodes safely (and switched to only transmitting after that) if for the leaf $u$ that has the most votes, the following holds: $\Psi=\frac{W(u)+W_{\emptyset}-W(v)}{N} > 1/7$ where here, $W(v)$ is the weighted vote of the second leaf that has the most votes and $W_{\emptyset}$ is the total weight for decoded codewords that are inconsistent with the local important edge set (and thus mean an error). We call $\Psi$ the parties \emph{confidence}. 

The following lemma serves as completing the proof of \Cref{thm:27upperboundoverview}.

\begin{lemma} If the error rate is at most $2/7 - \eps$, then in the end, in both parties, the leaf with the highest weighted votes is the correct leaf.
\end{lemma}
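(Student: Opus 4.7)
The plan is to adapt the three-part structure of the proof of \Cref{thm:uniquedecoding} (the unweighted $2/7$ result) to the weighted list-decoded setting. I prove three claims in sequence: \textbf{(i)} if a party ends phase~1 (the first $6N/7$ blocks) with confidence $\Psi > 1/7$, then its top-weighted leaf is $u_g$; \textbf{(ii)} at least one of the two parties ends phase~1 with $\Psi > 1/7$; and \textbf{(iii)} if only one party, say Alice, is confident, then Bob still correctly identifies $u_g$ by the end of phase~2.

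The structural fact driving every subclaim is that $E_A \subseteq \mathcal{X}$ and Bob's current $E_B \subseteq \mathcal{Y}$, so $E_A \cup E_B^{(i)} \subseteq \mathcal{X} \cup \mathcal{Y}$ and its common path is a prefix of the true path $P_c$; hence when Alice decodes the correct codeword, it can only extend her common path or cast a vote for $u_g$, never for a wrong leaf and never contributing to $W_\emptyset^A$. Combined with \Cref{lem:listweight}, for each Alice-reception with corruption rate $\rho_i$: $W_A(u_g)$ grows by at least $\max(0, 1 - 2\rho_i)$ once the common path is primed to reach $u_g$ (which takes at most $O(n)$ uncorrupted blocks), while $W_\emptyset^A + \sum_{u' \neq u_g} W_A(u')$ grows by at most $\max(0, 2\rho_i - 1) + 3\eps/5$. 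Writing $\mathcal{E}_A$ for the total adversary corruption on Alice's $3N/7$ phase-1 receptions, summing gives $W_A(u_g) \geq 3N/7 - 2 \mathcal{E}_A - O(\eps N)$ and $W_\emptyset^A + \sum_{u' \neq u_g} W_A(u') \leq 2 \mathcal{E}_A - 3N/7 + O(\eps N)$ (the second bound being meaningful only when $\mathcal{E}_A \geq 3N/14$, and holding trivially otherwise).

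For \textbf{(i)}, if $u_A \neq u_g$ then $W_A(v_A) \geq W_A(u_g)$, so $\Psi_A > 1/7$ yields $W_A(u_A) + W_\emptyset^A - W_A(u_g) > N/7$; substituting the bounds forces $\mathcal{E}_A > 2N/7 - O(\eps N)$, exceeding the adversary's global $(2/7 - \eps)N$ budget. For \textbf{(ii)}, a symmetric analysis shows $\Psi_A \leq 1/7$ forces $\mathcal{E}_A \geq N/7 - O(\eps N)$ regardless of whether $u_A = u_g$ or not, and likewise $\mathcal{E}_B \geq N/7 - O(\eps N)$; the sum exceeds $(2/7 - \eps)N$, a contradiction. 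For \textbf{(iii)}, with Alice transmitting correctly throughout phase~2, Bob accumulates receptions over $4N/7$ blocks total, and the same accounting gives $W_B(u_g) - W_B(u') \geq 4N/7 - 2(2/7 - \eps)N - O(\eps N) = 2\eps N - O(\eps N) > 0$ for every $u' \neq u_g$, so Bob's top leaf is $u_g$.

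The main obstacle is verifying that the accumulated $O(\eps N)$ slacks---coming from the $O(n) = O(\eps N)$ path-growth prefix and from the $3\eps/5$ per-block term of \Cref{lem:listweight}---remain strictly smaller than $\eps N$, so that the threshold inequalities stay strict. This is arranged by choosing the hidden constant in $N = \Theta(n/\eps)$ sufficiently large. A minor subtlety in \textbf{(ii)} is the case $u_A \neq u_g$: here $W_A(u_A) \geq W_A(u_g)$ already forces $\mathcal{E}_A \geq 3N/14 - O(\eps N) > N/7 - O(\eps N)$ directly, so the claim holds \emph{a fortiori}.
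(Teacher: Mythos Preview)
Your three-step decomposition matches the paper's, but the accounting that drives all three steps contains a genuine error. You state the per-block bounds correctly: after priming, $W_A(u_g)$ grows by at least $\max(0,1-2\rho_i)$ and $W_\emptyset^A+\sum_{u'\neq u_g}W_A(u')$ grows by at most $\max(0,2\rho_i-1)+3\eps/5$. But when you sum the second one you write
\[
W_\emptyset^A+\sum_{u'\neq u_g}W_A(u')\;\le\;2\mathcal E_A-\tfrac{3N}{7}+O(\eps N),
\]
which would require $\sum_i \max(0,2\rho_i-1)\le \sum_i(2\rho_i-1)$. That inequality goes the wrong way: $\max(0,x)\ge x$. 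The correct upper bound is only $\mathcal E_A+O(\eps N)$ (achieved by concentrating all corruption into $\approx\mathcal E_A$ blocks with $\rho_i=1$). Your parenthetical ``meaningful only when $\mathcal E_A\ge 3N/14$, and holding trivially otherwise'' does not rescue it: when the right-hand side is negative the inequality is simply false, not trivially true.

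This is not cosmetic. With the valid bound, claim \textbf{(ii)} collapses: in the case $u_A=u_g$ one only gets $\Psi_A\cdot N\ge 3N/7-3\mathcal E_A-O(\eps N)$, so $\Psi_A\le 1/7$ forces merely $\mathcal E_A\ge 2N/21-O(\eps N)$; together with the symmetric bound for Bob this gives $\mathcal E_A+\mathcal E_B\ge 4N/21$, which is strictly below the adversary's $2N/7=6N/21$ budget and yields no contradiction. Your claim \textbf{(i)} likewise relies on the faulty bound to reach $\mathcal E_A>2N/7$.

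The repair---and what the paper effectively does---is to never separate the two bounds but to track the \emph{difference} directly. The key identity is
\[
\max(0,1-2\rho)-\max(0,2\rho-1)=1-2\rho,
\]
so per block (after priming) the quantity $W_A(u_g)+W_\emptyset^A-\max_{u'\neq u_g}W_A(u')$ increases by at least $(1-2\rho_i)-3\eps/5$. Summing \emph{this} gives $\ge 3N/7-2\mathcal E_A-O(\eps N)$ legitimately, and then your three claims go through (for \textbf{(ii)} the paper sums the two parties' confidences to get $\ge 2/7+\eps$ directly). The $2\mathcal E_A-3N/7$ expression is real, but it bounds the signed potential, not the bad weight alone.
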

\begin{proof}
We first show that at the end of the first $6N/7$ rounds, at least one party has confidence at least $1/7+\eps/2$. For this, we show the summation of the confidence of the two parties to be at least $2/7+\eps$. The reasoning is similar to the proof that of \Cref{lem:14constant}. Let $t$ be the time at which point the common path in $E_A \cup E_B\cup P_{c}$ ends in a leaf and note that for each two consecutive blocks before time $t$ in which the corruption rate is at most $1-\eps/3$, the common path in $E_A \cup E_B\cup P_{c}$ gets extended by (at least) one edge (towards the correct leaf). Hence, at most $n$ such (``not completely corrupted") block pairs are spent on growing the common path in $E_A \cup E_B\cup P_{c}$. Furthermore, $t$ happens after at most $2 \cdot N(2/7-\eps)(1+\eps/3) <N (4/7-\eps)$ blocks. Specifically, $t$ happens at least $N(2/7+\eps)$ blocks before the end of the first $6N/7$ blocks. On the other hand, for each corruption rate $x$ on one block, only the confidence of the party receiving it gets effected and in worst case it goes down by $\frac{1-2x(1+\eps/3)}{N}$. Since the adversary's total budget is $N(2/7-\eps)$, at the end of the first $6N/7$ rounds, the summation of the confidence of the two parties is at least $\frac{6N/7 - 2n - 2N(2/7-\eps)(1+\eps/3)}{N} \geq N(2/7+\eps)$.

Now we argue that if a party decodes at the end of $6N/7$ blocks because it has confidence at least $1/7$, then the decoding of this party is indeed correct. Proof is by contradiction. Using \Cref{lem:listweight}, each block with corruption rate $x$ can add a weight of at most $\max\{2x-1 +3\eps/5, 0\}$ to the set of bad leaves or those incorrect codeword edge-sets that do not form a common path with the local edge-sets, i.e., the weight given to $W_{\emptyset}$. Hence, knowing that the adversary has budget of $N(2/7-\eps)$, it can create a weight of at most $2N(2/7-\eps) -N/3(1-3\eps/5) < N/7 - \eps/5 < N/7$, which means that there can not be a confidence of more than $1/7$ on an incorrect leaf.

The above shows that at least one party decides after $6N/7$ blocks and if a party decides then, it has decided on the correct leaf. If a party does not decide, it listens for the next $N/7$ blocks where the other party is constantly transmitting. It is easy to see that in this case, this leaf that has the maximum vote in this listening party is the correct party.
\end{proof}

\section{Adaptivity in the Shared Randomness Setting}\label{sec:sharedrand}
In this section, we present our positive and negative results for adaptive protocols with access to shared randomness. 

In the shared randomness setting of Franklin et al. \cite{FGOS13} Alice and Bob have access to a shared random string that is not known to the adversary. As shown in Fanklin et al. \cite{FGOS13} the main advantage of this shared randomness is that Alice and Bob can use a larger communication alphabet for their protocol and agree on using only random subset of it which is independent from round to round. Since the randomness is not known to the adversary any attempt at corrupting a transmission results with good probability on an unused symbol. This makes most corruptions detectable to Alice and Bob and essentially results in corruptions being equivalent to erasures. Fanklin et al. call this way of encoding the desired transmissions into a larger alphabet to detect errors a \emph{blue-berry code}. It is well known for error correcting codes that erasures are in essence half as bad as corruption. Showing that the same holds for the use of tree codes in the algorithm of Braverman and Rao translates this factor two improvement to the interactive setting. For non-adaptive protocols this translates into a tolerable error rate of $1/2 - \eps$ instead of a $1/4 - \eps$ error rate. In what follows we show that allowing adaptive protocols in this setting allows to further rise the tolerable error rate to $2/3 - \eps$. We also show that no coding scheme can tolerate more than an $2/3$ error rate: 

\begin{theorem}\label{thm:xorshard23impossible}[The second part of \Cref{thm:23boundsoverview}]
In the shared randomness setting there is no (deterministic or randomized) adaptive protocol that robustly simulates the Exchange Protocol for an error rate of 2/3 with an $o(1)$ failure probability even when allowing computationally unbounded protocols that use an arbitrarily large number of rounds and an unbounded alphabet.
\end{theorem}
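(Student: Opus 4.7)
The plan is to adapt the case analysis of \Cref{thm:xor27impossible} (the $2/7$ lower bound in the private-randomness setting) to the shared-randomness model. The crucial observation is that, since the adversary is oblivious to the shared random string, any symbol it attempts to inject into a listening party's reception fails the blueberry-code validity check with overwhelming probability. Every adversarial intervention therefore effectively produces an erasure: the adversary can turn an actual transmission into $\bot$ at one unit of budget, or produce $\bot$ in a both-listen round for free. This reduces the setting to an erasure-channel version of the adaptive interactive-coding problem, for which the optimal threshold turns out to be $2/3$.

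Working with the $1$-bit Exchange Problem and fixing the shared random string (randomization is handled at the end), I would consider the three settings $S_{0,0}$, $S_{0,1}$, $S_{1,0}$ and, for each party with input $0$, define $x_A$ (resp.\ $x_B$) as the number of listens taken by Alice (resp.\ Bob) in the deterministic execution where her (resp.\ his) view consists entirely of erasures. Since this count depends only on the party's input and on the all-erasures view, it is well-defined and identical across the relevant settings. I then split into cases on the threshold $(2/3)N$:
\begin{itemize}
\item If $x_A \le (2/3)N$, the adversary erases every one of Alice's listen-rounds on which the other party transmits, in both $S_{0,0}$ and $S_{0,1}$. The cost is at most $x_A \le (2/3)N$ in each setting (both-listen rounds are free), and in both scenarios Alice's view is identically all-erasures, so she must output the same value in both and thus err in at least one.
\item If $x_B \le (2/3)N$, a symmetric attack confuses Bob between $S_{0,0}$ and $S_{1,0}$.
\item If both $x_A, x_B > (2/3)N$, the overlap argument of \Cref{thm:xor27impossible} bounds each party's alone-listens in the joint all-erasures execution by $N - x_{\text{other}} < N/3$, so the adversary can simultaneously force both views to be all-erasures in $S_{0,0}$ at total cost $< 2 \cdot N/3 = (2/3)N$. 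In this scenario Alice is indistinguishable from $S_{0,1}$ and Bob from $S_{1,0}$, so both err.
\end{itemize}
Finally, the extension to randomized protocols is handled by the same hedging argument as in \Cref{thm:xor27impossible}: the adversary estimates the probability of each case over the shared randomness and bets on the most likely, achieving a constant failure probability incompatible with $o(1)$-failure.

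The main obstacle is the simultaneous-confusion bookkeeping in case 3: since the adversary cannot forge specific symbols (unlike in the private-randomness dummy-personality argument), confusion must arise solely from consistent erasure patterns, and it takes care to verify that a single $S_{0,0}$ adversary strategy using only $(2/3)N$ erasures simultaneously matches an $S_{0,1}$ Alice-only attack and an $S_{1,0}$ Bob-only attack. The key ingredients are (a) that each party's actions under an all-erasures view depend only on its own input and that view (so the two action functions $\sigma_A^*$ and $\sigma_B^*$ are well-defined independent of the setting), (b) that both-listen erasures are free, and (c) the overlap bound on alone-listens. Pushing the threshold to exactly $(2/3)N$, rather than the weaker $(1/2)N$ one would obtain from the naive ``erase all listens'' attack alone, is precisely what this combined case-3 strategy accomplishes.
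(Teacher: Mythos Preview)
Your approach is correct and matches the paper's: reduce to an erasure adversary, define $x_A,x_B$ as the listen-counts under the all-$\bot$ view, split on the $2N/3$ threshold, and use the overlap bound $x_A+x_B-N$ in case~3 to keep the total alone-listen erasures under $2N/3$.

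Two small remarks. First, your justification for working with erasures is inverted: for a \emph{lower} bound the relevant direction is simply that an erasure-only adversary is a weaker special case of the general one, so impossibility against erasures immediately gives impossibility against corruptions---blueberry codes play no role here (that argument belongs to the upper bound). Second, in case~3 the phrase ``so both err'' is not quite right: in $S_{0,0}$ both parties could output~$0$ and be correct. The point is rather that each party's output under the all-$\bot$ view is independent of the other's input, so for a random choice of inputs at least one party errs with probability at least $3/4$; this is how the paper phrases it, and it sidesteps the need to exhibit a matching within-budget attack in $S_{0,1}$ or $S_{1,0}$ (which, since $x_A,x_B>2N/3$ in this case, you cannot directly guarantee).
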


With the intuition that in the shared randomness setting corruptions are almost as bad as erasures we prove \Cref{thm:xorshard23impossible} directly for an strictly weaker adversary, which can only delete transmitted symbols instead of altering them to other symbols. Formally, when the adversary erases the transmission of a party, we say that the other party receives a special symbol $\bot$ which identifies the \emph{erasure} as such.

\begin{proof}
Suppose that there is a coding scheme with $N$ rounds that allows Alice and Bob exchange their input bits, while the adversary erases at most $2N/3$ of the transmissions. Consider each party in the special scenario in which this party receives a $\bot$ symbol whenever it listens. Let $x_A$ and $x_B$ be the (random variable of) the number of rounds where, respectively, Alice and Bob listen when they are in this special scenario. 

Suppose this number is usually small for one party, say Alice. That is, suppose we have $\Pr[x_A\leq 2N/3]\geq 1/3$. In this case the adversary gambles on the condition $x_A\leq 2N/3$ and simply erases Bob's transmission whenever Alice listens and Bob transmits. Furthermore, if both parties listen, the adversary delivers the symbol $\bot$ to both parties at no cost. This way, with probability at least $1/3$, Alice stays in the special scenario while the adversary adheres to its budget of at most $2N/3$ erasures. Since, in this case, Alice only receives the erasure symbol she cannot know Bob's input. Therefore, if the adversary chooses Bob's input to be a random bit the output of Alice will be wrong with probability at least $1/2$ leading to a total failure probability of at least $1/6$. 

If on the other hand $\Pr[x_A\leq 2N/3]\leq 1/3$ and $\Pr[x_B\leq 2N/3]\leq 1/3$, then even a union bound shows that $\Pr[(x_A\geq 2N/3) \wedge (x_A\geq 2N/3)]\geq 1/3$. In this case, the adversary tries to erase all transmissions of both sides. Indeed, if $x_A\geq 2N/3$ and $x_A\geq 2N/3$ this becomes possible because in this case there must be are at least $N/3$ rounds in which both parties are listening simultaneously. For these rounds, the adversary gets to deliver the erasure symbol to both sides at no cost. In this case, which happens with probability $1/3$ there remain at most $2N/3$ rounds in which one of the parties listens alone which the adversary can erase without running out of budget. With probability $\Pr[(x_A\geq 2N/3) \wedge (x_A\geq 2N/3)] > 1/3$ the adversary can thus prevent both parties from learning the other party's input. Choosing the inputs randomly results in at least one party being wrong in its decoding with probability $3/4$ which in total leads to a failure probability of at least $1/3 \cdot 3/4 > 1/6$. 
\end{proof}

Turning to positive results we first show how to solve the Exchange Problem robustly with an error rate of $2/3 - \eps$. While this is a much simpler setting it already provides valuable insights into the more complicated general case of robustly simulating arbitrary $n$-round protocols. For simplicity we stay with the adversary who can only erase symbols. Our general simulation result presented in \Cref{thm:protshard23} works of course for the regular adversary. 

\begin{lemma}\label{thm:xorshard23prot}
Suppose $\eps>0$. In the shared hidden randomness model with rate adaptation there is a protocol that solves the Exchange Problem in $O(1/\eps)$ rounds under an adversarial erasure rate of $2/3-\eps$.
\end{lemma}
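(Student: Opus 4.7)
The plan is to give a simple three-phase adaptive protocol that mirrors the structure of the matching lower bound of \Cref{thm:xorshard23impossible}. Let $N = \lceil 9/\eps \rceil$, assumed for simplicity to be a multiple of $3$, and use an alphabet large enough to hold an input in a single symbol. Partition the $N$ rounds into three equal phases $P_1, P_2, P_3$ of $N/3$ rounds each. During $P_1$, Alice repeatedly transmits $x_A$ while Bob listens; during $P_2$, the roles are reversed; during $P_3$ each party applies the adaptive rule ``transmit if I have received a non-erased symbol at some point in the past (i.e.\ if I have \emph{learned}), otherwise listen.'' Since each party observes whether each of its listening rounds returned $\bot$ or a real symbol, the ``I have learned'' predicate is locally computable and requires no extra communication.

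The analysis then proceeds by a simple budget accounting. Let $e_i$ denote the number of erasures the adversary spends in phase $P_i$, so that $e_1 + e_2 + e_3 \leq (2/3 - \eps)N$. For each $i \in \{1, 2\}$, the passively listening party learns in $P_i$ unless every one of the $N/3$ listening rounds is erased, i.e.\ unless $e_i = N/3$. If $e_1 < N/3$ and $e_2 < N/3$, both parties have learned by the end of $P_2$ and we are done. Otherwise, by symmetry assume $e_1 = N/3$, so Bob does not yet know $x_A$ on entering $P_3$ and continues to listen throughout $P_3$. It cannot also be that Alice failed to learn in $P_2$, for that would force $e_1 + e_2 = 2N/3 > (2/3 - \eps)N$. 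Hence Alice learned in $P_2$ and, by the adaptive rule, transmits in every round of $P_3$. For Bob still to fail, the adversary would have to erase every round of $P_3$ too, giving $e_1 + e_3 = 2N/3 > (2/3 - \eps)N$, another contradiction. So Bob learns by the end, and the symmetric case $e_2 = N/3$ is handled identically.

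The main subtlety I expect is the bookkeeping of ``free'' rounds for the adversary: any round in which both parties listen simultaneously is effectively erased at no cost in the adaptive model, which could in principle be exploited in $P_3$. However, a mutual-listen round during $P_3$ requires neither party to have learned, and the argument above already shows that this configuration forces the adversary to exceed its budget in $P_1 \cup P_2$. Mutual-transmit rounds in $P_3$ convey nothing but also cost nothing and only arise after both parties have learned, which is harmless. Aside from rounding issues (ensuring $N$ is a multiple of $3$ and that a partially uncorrupted phase leaves at least one real symbol through, both of which are comfortably ensured by $N = \lceil 9/\eps \rceil$) and the alphabet being large enough to hold a full input in one symbol, there are no further complications.
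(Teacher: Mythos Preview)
Your proof is correct and is essentially the same as the paper's: three equal phases with Alice sending in the first, Bob in the second, and each party in the third sending exactly when it has already received a non-erased symbol. The only cosmetic differences are that the paper takes $N = 3/\eps$ rather than $\lceil 9/\eps\rceil$, and in the final step it bundles the budget as ``Bob sends $2N/3$ times in total while Alice listens, which cannot all be erased,'' whereas you split the contradiction into $e_1+e_2$ and $e_1+e_3$; both arguments are equivalent.
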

\begin{proof}
The protocol consists of $3/\eps$ rounds grouped into three parts containing $1/\eps$ rounds each. In the first part Alice sends her input symbol in every round while Bob listens. In the second part Bob sends his input symbol in each round while Alice listens. In the last part each of the two parties sends its input symbol if and only if they have received the other parties input and not just erasures during the first two parts; otherwise a party listens during the last part. Note that the adversary has a budget of $3/\eps \cdot (2/3 - \eps) = 2\eps -1$ erasures which is exactly one erasure to little to erase all transmission during the first two parts. This results in at least one party learning the other party's input symbol. If both parties learn each others input within the first two rounds then the algorithm succeeded. If on the other hand one party, say without loss of generality Alice, only received erasures then Bobs received her input symbol at least once. This results in Bob sending his input symbol during the last part while Alice is listens. Bob therefore sends his input symbol a total of $2/\eps$ times while Alice listens. Not all these transmissions can be erased and Alice therefore also knows Bob's input symbol in the end. 
\end{proof}

\Cref{thm:xorshard23prot} and even more the structure of the robust Exchange Protocol presented in its proof already give useful insights into how to achieve a general robust simulation result. We combine these insights with the blue-berry code idea of \cite{FGOS13} to build what we call an \emph{adaptive exchange block}. An adaptive exchange block is a simple three round protocol which will serve as a crucial building block in \Cref{thm:protshard23}. An adaptive exchange block is designed to transmit one symbol $\sigma_A \in \Sigma$ from Alice to Bob and one symbol $\sigma_B \in \Sigma$ from Bob to Alice. It assumes a detection parameter $\delta < 1$ and works over an alphabet $\Sigma'$ of size $|\Sigma'|=\ceil{|\Sigma|/\delta}$ and works as follows: First, using the shared randomness, Alice and Bob use their shared randomness to agree for each of the three rounds on an independent random subset of of the larger communication alphabet $\Sigma'$ to be used. Then, in the first round of the adaptive exchange block Alice sends the agreed equivalent of $\sigma_A$ while Bob listens. In the second round Bob sends the equivalent of $\sigma_B$ while Alice listens. Both parties try to translate the received symbol back to the alphabet $\sigma$ and declare a (detected) corruption if this is not possible. In the last round of the adaptive exchange block a party sends the encoding of its $\sigma$-symbol if and only if a failure was detected; otherwise a party listens and tries again to decode any received symbol. 

The following two properties of the adaptive exchange block easily verified:

\begin{lemma}\label{lem:exchangeblock}
Regardless of how many transmissions an adversary corrupted during an adaptive exchange block, the probability that at least one of the parties decodes a wrong symbol is at most $3\delta$. 

In addition, if an adversary corrupted at most one transmission during an adaptive exchange block the with probability at least $1 - \delta$ both parties received their $\sigma$ symbols correctly. 
\end{lemma}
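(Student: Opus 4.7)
My plan is to reduce both claims to a single per-round analysis of the blueberry encoding. In each of the three rounds the parties use their shared hidden randomness to pick an independent uniformly random subset $S \subseteq \Sigma'$ of size $|\Sigma|$, together with an agreed bijection from $S$ to $\Sigma$; a received symbol is decoded by inverting the bijection, and any received symbol outside $S$ is flagged as a detected corruption. The central per-round claim is that in any round, conditioned on everything the adversary has seen so far, the adversary's substituted symbol (which must differ from the intended one) lies in $S$ with probability at most $(|\Sigma|-1)/(|\Sigma'|-1) \le \delta$: after seeing the transmitted symbol the adversary effectively learns one element of $S$, and the remaining $|\Sigma|-1$ elements of $S$ are uniform among the $|\Sigma'|-1$ other symbols of $\Sigma'$. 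Call this bad event an \emph{undetected miscoding}; outside of it, the round either delivers the intended $\Sigma$-symbol correctly or flags a detected failure, and never produces a wrong $\Sigma$-value silently.

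For the first part of the lemma, the only way a party can output a wrong $\Sigma$-value at the end of the block is via an undetected miscoding in one of the (at most) three rounds in which it listens. A union bound over the three independent per-round subsets $S$ gives the desired $3\delta$ bound.

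For the second part I would condition on the location of the at most one corruption. If no round is corrupted, both parties decode correctly in rounds $1$ and $2$ and end the block with the right symbols regardless of who speaks in round $3$. If the single corruption lies in round $1$, then with probability at least $1-\delta$ Bob detects it (rather than landing in $S$), while Alice decodes $\sigma_B$ correctly in the clean round $2$; following the round-$3$ adaptivity rule, the party with a detected failure listens for a retransmission while the other party retransmits its $\sigma$-symbol, so Bob listens while Alice resends $\sigma_A$, and since round $3$ is uncorrupted Bob recovers $\sigma_A$ correctly. The case of a round-$2$ corruption is symmetric. Finally, if the sole corruption falls in round $3$, then rounds $1$ and $2$ are clean, both parties already hold the correct symbols, and since neither detects a failure both transmit in round $3$, so no one is listening and the corruption is wasted. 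The only failure path left open by the case analysis is an undetected miscoding in the single corrupted listening round, which occurs with probability at most $\delta$.

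The main bookkeeping subtlety is verifying that an undetected miscoding cannot cascade into additional failures---for instance through one party's erroneous ``success'' flag inducing a miscoordination in round $3$---so that the per-round $\delta$ remains the dominant term in the second bound. Beyond this, the argument is a straightforward combination of a case analysis and a three-term union bound over the independently chosen blueberry subsets.
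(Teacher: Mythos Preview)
Your proposal is correct and follows the same approach as the paper's proof: both bound the per-round probability of an undetected miscoding by $\delta$ and then apply a three-term union bound for the first claim and a case analysis on the location of the single corruption for the second; your write-up is simply more explicit about the cases. (Your reading of the round-3 rule---the party that detected a failure \emph{listens} while the other resends---is the reverse of the paper's literal protocol wording but matches the analogous construction in the Exchange-Problem protocol of \Cref{thm:xorshard23prot} and is clearly the intended one, since the second claim of the lemma fails under the literal reading.)
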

\begin{proof}
For a party to decode to a wrong symbol it must be the case that during one of the three rounds the adversary hit a meaningful symbol in the alphabet $\Sigma'$ during a corruption. Since $|\Sigma|/|\Sigma'|\leq \delta$ this happens at most with probability $\delta$ during any specific round and at most with probability $3\delta$ during any of the three rounds. To prove the the second statement we note that in order for a decoding error to happen the adversary must interfere during the first two rounds. With probability $1 - \delta$ such an interference is however detected leading to the corrupted party to resend in the third round. 
\end{proof}

Next we explain how to use the adaptive exchange block together with the ideas of \cite{FGOS13} to prove that adaptive protocols with shared randomness can tolerate any error rate below $2/3$:

\begin{theorem}\label{thm:protshard23}[The first part of \Cref{thm:23boundsoverview}]
Suppose $\Pi$ is an $n$-round protocol over a constant bit-size alphabet. For any $\eps >0$, there is an adaptive protocol $\Pi'$ that uses shared randomness and robustly simulates $\Pi$ for an error rate of $2/3 - \eps$ with a failure probability of $2^{-\Omega(n)}$. 
\end{theorem}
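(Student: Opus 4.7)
The plan is to lift the natural adaptive coding scheme of Algorithm~\ref{alg:SimulatorAdaptive} to the shared-randomness setting by replacing each single-round symbol exchange with an adaptive exchange block from \Cref{lem:exchangeblock}. First I would place $\Pi$ in canonical form and run $N = \Theta(n/\eps)$ logical iterations of the natural adaptive protocol, where in each iteration Alice and Bob use one adaptive exchange block to exchange their current ``important edge-set'' symbols. Each block occupies three channel rounds, so the total number of channel rounds is $3N$ and the adversary's budget is $3N(2/3-\eps) = 2N-3N\eps$ corruptions. I would set the blueberry detection parameter $\delta = 2^{-\Theta(n)}$, so that by a union bound over all $N$ blocks the probability that any block causes an actual misdecoding is at most $3N\delta = 2^{-\Omega(n)}$. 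On this $(1-2^{-\Omega(n)})$-probable event, every block either correctly exchanges both symbols or is flagged as a failure by at least one party, behaving as a detected block-level erasure.

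The central quantitative observation driving the $2/3$ threshold is the second guarantee of \Cref{lem:exchangeblock}: a single channel-level corruption inside a block is absorbed by the third round, so the adversary must spend at least two corruptions to disrupt a block. Hence, with budget $2N-3N\eps$, at most $N - \lceil 3N\eps/2 \rceil$ blocks can be disrupted, leaving at least $\Omega(\eps N) = \Omega(n)$ blocks in which both parties successfully exchange their edge-set symbols and the underlying natural coding scheme makes concrete progress (either extending the common path towards the correct leaf or casting a correct vote). Choosing $N$ to be a sufficiently large multiple of $n/\eps$ ensures that the correct leaf collects more votes than any incorrect one, paralleling the counting in the proof of \Cref{thm:nonadaptiveSim}.

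For the rate-adaptation component I would mirror the structure of Algorithm~\ref{alg:SimulatorAdaptive}: run the symmetric back-and-forth phase for the first $\alpha N$ blocks (the analog of the $6N/7$ cutoff), and then have any party that has reached sufficiently high confidence in its decoding switch to pure transmission mode while the other party switches to pure listening mode. The values of $\alpha$ and the confidence threshold are to be tuned so that in any execution where neither party reaches confidence during the prefix, the number of disrupted blocks forces the adversary to exceed its total corruption budget. In the complementary case, the remaining $(1-\alpha)N$ blocks provide the unconfident party enough non-erased receptions of the confident party's final symbol to decode the common leaf uniquely, again because erasing all of them would require more than the remaining adversarial budget.

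The hard part will be calibrating this confidence/threshold argument: I need to show that, with the erasure-like block model, either at least one party achieves confidence after the prefix phase or else the adversary has overspent its budget, and simultaneously that the second phase delivers enough uncorrupted symbols from the confident party to the other. This is the constructive counterpart of the $x_A$ vs.\ $x_B$ case split in the proof of \Cref{thm:xorshard23impossible}, and parametrically resembles the $2/7$ analysis of \Cref{thm:uniquedecoding} but with the arithmetic rescaled by the 2-corruptions-per-block cost to yield the $2/3$ threshold. Finally, I would verify the bookkeeping: the alphabet size blows up to $|\Sigma'| = O(|\Sigma|/\delta) = 2^{O(n)}$ and the round complexity is $3N = \Theta(n/\eps)$, consistent with the theorem statement and with the $O(n)$-round claim obtainable via exponential-time computations mentioned in \Cref{thm:23boundsoverview}.
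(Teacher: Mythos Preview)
Your approach is correct in outline but takes a genuinely different route from the paper, and you are over-engineering one part of it.

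\textbf{Comparison with the paper.} The paper does not build on Algorithm~\ref{alg:SimulatorAdaptive} at all. Instead it sets the blueberry detection parameter $\delta$ to a \emph{constant} depending on $\eps$, uses a Chernoff bound (not a union bound) to control the number of blocks with undetected misdecodings and the number of single-corruption blocks that nonetheless fail, and then plugs the resulting bounds on ``effective corruptions'' (undetected errors counted fully, detected erasures counted as half) directly into the Franklin--Gelles--Ostrovsky--Schulman analysis built on top of Braverman--Rao. The adaptivity lives entirely inside the exchange blocks; the outer protocol is the standard tree-code simulation, and the paper achieves a constant-size alphabet with linear rounds (at the cost of exponential-time decoding). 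Your route---exponentially small $\delta$, union bound so that \emph{no} block ever misdecodes, and the natural edge-set scheme as the outer protocol---is more self-contained and avoids the appeal to \cite{FGOS13,BR11}, but pays with a $2^{\Theta(n)}$-size alphabet. Both satisfy the theorem as stated.

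\textbf{The unnecessary part.} Your third and fourth paragraphs (mirroring the $6N/7$ cutoff and confidence thresholds of Algorithm~\ref{alg:SimulatorAdaptive}) are not needed. Once you take $\delta = 2^{-\Theta(n)}$ and condition on the good event, \Cref{lem:exchangeblock} guarantees that \emph{no} block ever produces a wrong decoding; every block is either a perfect exchange or a detected erasure. Hence every vote cast is for the correct leaf, and the only thing to check is that at least one good block remains after path-growing---which is exactly the counting you already do in your second paragraph. The outer protocol can simply be the non-adaptive Algorithm~\ref{alg:Simulator}; all the adaptivity needed to reach $2/3-\eps$ is supplied by the third round inside each exchange block. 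The ``hard part'' you flag (calibrating $\alpha$ and the confidence threshold) therefore disappears: there is no bad-vote-versus-good-vote tradeoff to balance, because there are no bad votes.
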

\begin{proof}[Proof Sketch]
The protocol $\Pi'$ consists of $N=\frac{6n}{\eps}$ rounds which are grouped into $\frac{2n}{\eps}$ adaptive exchange blocks. The adversary has an error budget of $\frac{4n}{\eps}-6n$ corruptions. Choosing the  with parameter $\delta$ in the exchange blocks to be at most $\eps/(6 \cdot 4 + 1)$ and using \Cref{lem:exchangeblock} together with a Chernoff bound gives that with probability $1 - 2^{-\Omega(n)}$ there are at most $n/4$ adaptive exchange blocks in which a wrong symbol is decoded. Similarly, the number of adaptive exchange blocks in which only one corruption occurred but not both parties received correctly is with probability $1 - 2^{-\Omega(n)}$ at most $n/3$. Lastly, the number of adaptive exchange blocks in which the adversary can force an erasure by corrupting two transmissions is at most $(\frac{4n}{\eps}-6n)/2$. We can therefore assume that regardless of the adversaries actions at most $n$ corruptions and at most $\frac{2n}{\eps} - 8/3 n$ erasures happen during any execution of $\Pi'$, at least with the required probability of $1 - 2^{-\Omega(n)}$. 

We can now apply the arguments given in \cite{FGOS13}. These arguments build on the result in \cite{BR11} and essentially show that detected corruptions or erasures can essentially be counted as half a corruption. Since almost all parts of the lengthy proofs in \cite{BR11} and \cite{FGOS13} stay the same, we restrict ourselves to a proof sketch. For this we first note that the result in \cite{BR11} continues to hold if instead of taking turns transmitting for $N$ rounds both Alice and Bob use $N/2$ rounds transmitting their next symbol simultaneously. The extensions described in \cite{FGOS13} furthermore show that this algorithm still performs a successful simulation if the number effectively corrupted simultaneous rounds is at least $n$ rounds less than half of all simultaneous rounds. Here the number of effectively corrupted simultaneous rounds is simply the number of simultaneous rounds with an (undetected) corruption plus half the number of simultaneous rounds suffering from an erasure (a detected corruption). Using one adaptive exchange block to simulate one simultaneous round leads to $\frac{2n}{\eps}$ simultaneous rounds and an effective number of corrupted rounds of $n/4 + (\frac{2n}{\eps} - 8/3 n)/2 < (\frac{2n}{\eps})/2 - n$. Putting everything together proves that with probability at least $1 - 2^{-\Omega(n)}$ the protocol $\Pi'$ successfully simulates the protocol $\Pi$, as claimed.

This way of constructing the adaptive protocol $\Pi'$ leads to an optimal linear number of rounds and a constant size alphabet but results in exponential time computations. We remark that using efficiently decodable tree codes, such as the ones described in the postscript to \cite{Schulman} on Schulman's webpage, one can also obtain a computationally efficient coding scheme at the cost of using a large $O(n)$-bit alphabet. Lastly, applying the same ideas as in \Cref{sec:AlgLargeRounds} also allows to translate this computationally efficient coding schemes with a large alphabet into one that uses a constant size alphabet but a quadratic number of rounds. 
\end{proof}

\appendix

\section{Adaptivity as a Natural Model for Shared-Medium Channels}\label{sec:appendix:naturalmodel}

In this section we briefly describe two natural interpretations of our adaptivity setting as modeling communication over a shared medium channel (e.g., a wireless channel). We first remark that one obvious parallel between wireless channels (or generally shared medium channels) and our model is that full-duplex communication, that is, sending and receiving at the same time, is not possible. 

Beyond this, one natural interpretation relates our model to a setting in which the signals used by these two parties are not much stronger than the average background noise level. In this setting having the background noise corrupt the signal a $\rho$ fraction of the time in an undetermined way is consistent with assuming that the variable noise level will be larger than the signal level at most this often. It is also consistent with the impossibility of distinguishing between the presence and absence of a signal which leads to undetermined signal decodings in the case that both parties listen. As always, the desire to avoid making possibly unrealistic assumptions about these corruptions naturally gives rise to think of any undetermined symbols as being worst case or equivalently as being determined by an adversary. 

In a second related interpretation of our model one can think of the adversary as an active malicious entity that jams the shared medium. In this case our assumptions naturally correspond to the jammer having an energy budget that allows to over-shout a sent signal at most a $\rho$ fraction of the time. In this setting it is also natural to assume that the energy required for sending a fake signal to both parties when no signal is present is much smaller than corrupting sent signals, and does as such not significantly reduce the jammer's energy budget.

\section{Further Results Supporting the Invariability Hypothesis}\label{sec:appendix:furtherresultssupportingthehypothesis}

In this section we mention several results that further support the Invariability Hypothesis.

We first remark that the lower bound in this paper are, in all settings and all properties, already as strong as required by the hypothesis. Our positive results, as summarized in \Cref{cor:weakhyp}, furthermore show that the invariability hypothesis holds if one allows either a large alphabet or a quadratic number of rounds. Both assumptions lead to the communication rate being $O(1/n)$. Next we list several results which show that the invariability hypothesis provably extends beyond these low rate settings \newblue{(note that some of the results from \cite{GH13} quoted below have been improved by the time of its publication~\cite{focs2014optimal})}:
\begin{enumerate}
  \item Subsequent results in \cite{GH13} show that:\\
	\emph{The IH is true for all eight settings when allowing randomized algorithms with exponentially small failure probability and a round blowup of $(\log^* n)^{O(\log^* n)}$.}

	\item Subsequent results in \cite{GH13} show that:\\
	\emph{The Invariability Hypothesis is true for all eight settings if one removes points 5. and 6., that is, when one can use randomized protocols that can generate private and public encryption keys which the adversary cannot break.}

	\item A subsequent result in \cite{GH13} gives an efficient randomized coding scheme for non-adaptive unique decoding which tolerates the optimal $1/4 - \eps$ error rate. This improves over the coding scheme of Brakerski-Kalai \cite{BK12} which requires an error rate below $1/16$. In different terms this result shows that:\\
	\emph{The Invariability Hypothesis is true for unique decoding if one weakens point 5. to allow efficient randomized algorithms with exponentially small failure probability.}

	\item The result of Braverman and Rao \cite{BR11} shows that:\\
	\emph{The Invariability Hypothesis is true for non-adaptive unique decoding if one removes point 4. (which requires protocols to be computationally efficient).}
	
	\item A subsequent result of the authors show that the $1/10 - \eps$ distance parameter of the tree code construction in \cite{Braverman12} can be improved to $1 - \eps$ which shows that:\\
	\emph{The Invariability Hypothesis is true for unique decoding if one weakens point 4. to allow deterministic sub-exponential time computations.}

	\item The improved tree code construction mentioned in point 5. can also be used together with the analysis of Franklin at al. \cite{FGOS13} to show that:\\
	\emph{The Invariability Hypothesis is true for non-adaptive unique decoding with shared randomness if one weakens point 4. to allow deterministic sub-exponential time computations.}
	
	\item The improved tree code construction mentioned in point 5. can also be used together with the ideas of \Cref{thm:protshard23} to show that:\\ 
	\emph{The Invariability Hypothesis is true for adaptive unique decoding with shared randomness if one weakens point 4. to allow deterministic sub-exponential time computations.}

\end{enumerate}

Lastly, we remark that the $1/6 - \eps$ tolerable error rate of the knowledge preserving coding schemes given by Chung et al. \cite{KnowledgePreservingIC} can be improved to the optimal $1/4 - \eps$ tolerable error rate of the unique decoding setting. The communication blowup for knowledge preserving protocols is however inherently super constant.

\section{Impossibility Results for List Decodable Interactive Coding Schemes}\label{sec:appendix:furtherlowerbounds} 

In this section we prove that list decodable interactive coding schemes are not possible beyond an error rate of $1/2$. This holds even if adaptivity or shared randomness is allowed (but not both). We remark that for both results we prove a lower bound for the $n$-bit Exchange Problem as list decoding with a constant list size becomes trivial for any protocol with only constantly many different inputs.

The intuitive reason why shared randomness does not help in non-adaptive protocols to go beyond an error rate of $1/2$ is because the adversary can completely corrupt the party that talks less: 

\begin{lemma}\label{lem:lowerboundLDsharedrandomness}
There is no (deterministic or randomized) list decodable non-adaptive protocol that robustly simulates the $n$-bit Exchange Protocol for an error rate of $1/2$ with an $o(1)$ failure probability and a list size of $2^{n-1}$ regardless of whether the protocols have access to shared randomness and regardless of whether computationally unbounded protocols that use an arbitrarily large number of rounds and an unbounded alphabet are allowed. 
\end{lemma}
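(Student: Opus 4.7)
The plan is to exhibit a single adversarial strategy --- oblivious to the shared randomness --- that wipes out one direction of communication entirely while staying within the $1/2$ error budget, leaving the silenced party's input information-theoretically hidden.

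First, I would use non-adaptivity to extract a fixed transmission schedule: since the decision to transmit or listen depends only on the round index, the sets $T_A, T_B \subseteq [N]$ of rounds in which Alice, respectively Bob, transmits are determined in advance and satisfy $T_A \cap T_B = \emptyset$. By pigeonhole, one of them has size at most $N/2$; without loss of generality $|T_A| \le N/2$. My adversary will corrupt every transmission in $T_A$ to a fixed default symbol $\sigma_0$. This uses at most $N/2$ corruptions, which is within the $\rho = 1/2$ budget, and it is well-defined despite the adversary not knowing the shared randomness $r$, precisely because $T_A$ is predetermined.

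Next, I would observe that under this strategy Bob's received stream is a deterministic function of the round index alone, hence independent of $i_A$. Therefore his output list $L_B = L_B(i_B, r, \mathrm{view})$, of size at most $2^{n-1}$, is for each fixed $i_B$ and $r$ a subset of $\{0,1\}^n$ chosen without any dependence on $i_A$. Finally, I would close with an averaging argument. Fix $i_B = 0^n$ and average the failure event over $i_A$ drawn uniformly from $\{0,1\}^n$:
\[
\mathbb{E}_{i_A,\,r}\bigl[\mathbf{1}[i_A \notin L_B]\bigr] \;=\; \mathbb{E}_r\!\left[1 - \tfrac{|L_B|}{2^n}\right] \;\ge\; 1 - \tfrac{2^{n-1}}{2^n} \;=\; \tfrac{1}{2}.
\]
Hence there is a specific choice $i_A^{*}$ for which the failure probability against the input pair $(i_A^{*}, 0^n)$ and the adversary above is at least $1/2$, contradicting any $o(1)$ failure bound.

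The only even mildly subtle point --- and thus the main obstacle --- is verifying that the argument survives the presence of shared randomness that is hidden from the adversary. It does, precisely because non-adaptivity makes the schedule $T_A$ and thus the entire attack plan independent of $r$, and the conclusion that $L_B$ is independent of $i_A$ holds pointwise in $r$ rather than only in expectation, so no coupling between the adversary and $r$ is ever needed.
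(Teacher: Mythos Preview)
Your proof is correct and follows the same core idea as the paper: completely silence one party's transmissions with a fixed symbol, leaving the other party with zero information about the blocked input. The paper differs only in one small respect: rather than using pigeonhole to deterministically identify the less-talkative party, it has the adversary pick the party to block uniformly at random (and randomizes both inputs), which yields a weaker $1/4$ failure bound instead of your $1/2$. What the paper's randomized choice buys is a bit of extra robustness your argument does not have: the paper explicitly remarks that its proof continues to work even if the transmission schedule is allowed to depend on the shared randomness $r$, whereas your ``WLOG $|T_A|\le N/2$'' step would then require the adversary to know $r$. Under the paper's stated definition of non-adaptive---the schedule depends deterministically on the round index alone---your argument is sound as written, and in fact slightly sharper.
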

\begin{proof}
We recall that non-adaptive protocols will for every round specify a sender and receiver in advance, that is, independent from the history of communication. We remark that the proof that follows continues to hold if these decisions are based on the shared randomness of the protocols.

The adversary's strategy is simple: It gives both Alice and Bob random inputs, then randomly picks one of them, and blocks all symbols sent by this party by replacing them with a fixed symbol from the communication alphabet. With probability at least $1/2$ the randomly chosen player speaks less than half the time and so the fraction of errors introduced is at most $1/2$. On the other hand no information about the blocked player's input is revealed to the other player and so other player can not narrow down the list of possibilities in any way. This means that even when allowed a list size of $2^{n-1}$ there is a probability of $1/2$ that the list does not include the randomly chosen input. This results in a failure probability of at least $1/4$. 
\end{proof}

A $1/2$ impossibility result also holds for list decodable coding schemes that are adaptive:

\begin{lemma}\label{lem:lowerboundLDadaptive}
There is no (deterministic or randomized) list decodable adaptive protocol that robustly simulates the $n$-bit Exchange Protocol for an error rate of $1/2$ with an $o(1)$ failure probability and a list size of $2^{n-1}$ even when allowing computationally unbounded protocols that use an arbitrarily large number of rounds and an unbounded alphabet. 
\end{lemma}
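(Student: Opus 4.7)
The plan is to extend the randomized-blocking strategy from the non-adaptive lemma~\ref{lem:lowerboundLDsharedrandomness} to the adaptive setting. The adversary will pick a target $T\in\{A,B\}$ uniformly at random, choose a fake input $i_T'\in\{0,1\}^n$ uniformly at random, and internally simulate a clean execution with $T$'s input replaced by $i_T'$. During the real execution, the adversary forces the non-target $\bar T$ to see exactly the symbols its counterpart saw in this fake simulation: in rounds where the simulated $\bar T$ listens, the adversary delivers the fake symbol (spending one corruption if real $T$ is transmitting something different, or doing so for free if real $T$ is also listening, since in that both-listen case the model lets the adversary deliver any symbol at no cost); in rounds where simulated $\bar T$ transmits, the real $\bar T$ mirrors by transmitting the same symbol.

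If the attack succeeds within budget, $\bar T$'s view, and hence $\bar T$'s output list, is a deterministic function of $(i_{\bar T}, i_T')$ that does not depend on the true input $i_T$. Since $i_T$ is drawn uniformly from $\{0,1\}^n$ and is independent of $i_T'$, the probability that $\bar T$'s list (of size $2^{n-1}$) contains the true $i_T$ is at most $2^{n-1}/2^n=1/2$.

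The cost bound relies on the elementary observation that in any $N$-round adaptive execution the rounds where Alice alone transmits and the rounds where Bob alone transmits are disjoint, so their counts sum to at most $N$. The adversary's cost above is bounded by the number of rounds in which real $T$ transmits alone, and we want this to be at most $N/2$ with constant probability. The main obstacle is that this count depends on the execution, which is itself shaped by the adversary's choices, so the simple ``$\min\le N/2$'' inequality cannot be applied directly to a single execution. I would handle this with a symmetry argument: since the target $T$, the fake input $i_T'$, and (for this probabilistic impossibility) the real input $i_T$ are all drawn uniformly, the induced joint distribution over (adversary choices, execution) is invariant under the relabeling that simultaneously swaps $T$ with $\bar T$ and $i_T$ with $i_T'$. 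Under this symmetry the transmit-alone counts of the target and the non-target are identically distributed, so their minimum is at most $N/2$ with probability at least $1/2$, and the adversary's random choice of $T$ hits that minimum with probability at least $1/2$.

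Combining the two estimates gives a failure probability of at least $\tfrac12\cdot\tfrac12=\tfrac14$, which contradicts the assumed $o(1)$ failure probability and establishes the lemma. The argument is unaffected by allowing unbounded rounds, unbounded alphabet, or unbounded computation, since the cost counting and symmetry arguments are purely combinatorial.
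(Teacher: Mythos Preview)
Your high-level strategy—pick a random target $T$, simulate a fake execution with $T$'s input replaced by a random $i_T'$, and force $\bar T$'s view to match the fake—is a natural idea, but the cost analysis has two genuine gaps.

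First, the symmetry argument does not go through. You want the transmit-alone counts of the target and the non-target to be identically distributed, but these two counts live in a \emph{single} execution that is inherently asymmetric: the adversary manipulates $\bar T$'s view (so $\bar T$ behaves as if facing input $i_T'$), while real-$T$'s behavior is uncontrolled and driven by $i_T$. The relabeling ``swap $T\leftrightarrow\bar T$ and $i_T\leftrightarrow i_T'$'' does not map one attack scenario to another of the same form: after the swap the new target is $\bar T$, so the fake input should be a fresh $i_{\bar T}'$, not the old $i_T$. There is no measure-preserving bijection on the sample space that exchanges the two counts, and the ``random $T$ hits the minimum'' step tacitly assumes a single execution independent of $T$, which is false here since the execution itself depends on which party the adversary manipulates.

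Second, and more fundamentally, even if you could establish $\Pr[\text{cost}\le N/2]\ge 1/2$, the product $\tfrac12\cdot\tfrac12$ is unjustified. The cost equals the number of rounds in which real-$T$ transmits while $\bar T$ listens, and real-$T$'s behavior depends on $i_T$—the very input you need $\bar T$ to miss. Conditioning on ``cost $\le N/2$'' can therefore bias $i_T$; fixing all randomness except $i_T$, the list $L$ is determined but the set $\{i_T:\text{cost}(i_T)\le N/2\}$ could coincide with $L$, making the joint event empty. The paper's proof avoids both problems by a different device: it never simulates a fake execution. Instead it defines $p(x,\vec r)$ as the probability that Alice, given input $x$ and receiving $r_i$ whenever she listens, transmits for at least $N/2$ rounds, and it case-splits on whether some $(x,\vec r)$ has $p(x,\vec r)>1/2$. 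In one branch the adversary feeds the fixed string $\vec r$ to Alice (so Alice's behavior, her list, \emph{and} the cost are all independent of Bob's random input); in the other branch it overwrites all of Alice's transmissions with a fixed symbol $\sigma$ (so Bob's view and the cost are independent of Alice's random input). In both branches the budget usage is decoupled from the unknown input, which is exactly what your simulation-based attack fails to achieve.
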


To show that adaptivity is not helpful, one could try to prove that the adversary can imitate one party completely without being detected by the other party. This, however, is not possible with an error rate of $1/2$ because both parties could in principle listen for more than half of the rounds if no error occurs and use these rounds to alert the other party if tempering is detected. Our proof of \Cref{lem:lowerboundLDadaptive} shows that this ``alert'' strategy cannot work. In fact, we argue that it is counterproductive for Alice to have such a hypothetical ``alert mode'' in which she sends more than half of the rounds. The reason is that the adversary could intentionally trigger this behavior while only corrupting less than half of the rounds (since Alice is sending alerts and not listening during most rounds). The adversary can furthermore do this regardless of the input of Bob which makes it impossible for Alice to decode Bob's input. This guarantees that Alice never sends for more than half of the rounds and the adversary can therefore simply corrupt all her transmissions. In this case Bob will not be able to learn anything about Alice's input. 

\begin{proof}
Suppose there exists a protocol that robustly simulates the $n$-bit Exchange Pro
tocol for an error rate of $1/2$ using $N$ rounds over an alphabet $\Sigma$. We 
consider pairs of the form $(x,\vec r)$ where $x \in \{0,1\}^n$ is an input to A
lice and $\vec r = (r_1,r_2,\ldots,r_N) \in \Sigma^N$ is a string over the chann
el alphabet with one symbol for each round. We now look at the following hypothetical communication between Alice and the adversary: Alice gets input $x$ and samples her private randomness. In each round she then decides to send or listen. If she listens in round $i$ she receives the symbol $r_i$. For every pair $(x,\vec r)$ we now define $p(x,\vec r)$ to be the probability, taken over the private randomness of Alice, that in this communication Alice sends at least $N/2$ rounds (and conversely listens for at most $N/2$ rounds). The adversaries strategy now makes the following case distinction: If there is one $(x,\vec r)$ for which $p(x,\vec r)>1/2$ then the adversary picks a random input for Bob, gives Alice input $x$ and during the protocol corrupts any symbol Alice listens to according to $\vec r$. By definition of $p(x,\vec r)$ there is a probability of at least $1/2$ that the adversary can do this without using more than $N/2$ corruptions. In such an execution Alice has furthermore no information on Bob's input and even when allowed a list size of $2^{n-1}$ has at most a probability of $1/2$ to include Bob's input into her decoding list. Therefore Alice will fail to list decode correctly with probability at least $1/4$. If on the other hand for every $(x,\vec r)$ it holds that $p(x,\vec r)<1/2$ then the adversary picks a random input for Alice and an arbitrary input for Bob and during the protocol corrupts any symbol Alice sends to a fixed symbol $\sigma \in \Sigma$. Furthermore in a round in which both Alice and Bob listens it chooses to deliver the same symbol $\sigma$ to Bob. By definition of $p(x,\vec r)$ there is a probability of at least $1/2$ that the adversary can do this without using more than $N/2$ corruptions. In such an execution Bob now has no information on Alice's input and even when allowed a list size of $2^{n-1}$ he therefore has at most a probability of $1/2$ to include Alice's input into his decoding list. This leads to a failure probability of $1/4$ as well. 
\end{proof}

\section{Impossibility Results for Solving the Exchange Problem Adaptively}\label{sec:Impossibilities}

In this appendix we provide the proofs we deferred in \Cref{sec:exchange}.

For completeness and to also cover randomized protocols we first reprove Claim 10 of \cite{BR11} which states that no non-adaptive uniquely decoding protocol can solve the exchange problem for an error rate of $1/4$:

\begin{lemma}[\textbf{Claim 10 of \cite{BR11}}] \label{lem:BRLB4}
There is no (deterministic or randomized) protocol that robustly simulates the Exchange Protocol for an error rate of 1/4 with an $o(1)$ failure probability even when allowing computationally unbounded protocols that use an arbitrarily large number of rounds and an unbounded alphabet.
\end{lemma}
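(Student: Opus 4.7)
The plan is to extend the classical indistinguishability argument from \cite{BR11} so that it also handles randomized protocols. As in the other impossibility proofs of this paper, it suffices to disprove simulations of the $1$-bit Exchange Problem. Fix any purported $N$-round non-adaptive robust simulation $\Pi'$ over alphabet $\Sigma$. Because $\Pi'$ is non-adaptive, the set $T_A\subseteq\{1,\ldots,N\}$ of rounds in which Alice transmits is determined in advance; without loss of generality $|T_A|\leq N/2$ (otherwise swap the roles of Alice and Bob). Split $T_A$ in temporal order into a first half $T_A^1$ and a second half $T_A^2$, each of size at most $N/4$.

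The adversary I construct will render Bob's view identically distributed in the two settings $S_0$ and $S_1$ in which Alice's input is $0$ or $1$ (Bob's input is, say, $0$ in both). I would introduce a hybrid transcript distribution $\tau^*$: sample independent copies $R_0^*,R_1^*$ of Alice's private randomness along with Bob's randomness $R_B$, and let $\tau^*$ be the transcript Bob sees when his counterpart behaves as input-$0$ Alice with randomness $R_0^*$ during the rounds of $T_A^1$ and as input-$1$ Alice with randomness $R_1^*$ during $T_A^2$, while Bob runs honestly. In setting $S_0$, the adversary couples $R_0^*$ with the actual Alice's randomness $R_A$; he does nothing during $T_A^1$, and in each round $i\in T_A^2$ he replaces Alice's transmitted symbol by what a ``virtual Alice'' with input $1$ and freshly sampled randomness $R_1^*$ would transmit given the transcript so far. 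This online replacement is well defined, because Bob's responses are honest functions of his perceived transcript, and the coupling $R_0^*=R_A$ ensures that Bob's first-half receptions coincide with the hybrid's first half, so the virtual Alice's simulated state agrees with the real state by induction on the round index. Symmetrically, in $S_1$ one couples $R_1^*=R_A$ and corrupts only the rounds in $T_A^1$. In either setting the total number of corruptions is at most $\max(|T_A^1|,|T_A^2|)\leq N/4$, meeting the error-rate budget of $1/4$.

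Since $R_0^*$ and $R_1^*$ are fresh independent copies of Alice's randomness in both settings, Bob's perceived transcript has the identical law $\tau^*$ under $S_0$ and under $S_1$, and hence so does his output. Choosing Alice's input uniformly in $\{0,1\}$ then forces a decoding failure probability of at least $1/2$, contradicting the assumed $o(1)$ bound. The step I expect to require the most care is verifying the internal consistency of the adversary's online virtual-Alice simulation---namely the induction showing that, at each round $i\in T_A^2$, the adversary already possesses all the information needed to compute the virtual Alice's next symbol. The coupling trick $R_0^*=R_A$ (resp.\ $R_1^*=R_A$), which equates Alice's real first-half behavior with the hybrid's first half, is the key ingredient that makes this induction go through without requiring the adversary to ever learn Alice's private randomness.
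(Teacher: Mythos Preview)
Your proposal is correct and follows essentially the same hybrid/indistinguishability strategy as the paper's proof: split the ``minority'' party's speaking rounds into a first and second half, and in each of two input settings corrupt only one half so that the listener's view collapses to a common hybrid. The paper happens to swap the roles---it corrupts Bob's transmissions to confuse Alice, whereas you corrupt Alice's transmissions to confuse Bob---but this is purely the symmetric choice of which party speaks at most $N/2$ rounds. Your explicit coupling $R_0^*=R_A$ (resp.\ $R_1^*=R_A$) is a somewhat crisper way of handling private randomness than the paper's ``dummy personality'' language, and the paper in addition spends a paragraph on rounds in which both parties listen (which cannot occur in the strict non-adaptive model but is treated there for uniformity with the later adaptive proofs); otherwise the two arguments are the same.
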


\begin{proof}
Suppose that there is an algorithm $\mathcal{A}$ with no rate adaptation that solves the exchange problem under adversarial error-rate $1/4$, in $N$ rounds. We work simply with $1$-bit inputs. Let $S_{X,Y}$ denote the setting where Alice receives input $X$ and Bob gets input $Y$. For simplicity, we first ignore the rounds in which both parties listen; note that in those rounds the adversary can deliver arbitrary messages to each of the parties at no cost.

First we explain adversary's strategy for setting $S_{0,1}$: Consider the executions of $\mathcal{A}$ in setting $S_{0,0}$ with no interference from the adversary. Without loss of generality, we can assume that in this execution, with probability at least $1/2$, Alice listens in at most $1/2$ of the rounds. Noting the restriction that $\mathcal{A}$ has no rate-adaptivity, we get that in setting $S_{0,0}$, regardless of the adversary's interferences, with probability at least $1/2$, Alice listens alone in at most $1/2$ of the rounds. Adversary generates a dummy personality $\widetilde{Bob}_0$ by simulating algorithm $\mathcal{A}$ on Bob (and Alice) in setting $S_{0,0}$ with no interferences. This dummy personality is then used in setting $S_{0,1}$. Note that at the start, only Bob can distinguish $S_{0,1}$ from $S_{0,0}$. For the first $N/4$ times that Alice listens alone in $S_{0,1}$, the adversary connects Alice to the dummy $\widetilde{Bob}_0$, that is, Alice receives transmissions of $\widetilde{Bob}_0$. Thus, up to the point that Alice has listened alone for $N/4$ rounds, Alice receives inputs (with distribution) exactly as if she was in setting $S_{0,0}$ with real Bob and hence, she can not distinguish this setting from the setting $S_{0,0}$ with no adversarial interference. After this point, the adversary lets Alice talk freely with $Bob$ with no interference. 

We now explain adversary's strategy for setting $S_{0,0}$: The adversary generates another dummy personality $\widetilde{Bob}_1$ by simulating algorithm $\mathcal{A}$ on Bob (and Alice) in setting $S_{0,1}$ where the first $N/4$ listening-alone rounds of Alice were connected to $\widetilde{Bob}_0$. That is, exactly the situation that will happen to real Bob in setting $S_{0,1}$. For the first $N/4$ rounds of setting $S_{0,0}$ where Alice listens, the adversary does not interfere in the communications. After that, for the next $N/4$ rounds that Alice listens, the adversary delivers transmissions of dummy personality $\widetilde{Bob}_1$ to Alice.

To conclude the argument, the adversary give a random input $y\in\{0,1\}$ input to Bob and gambles on that Alice will be listening alone less than $1/2$ of the rounds. The adversary also uses the dummy personality $\widetilde{Bob}_i$ for $i=1-y$ and when Alice listens alone, the adversary connects Alice to the real Bob or this dummy personality according to the rules explained above. With probability at least $1/2$, Alice indeed listens in at most $N/2$ rounds. If this happens, due to the adversarial errors, the two settings look identical to Alice (more formally, she observes the same probability distributions for the inputs) and she can not distinguish them from each other. This means that algorithm $\mathcal{A}$ has a failure probability of at least $1/4$ (Alice can still guess $y$ but the guess would be incorrect with probability at at least $1/8$).

We finally explain the adversary's rules for treating the rounds where both parties listen: For setting $S_{0,0}$, if both Alice and the real Bob are listening, Alice is connected to $\widetilde{Bob}_1$ at no cost. Similarly, in setting $S_{0,1}$, if both Alice and real Bob are listening, then Alice is connected to $\widetilde{Bob}_0$ at no cost. To make sure that this definition does not have a loop, if for a round $r$, both parties are listening in both settings, then the adversary delivers a $0$ to Alice in both settings. Note that in using these rules, the behavior of the dummy personalities $\widetilde{Bob}_0$ and $\widetilde{Bob}_1$ are defined recursively based on the round number; for example, the simulation that generates the behavior of $\widetilde{Bob}_0$ in round $r$ might use the behavior of $\widetilde{Bob}_1$ in round $r-1$. Because of these rules, we get that in each round that Alice listens (at least until Alice has had $N/2$ listening-alone rounds), the messages that she receives have the same probability distribution in two settings and thus, the two settings look indistinguishable to Alice. If the execution is such that Alice listens alone less than $N/4$ rounds, which happens with probability at least $1/2$, the algorithm is bound to fail with probability at least $1/2$ in this case. This means algorithm $\mathcal{A}$ fails with probability at least $1/4$.
\end{proof}

Next we give the proof for \Cref{lem:xor13impossible} which shows that no adaptive protocol can robustly simulates the Exchange Protocol for an error rate of $2/7$:

\begin{proof}[Proof of \Cref{lem:xor13impossible}]
We first explain the adversaries strategy and then explain why this strategy makes at least one of the parties unable to know the input of the other party with probability greater than $1/2$. 

Suppose that there is an algorithm $\mathcal{A}$ that solves the exchange problem under adversarial error-rate $1/3$, in $N$ rounds. We work simply with $1$-bit inputs.  Let $S_{X,Y}$ denote the setting where Alice receives input $X$ and Bob gets input $Y$. For simplicity, we first ignore the rounds in which both parties listen; note that in those rounds the adversary can deliver arbitrary messages to each of the parties at no cost.

First we explain adversary's strategy for setting $S_{0,1}$: Consider setting $S_{0,0}$ and suppose that for the first $2N/3$ rounds in this, the adversary does not interfere. Without loss of generality, we can assume that with probability at least $1/2$, Alice listens alone in less than $N/3$ of these rounds. Adversary creates a dummy personality $\widetilde{Bob}_0$ and simulates it with Alice in this $S_{0,0}$ setting. In the first $2N/3$ rounds of setting $S_{0,1}$, whenever Alice listens alone, the adversary connects Alice to $\widetilde{Bob}_0$, that is, Alice receives the transmission of $\widetilde{Bob}_0$. With probability at least $1/2$ regarding the randomness of Alice and $\widetilde{Bob}_0$, Alice listens less than $N/3$ of the time which means that the adversary will have enough budget to completely fake Bob as $\widetilde{Bob}_0$ (from Alice's viewpoint). In that case, the two settings look identical to Alice for the first $2N/3$ rounds. During the last $N/3$ rounds of the execution in setting $S_{0,1}$, the adversary lets Alice and the real Bob talk without no interference. 

We now explain adversary's strategy for setting $S_{0,0}$: The adversary generates another dummy personality $\widetilde{Bob}_1$ by simulating Bob in setting $S_{0,1}$ where alone-listenings of Alice in the first $2N/3$ rounds are connected to $\widetilde{Bob}_0$. In setting $S_{0,0}$, the adversary lets Alice and Bob talk freely during the first $2N/3$ rounds but for the last $N/3$ rounds, whenever Alice listens (even if not alone), the adversary connects her to $\widetilde{Bob}_1$. 

To conclude, the adversary give a random input $y\in\{0,1\}$ input to Bob and gambles on that Alice listens alone less than $N/3$ rounds of the first $2N/3$ rounds. The adversary also uses the dummy personality $\widetilde{Bob}_i$ for $i=1-y$ and when Alice listens alone, the adversary connects Alice to the real Bob or this dummy personality according to the rules explained above. We know that with probability at least $1/2$, Alice listens alone less than $N/3$ rounds of the first $2N/3$ rounds. If that happens, with at most $N/3$ errors, the adversary can follow the strategy explained above. Therefore, with probability $1/2$, Alice can not know Bob's input and thus will fail with probability at least $1/4$.

Regarding the rounds where both parties are listening, the rule is similar to \Cref{lem:BRLB4} but a little bit more subtle because of the rate adaptivity of algorithm $\mathcal{A}$: We need to declare what are the receptions when in the first $2N/3$ rounds of setting $S_{0,0}$, both Alice and Bob are listening. However, at that point, it's not clear whether we will make an indistinguishability argument for Alice or for Bob, which since it is affected by which one of them listens alone less, it can also depend on the receptions during the all-listening rounds of the first $2N/3$ rounds. The simple remedy is to create analogous situations for Alice and Bob so that we can determine the base of indistinguishability later at the end of $2N/3$ rounds. Adversary generates dummy personalities $\widetilde{Alice}_0$ and $\widetilde{Alice}_1$, respectively in settings $S_{0,0}$ and $S_{1,0}$, similar to those of Bob. In each all-listening round of the first $2N/3$ rounds, adversary makes Alice receive the transmission of $\widetilde{Bob}_1$ and Bob receive the transmission of $\widetilde{Alice}_1$. With these connections, whoever is more likely to listen alone less in the first $2N/3$ rounds ---which we assumed to be Alice without loss of generality in the above discussions--- with constant probability will receive messages with the exact same probability distributions, in each round in the two different settings. Thus she will not be able to distinguish the two settings.
\end{proof}

\end{document}